\newcommand{\shortbar}{\begin{center}\rule{5ex}{0.1pt}\end{center}}
\theoremstyle{plain}
\newtheorem{theorem}{Theorem}[section]
\newtheorem{lemma}[theorem]{Lemma}
\newtheorem{corollary}[theorem]{Corollary}
\theoremstyle{definition}
\newtheorem{criterion}{Criterion}
\newtheorem{property}{Property}
\theoremstyle{remark}
\newcommand{\Madry}{M\k{a}dry}
\newcommand{\paren}[1]{\left( #1 \right)}
\newcommand{\ceil}[1]{\lceil #1 \rceil}
\newcommand{\floor}[1]{\lfloor #1 \rfloor}
\newcommand{\f}[2]{\frac{#1}{#2}}
\newcommand{\fr}[2]{\mbox{$\frac{#1}{#2}$}}
\newcommand{\bydef}{\stackrel{\operatorname{def}}{=}}
\newcommand{\poly}{\operatorname{poly}}
\newcommand{\bottom}{\perp}
\newcommand{\rt}{\operatorname{rt}}
\newcommand{\ignore}[1]{}
\newcommand{\rb}[2]{\raisebox{#1 mm}[0mm][0mm]{#2}}
\newcommand{\istrut}[2][0]{\rule[- #1 mm]{0mm}{#1 mm}\rule{0mm}{#2 mm}}
\newcommand{\zero}[1]{\makebox[0mm][l]{$#1$}}
\newcommand{\vcm}[1][1]{\vspace*{#1 cm}}
\newcommand{\hcm}[1][1]{\hspace*{#1 cm}}
\newcommand{\INT}{{\sc integer weights}}
\newcommand{\RAND}{{\sc randomized}}
\newcommand{\MCM}{{\sc mcm}}
\newcommand{\MWPM}{{\sc mwpm}}
\newcommand{\MWPMs}{{\sc mwpm}s}
\newcommand{\Eelig}{E_{\operatorname{elig}}}
\newcommand{\Gelig}{G_{\operatorname{elig}}}
\newcommand{\GeligC}{\widehat{G}_{\operatorname{elig}}}
\newcommand{\Edmonds}{\operatorname{Edm}}
\newcommand{\EdmondsSearch}{\mbox{\sc EdmondsSearch}}
\newcommand{\PQSearch}{\mbox{\sc PQSearch}}
\newcommand{\SearchOne}{\mbox{\sc SearchOne}}
\newcommand{\BucketSearch}{\mbox{\sc BucketSearch}}
\newcommand{\ShellSearch}{\mbox{\sc ShellSearch}}
\newcommand{\Path}{\mbox{\sc DismantlePath}}
\newcommand{\Liquidationist}{\mbox{\sc Liquidationist}}
\newcommand{\Hybrid}{\mbox{\sc Hybrid}}
\newcommand{\Liquidate}{\mbox{\sc Liquidate}}
\newcommand{\now}{\operatorname{now}}
\newcommand{\Outer}{\operatorname{out}}
\newcommand{\Inner}{\operatorname{in}}
\newcommand{\Vin}{V_{\Inner}}
\newcommand{\Vout}{V_{\Outer}}
\newcommand{\VinC}{\widehat{V}_{\Inner}}
\newcommand{\VoutC}{\widehat{V}_{\Outer}}
\renewcommand{\root}{\operatorname{root}}
\newcommand{\new}{\operatorname{new}}
\newcommand{\slack}{\operatorname{slack}}
\newcommand{\key}{\operatorname{key}}
\newcommand{\UFaddedge}{\mathsf{addedge}}
\newcommand{\UFunite}{\mathsf{unite}}
\newcommand{\UFfind}{\mathsf{find}}
\newcommand{\SFinit}{\mathsf{init}}
\newcommand{\SFlist}{\mathsf{list}}
\newcommand{\SFsplit}{\mathsf{split}}
\newcommand{\SFdeckey}{\mathsf{decreasekey}}
\newcommand{\SFfindmin}{\mathsf{findmin}}
\newcommand{\grow}{\mathsf{grow}}
\newcommand{\dissolve}{\mathsf{dissolve}}
\newcommand{\blossom}{\mathsf{blossom}}
\newcommand{\schedule}{\mathsf{schedule}}
\begin{document}

\title{Scaling Algorithms for Weighted Matching in General Graphs\thanks{An extended abstract of this work
was presented in Barcelona, Spain at the Twenty-eighth Annual ACM-SIAM Symposium on Discrete Algorithms (SODA 2017).
Supported by NSF grants CCF-1217338, CNS-1318294, CCF-1514383, CCF-1637546, and BIO-1455983, and AFOSR Grant FA9550-13-1-0042.
R. Duan is supported by a China Youth 1000-Talent grant.
Email: {\tt duanran@mail.tsinghua.edu.cn}, {\tt pettie@umich.edu}, {\tt hsinhao@mit.edu}.}}

\author{Ran Duan\\Tsinghua Univ. \and Seth Pettie \\ Univ. of Michigan \and Hsin-Hao Su \\ Univ. of North Carolina, Charlotte}
\date{}

\maketitle

\begin{abstract}
We present a new scaling algorithm for maximum (or minimum) weight perfect matching on general, edge weighted graphs.
Our algorithm runs in $O(m\sqrt{n}\log(nN))$ time, $O(m\sqrt{n})$ per scale, which matches the running time of the best
cardinality matching algorithms on sparse graphs~\cite{Vazirani12,Vazirani14,GT91,Gabow17}.  
Here $m,n,$ and $N$ bound the number of edges, vertices, and magnitude of any integer edge weight.
Our result improves on a 25-year old algorithm 
of Gabow and Tarjan, which runs in 
$O(m\sqrt{n\log n\alpha(m,n)} \log(nN))$ time.
\end{abstract}

\section{Introduction}

In 1965 Edmonds~\cite{Edmonds65,Ed65} proposed the complexity class $\mathbf{P}$ and
proved that on general (non-bipartite) graphs, 
both the maximum cardinality
matching and maximum weight matching problems could be solved in polynomial time.
Subsequent work on general weighted graph matching 
focused on developing faster implementations of Edmonds' 
algorithm~\cite{Lawler76,Gabow76,Karzanov76,GalilMG86,GGS89,G90,Gabow16}
whereas others pursued alternative techniques such as cycle-canceling~\cite{CunninghamM78}, 
weight-scaling~\cite{G85,GT91}, 
or an algebraic approach using fast matrix multiplication~\cite{CyganGS15}.
Refer to Table~\ref{table:history} for a survey of weighted matching algorithms on general graphs.
The fastest implementation of Edmonds' algorithm~\cite{Gabow16} runs in $O(mn + n^2\log n)$ time
on arbitrarily-weighted graphs.  On graphs with integer edge-weights having magnitude at most $N$,
Gabow and Tarjan's~\cite{GT91} algorithm runs in $O(m\sqrt{n\alpha(m,n)\log n}\log(nN))$ time
whereas Cygan, Gabow, and Sankowski's runs in 
$O(Nn^{\omega})$ time with high probability, where $\omega$ is the matrix multiplication exponent.
For reasonable values of $m,n,$ and $N$ the Gabow-Tarjan algorithm is theoretically superior
to the others.  However, it is an $\Omega(\sqrt{\log n\alpha(m,n)})$ factor slower than comparable algorithms for \emph{bipartite} graphs~\cite{GT89,OrlinA92,GoldbergK97,DuanS12}, 
and even slower than the interior point algorithm of~\cite{CohenMSV17} for \emph{sparse bipartite} graphs.
Moreover, its analysis is rather complex.

In this paper we present a new {\em scaling} algorithm for weighted matching on general graphs
that runs in $O(m\sqrt{n}\log(nN))$ time.   
Each scale of our algorithm runs in $O(m\sqrt{n})$ time, which is asymptotically
the same time required to compute a maximum cardinality matching in a 
sparse graph~\cite{Vazirani12,Vazirani14,GT91,Gabow17}.  
Therefore, 
it is unlikely that our algorithm could be substantially improved without first finding a faster algorithm for the manifestly simpler
problem of cardinality matching.
Our algorithm's time bound also matches that of the best \emph{bipartite} scaling algorithms~\cite{GT89,OrlinA92,GoldbergK97,DuanS12},
but is still slower than~\cite{CohenMSV17} on sufficiently sparse bipartite graphs.

\begin{table}[t]
\centerline{\small
\begin{tabular}{|c|l|l|@{\istrut[1]{3.5}}}
\multicolumn{3}{c}{}\\
\multicolumn{1}{l}{Year} & \multicolumn{1}{l}{Authors}	& \multicolumn{1}{l}{Time Complexity \& Notes}\\\cline{1-3}
1965 & Edmonds	& $mn^2$	 \\\cline{1-3}
1974 & Gabow	& 	 \\
1976 & Lawler		& \rb{2.5}{$n^3$}		\\\cline{1-3}
1976 & Karzanov	& $n^3 + mn\log n$		\\\cline{1-3}
1978 & Cunningham \& Marsh & $\poly(n)$		\\\cline{1-3}
1982 & Galil, Micali \& Gabow & $mn\log n$    \\\cline{1-3}
1985 & Gabow	 &  	$mn^{3/4}\log N$ \hfill \INT\\\cline{1-3}
1989 & Gabow, Galil \& Spencer \hcm[.1] & $mn\log\log\log_d n + n^2\log n$ \hfill $d=2+m/n$\\\cline{1-3}
1990 & Gabow	& $mn + n^2\log n$	 \\\cline{1-3}
1991 & Gabow \& Tarjan\istrut[2]{4}	& $m\sqrt{n\alpha(n,m)\log n}\log(nN)$\hcm  \hfill \INT\\\cline{1-3}
	& Cygan, Gabow &\\
\rb{2.5}{2012} & \& Sankowski & \rb{2.5}{$Nn^\omega$} \hfill \rb{2.5}{\RAND, \INT}\\\cline{1-3}\cline{1-3}
\multicolumn{2}{|c|}{\bf \rb{-2.5}{new}}    & \istrut[2]{4}$\Edmonds \cdot \sqrt{n}\log(nN)$ \hfill \rb{-2.5}{\INT}\\
\multicolumn{2}{|c|}{ }		& \istrut[2]{0}$m\sqrt{n}\log(nN)$\\\cline{1-3}
\end{tabular}
}
\caption{\label{table:history} Maximum Weight Perfect Matching (\MWPM) algorithms for 
General Graphs.  $\Edmonds$ is the time for one execution of Edmonds' search on an integer-weighted graph.}
\end{table}

\subsection{Terminology}\label{sect:terminology}

The input is a graph $G=(V,E,\hat{w})$ where $|V|=n,|E|=m$, and $\hat{w} : E\rightarrow\mathbb{R}$ assigns a real weight to each edge.
A \emph{matching} $M$ is a set of vertex-disjoint edges.  A vertex is \emph{free} if it is not adjacent to an $M$ edge. An \emph{alternating path} is one whose edges alternate between $M$ and $E\setminus M$. An alternating path $P$ is \emph{augmenting} if it begins and ends with free vertices, which implies that $M\oplus P \bydef (M\cup P)\setminus (M\cap P)$ is also a matching and has one more edge.  
The {\em maximum cardinality matching} (\MCM) problem is to find a matching $M$ maximizing $|M|$.
The {\em maximum weight perfect matching} (\MWPM) problem is to find a perfect matching $M$ (or, in general, one with maximum cardinality) maximizing $\hat{w}(M) = \sum_{e\in M}\hat{w}(e)$.
The {\em maximum weight matching} problem (with no cardinality constraint) is reducible to \MWPM~\cite{DuanP14} and may be
a slightly easier problem~\cite{DuanS12,HuangK12}.
In this paper we assume that $\hat{w} : E\rightarrow \{0,\ldots,N\}$ assigns non-negative integer weights bounded by $N$.\footnote{Assuming non-negative weights is without loss of generality since we can simply subtract 
$\min_{e\in E}\{\hat{w}(e)\}$ from every edge weight, which does not affect the relative weight of two perfect matchings.
Moreover, the {\em minimum} weight perfect matching problem is reducible to \MWPM, simply by substituting $-\hat{w}$ for $\hat{w}$.}

\subsection{Edmonds' Algorithm}
Edmonds' \MWPM{} algorithm begins with an empty matching 
$M$ and consists of a sequence of {\em search} steps, 
each of which performs zero or more {\em dual adjustment}, {\em blossom shrinking}, and {\em blossom dissolution} steps
until a tight augmenting path emerges or the search detects that $|M|$ is 
maximum.  (Blossoms, duals, and tightness are reviewed in Section~\ref{sect:historymatching}.)
The overall running time is therefore $O(n \cdot \Edmonds)$, where 
$\Edmonds$ is the cost of one search. 
Gabow's implementation~\cite{Gabow16} of Edmonds' search runs in $O(m+n\log n)$ time,
the same as one Hungarian search~\cite{FT87} on bipartite graphs.

\subsection{Scaling Algorithms}\label{sect:scalingalgorithms}

The problem with Edmonds' \MWPM{} algorithm is that it finds augmenting paths one at a time, apparently dooming it to a running time of $\Omega(mn)$.
The matching algorithms of~\cite{G85,GT91} take the
{\em scaling} approach of Edmonds and Karp~\cite{EdmondsK72}.
The idea is to expose the edge weights one bit at a time.  
In the $i$th scale the goal is to compute an optimum perfect matching with respect 
to the $i$ most significant bits of $\hat{w}$.  Gabow~\cite{G85}
showed that each of $\log N$ scales can be solved in $O(mn^{3/4})$ time. 
Gabow and Tarjan~\cite{GT91} observed that it suffices to compute a 
$\pm O(n)$-approximate solution at each scale, provided there are additional scales; 
each of their $\log(nN)$ scales can be solved in $O(m\sqrt{n\alpha(m,n)\log n})$ time.

Scaling algorithms for general graph matching face a unique difficulty not encountered
by scaling algorithms for other optimization problems.
At the beginning of the $i$th scale
we have inherited from the $(i-1)$th scale a nested set $\Omega'$ of blossoms
and near-optimal duals $y',z'$.  (The matching primer in Section~\ref{sect:historymatching} reviews $y$ and $z$ duals.)  
Although $y',z'$ are {\em numerically} close to optimal, 
$\Omega'$ may be {\em structurally} very far from optimal for scale $i$.  
The~\cite{G85,GT91} algorithms {\em gradually} get rid of inherited blossoms in $\Omega'$
while simultaneously building up a new near-optimum solution $\Omega,y,z$.
They decompose
the tree of $\Omega'$ blossoms into heavy paths and process the paths in a bottom-up fashion.
Whereas Gabow's method~\cite{G85} is slow but moves the dual objective in the right direction,
the Gabow-Tarjan method~\cite{GT91} is faster but may actually {\em widen} the gap between the
dual objective and optimum.  There are $\log n$ layers of heavy paths and processing each layer
widens the gap by up to $O(n)$.  Thus, at the final layer the gap is $O(n\log n)$.
It is this gap that is the source of the $\sqrt{n\log n}$ factor in the running time of~\cite{GT91},
not any data structuring issues.

Broadly speaking, our algorithm follows the scaling approach of~\cite{G85,GT91},
but dismantles old blossoms in a completely new way, and further weakens the goal
of each scale.  Rather than compute an optimal~\cite{G85} or near-optimal~\cite{GT91}
perfect matching at each scale, we compute a near-optimal, {\em near-perfect} 
matching at each scale.  The advantage of leaving some vertices unmatched
(or, equivalently, artificially matching them up with dummy mates) is not at all obvious, 
but it helps speed up the dismantling of blossoms in the {\em next} scale. 
The algorithms are parameterized by a $\tau = \tau(n)$.  A blossom is called {\em large} if it contains 
at least $\tau$ vertices and {\em small} otherwise.  Each scale of our algorithm
produces an {\em im}perfect matching $M$ with $y,z,\Omega$ that 
(i) leaves $O(n/\tau)$ vertices unmatched, and 
(ii) is such that the sum of $z(B)$ of all large $B\in\Omega$ is $O(n)$,
     independent of the magnitude of edge weights.
After the last scale, the vertices left free by (i) will need to be matched up in $O(\Edmonds\cdot (n/\tau))$ time,
at the cost of one Edmonds' search per vertex.  Thus, we want $\tau$ to be large.
Part (ii) guarantees that large blossoms formed in one scale can be efficiently {\em liquidated} 
in the next scale (see Section~\ref{sect:liquidationist}),
but getting rid of small blossoms (whose $z$-values are unbounded, as a function of $n$)
is more complicated.  Our methods for getting rid of small blossoms have running 
times that are increasing with $\tau$, so we want $\tau$ to be small.
In the \Liquidationist{} algorithm, all inherited small blossoms are processed in 
$O(\Edmonds\cdot \tau)$ time whereas in \Hybrid{} (a hybrid of \Liquidationist{}
and Gabow's algorithm~\cite{G85}) they are processed in $O(m\tau^{3/4})$ time.

\subsection{Organization}
In Section~\ref{sect:historymatching} we review Edmonds' LP formulation of \MWPM{}
and Edmonds' search procedure.
In Section~\ref{sect:liquidationist} we present the \Liquidationist{} algorithm
running in $O(\Edmonds \cdot \sqrt{n}\log(nN))$ time.
In Section~\ref{sect:hybrid} we give the \Hybrid{} algorithm running in 
$O(m\sqrt{n}\log(nN))$ time.

Our algorithms depend on a having an efficient implementation
of Edmonds' search procedure.  
In Section~\ref{sect:ImplementingEdmonds} we give a detailed
description of an implementation of Edmonds' search that is very efficient
on integer-weighted graphs.  It runs in linear time when there are a linear number
of dual adjustments.  When the number of dual adjustments is unbounded it runs in 
$O(m\log\log n)$ time deterministically or $O(m\sqrt{\log\log n})$ time w.h.p.
This implementation is based on ideas suggested by Gabow~\cite{G85} and 
may be considered folklore in some quarters.

We conclude with some open problems in Section~\ref{sect:conclusion}.

\section{A Matching Primer}\label{sect:historymatching}

The \MWPM{} problem can be expressed as an \underline{integer} linear program
\begin{align*}
\mbox{maximize } \;\; & \sum_{e\in E} x(e)\cdot \hat{w}(e)\\
\mbox{subject to } \;\; & x(e) \in \{0,1\}, \mbox{ for all $e\in E$}\\
\mbox{ and }	\;\; & \sum_{e \ni v} x(e) = 1, \mbox{ for all $v\in V$.}
\end{align*}
The integrality constraint lets us interpret $\mathbf{x}$ as the membership vector of a set of edges
and the $\sum_{e\ni v} x(e)=1$ constraint enforces that $\mathbf{x}$ represents a perfect matching.
Birkhoff's theorem~\cite{Birkhoff46} (see also von Neumann~\cite{vonNeumann53})
implies that in bipartite graphs
the integrality constraint can be relaxed to $x(e) \in [0,1]$.  The basic feasible solutions to the resulting LP
correspond to perfect matchings.  However, this is not true of non-bipartite graphs!
Edmonds proposed exchanging the integrality constraint for an exponential
number of the following {\em odd set} constraints, 
which are obviously satisfied for every $\mathbf{x}$ that is the membership vector of a matching.
\begin{align*}
& \sum_{e \in E(B)} x(e) \le \floor{|B|/2}, \mbox{ for all $B\subset V$, $|B| \ge 3$ odd.}
\end{align*}
Edmonds proved that the basic feasible solutions to the resulting LP are integral and therefore correspond
to perfect matchings.  Weighted matching algorithms work directly with the dual LP.
Let $y : V\rightarrow \mathbb{R}$ and $z : 2^{V} \rightarrow \mathbb{R}$ be the vertex duals and odd set duals.
\begin{align*}
\mbox{minimize } \;\; & \sum_{v\in V} y(v) + \sum_{\substack{B \subset V :\\ |B|\ge 3 \, \operatorname{is}\,\operatorname{odd}}} z(B)\cdot \floor{|B|/2}\\
\mbox{subject to } \;\; & z(B) \ge 0, \mbox{ for all odd $B\subset V$,}\\
			& \hat{w}(u,v) \le yz(u,v) \mbox{ for all $(u,v) \in E$,}\\
\mbox{where, by definition, } \;\; & yz(u,v) \bydef y(u) + y(v) + \sum_{B\supset \{u,v\}} z(B).
\end{align*}
We generalize the synthetic dual $yz$ to an arbitrary set $S\subseteq V$ of vertices as follows.
\[
yz(S) = \sum_{u \in S} y(u) \,+\, \sum_{B \subseteq S}z(B) \cdot \floor{|B|/2} \,+\, \sum_{B\supset S}z(B) \cdot \floor{|S|/2}.
\]
Note that $yz(V)$ is exactly the dual objective.

Edmonds' algorithm~\cite{Edmonds65,Ed65} maintains a dynamic matching $M$ and dynamic 
laminar set $\Omega \subset 2^V$ of odd sets, each associated with a {\em blossom} subgraph.
Informally, a blossom is an odd-length alternating cycle (w.r.t.~$M$), whose constituents 
are either individual vertices or blossoms in their own right.  
More formally, blossoms are constructed inductively as follows.
If $v\in V$ then the odd set
$\{v\}$ induces a trivial blossom with edge set $\emptyset$.
Suppose that for some odd $\ell\ge 3$, 
$A_0,\ldots,A_{\ell-1}$ are disjoint sets associated with blossoms
$E_{A_0},\ldots,E_{A_{\ell-1}}$.  If there are edges $e_0,\ldots,e_{\ell-1}\in E$
such that $e_i \in A_i \times A_{i+1}$ (modulo $\ell$) 
and $e_i\in M$ if and only if $i$ is odd, then $B = \bigcup_i A_i$ is an odd set
associated with the blossom $E_B = \bigcup_i E_{A_i} \cup \{e_0,\ldots,e_{\ell-1}\}$.
Because $\ell$ is odd, the alternating cycle on $A_0,\ldots,A_{\ell-1}$ 
has odd length, leaving $A_0$ incident to two unmatched edges, $e_0$ and $e_{\ell-1}$.
One can easily prove by induction that $|B|$ is odd and that $E_B \cap M$ 
matches all but one vertex in $B$, called the {\em base} of $B$. 
Remember that $E(B) = E\cap {B\choose 2}$,\footnote{The notation $X\choose t$ refers to the set of all subsets of $X$ of size $t$,
so $B\choose 2$ is the set of all possible undirected edges on $B$.} 
the edge set induced by $B$, may contain many
non-blossom edges not in $E_B$.  Define $n(B) = |B|$ and $m(B) = |E(B)|$ to be the number of vertices
and edges in the graph induced by $B$.

The set $\Omega$ of {\em active} blossoms is represented by rooted trees,
where leaves represent vertices and internal nodes represent nontrivial blossoms.
A {\em root blossom} is one not contained in any other blossom.  The children of an internal node representing a blossom $B$ are
ordered by the odd cycle that formed $B$, where the child containing
the base of $B$ is ordered first.
Edmonds~\cite{Ed65,Edmonds65} showed 
that it is often possible to treat blossoms as if they were single vertices, by {\em shrinking} them.
We obtain the {\em shrunken graph} $G/\Omega$ by contracting 
all root blossoms and removing the edges in those blossoms.
To {\em dissolve} a root blossom $B$ means
to delete its node in the blossom forest and, in the contracted graph, to replace 
$B$ with individual vertices $A_0,\ldots,A_{\ell-1}$.

\begin{figure}
\centering
\begin{tabular}{c@{\hcm[1.7]}c}
\scalebox{.28}{\includegraphics{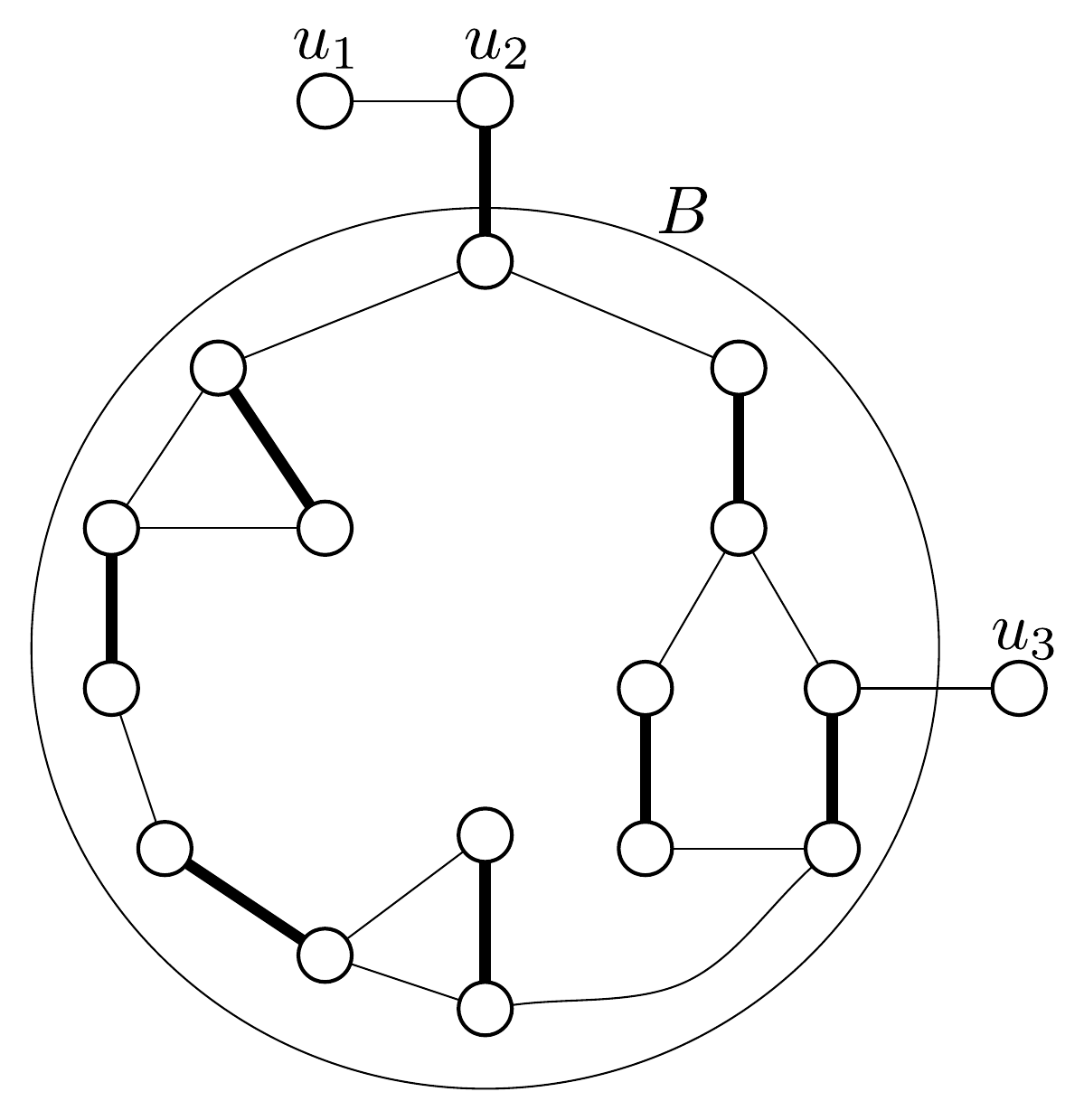}}
&
\scalebox{.28}{\includegraphics{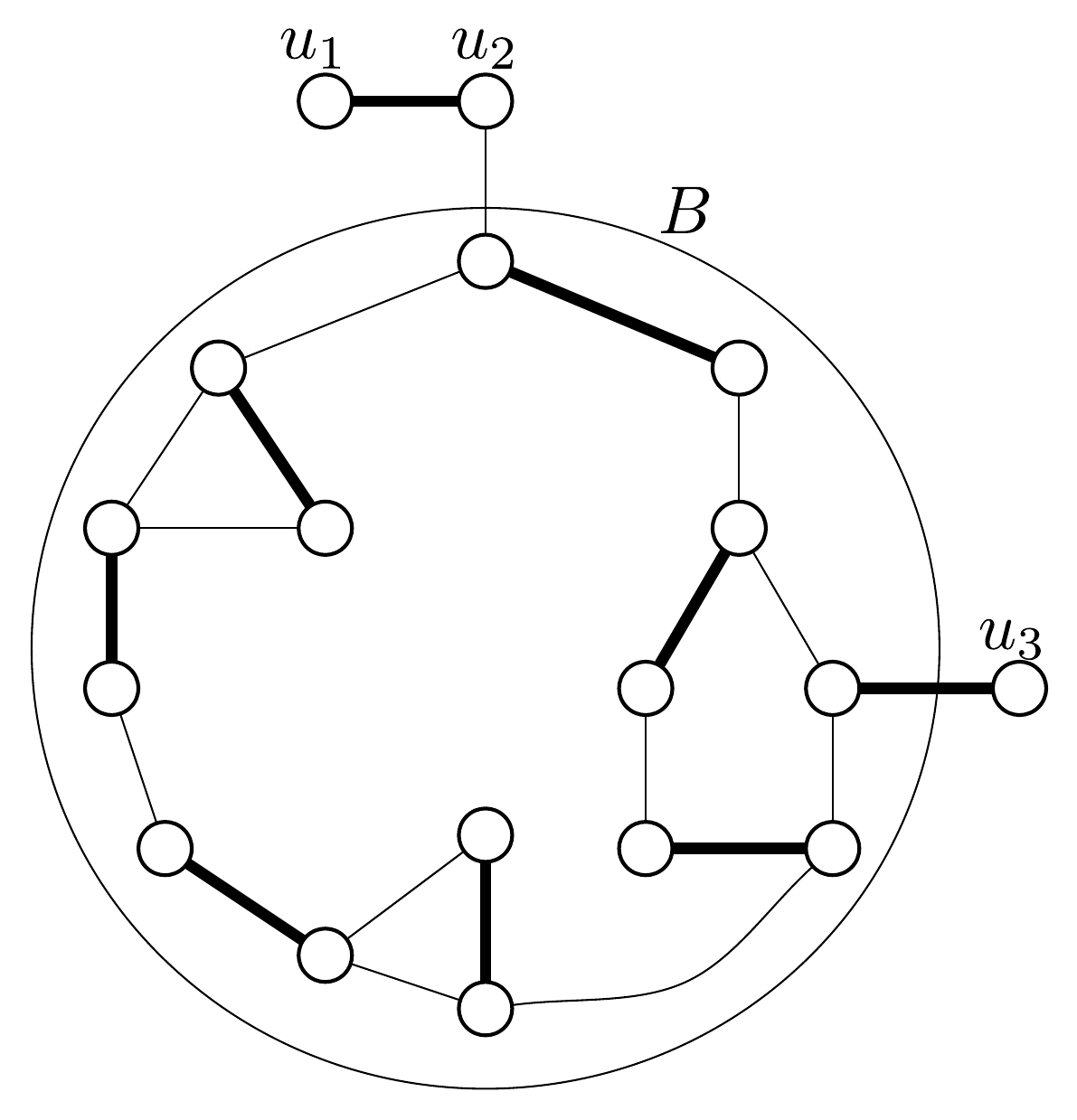}}\\
(a) & (b)
\end{tabular}
{\scriptsize 
\caption{\label{fig:shrunken}
Matched edges are think, unmatched edges thin.  
Left: $B$ is a blossom consisting of 7 sub-blossoms, 4 of which are trivial (vertices) and the other
three non-trivial blossoms.  The path $P' = (u_1,u_2,B,u_3)$ is an augmenting path in 
the shrunken graph $G/\{B\}$.  Right: augmenting along $P'$ in $G/\{B\}$ enlarges the matching and 
has the effect of moving the base of $B$ to the vertex matched with $u_3$.}}
\end{figure}

Blossoms have numerous properties.  Our algorithms use two in particular.

\begin{enumerate}[topsep=0pt,itemsep=0ex,partopsep=1ex,parsep=1ex]
\item The subgraph on $E_B$ is {\em critical}, meaning it contains a perfect matching on $B\backslash\{v\}$, for each $v\in B$.
Phrased differently, any $v\in B$ can be made the base of $B$ by choosing the matching edges in $E_B$ appropriately.

\item As a consequence of (1), any augmenting path $P'$ in $G/\Omega$ extends to an augmenting path $P$ in $G$,
by replacing each non-trivial blossom vertex $B$ in $P'$ with a corresponding path through $E_B$.  Moreover, $\Omega$ 
is still valid for the matching $M\oplus P$, though the bases of blossoms intersecting $P$ may be relocated by augmenting along $P$.
See Figure~\ref{fig:shrunken} for an example.
\end{enumerate}

\subsection{Relaxed Complementary Slackness}

Edmonds' algorithm maintains a matching $M$, a nested set $\Omega$ of blossoms, 
and duals $y:V\rightarrow\mathbb{Z}$ and $z:2^{V}\rightarrow\mathbb{N}$ that satisfy Property~\ref{prop:CS}.
Here $w$ is a weight function assigning {\em even} integers; it is generally not the same as the input weights $\hat{w}$.

\begin{property}(Complementary Slackness)\label{prop:CS} Assume $w$ assigns only {\em even} integers.
\begin{enumerate}[topsep=0.5ex,itemsep=-.5ex]
	\item {\it Granularity.} $z(B)$ is a nonnegative even integer and $y(u)$ is an integer.
	\item {\it Active Blossoms.} $|M\cap E_B| = \floor{|B|/2}$ for all $B\in \Omega$.  If $B \in \Omega$ is a root blossom then $z(B)>0$; if $B \notin \Omega$ then $z(B) = 0$.  Non-root blossoms may have zero $z$-values.
    	\item {\it Domination.} $yz(e)\geq w(e)$, for each $e=(u,v)\in E$.
    	\item {\it Tightness.} $yz(e)= w(e)$, for each $e \in M \cup \bigcup_{B\in \Omega} E_B$.
\end{enumerate}
\end{property}

\begin{lemma}\label{lem:optimality}
If Property~\ref{prop:CS} is satisfied for a perfect matching $M$, blossom set $\Omega$, and duals $y,z$, 
then $M$ is necessarily a \MWPM{} w.r.t.~the weight function $w$.
\end{lemma}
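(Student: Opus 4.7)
The plan is the standard LP weak-duality sandwich: show that $w(M)$ equals the dual objective $\sum_v y(v) + \sum_B z(B)\lfloor |B|/2\rfloor$ using tightness, and that any other perfect matching $M'$ has weight at most the dual objective by domination. Optimality of $M$ then follows.

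First I would evaluate $w(M)$ by summing tightness: by Property~\ref{prop:CS}(4), every $e \in M$ satisfies $w(e) = yz(e)$, so
\[
w(M) \;=\; \sum_{e \in M} yz(e) \;=\; \sum_{(u,v)\in M}\!\big(y(u)+y(v)\big) \;+\; \sum_{B}\, z(B)\cdot \big|\{(u,v)\in M : \{u,v\}\subseteq B\}\big|.
\]
Because $M$ is perfect, each vertex appears in exactly one matched edge, so the first sum collapses to $\sum_v y(v)$. For the second sum, let $E(B) = E\cap \binom{B}{2}$ so the coefficient of $z(B)$ is $|M\cap E(B)|$. For $B\notin\Omega$, $z(B)=0$ by Property~\ref{prop:CS}(2), so those terms vanish. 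For $B\in\Omega$, since $|B|$ is odd and $M$ is a matching, at most $\lfloor |B|/2\rfloor$ of its edges can lie inside $B$; and Property~\ref{prop:CS}(2) gives $|M\cap E_B| = \lfloor |B|/2\rfloor$, which forces $|M\cap E(B)| = \lfloor |B|/2\rfloor$ because $E_B \subseteq E(B)$. Hence
\[
w(M) \;=\; \sum_{v\in V} y(v) \;+\; \sum_{B\in\Omega} z(B)\cdot \lfloor |B|/2\rfloor,
\]
which is exactly the dual objective.

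Next I would bound any other perfect matching $M'$ from above by the same quantity. By Property~\ref{prop:CS}(3), $w(e) \le yz(e)$ for every $e$, so the same rearrangement gives
\[
w(M') \;\le\; \sum_{e\in M'} yz(e) \;=\; \sum_{v\in V} y(v) \;+\; \sum_{B} z(B)\cdot |M'\cap E(B)|.
\]
Since $|B|$ is odd and $M'$ is a matching, $|M'\cap E(B)| \le \lfloor |B|/2\rfloor$ for every odd $B$, and $z(B)\ge 0$ by Property~\ref{prop:CS}(1)--(2). Therefore $w(M') \le \sum_v y(v) + \sum_{B} z(B)\lfloor |B|/2\rfloor = w(M)$, proving $M$ is a \MWPM.

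The step I expect to be trickiest is the blossom bookkeeping: one must not confuse $E_B$ (the alternating-cycle edges used to build the blossom) with $E(B)$ (all edges of $G$ spanned by $B$), yet the relevant coefficient in $yz(e)$ for $e=(u,v)\in M$ with $\{u,v\}\subseteq B$ is $|M\cap E(B)|$. The matching-in-odd-set inequality $|M\cap E(B)|\le \lfloor |B|/2\rfloor$ provides the upper bound needed for $M'$, while for $M$ itself the Active Blossoms property pins down this count exactly so the two sides coincide. Nothing else in the argument uses the graph structure; it is purely the arithmetic of complementary slackness together with $z\ge 0$ and the perfection of $M$.
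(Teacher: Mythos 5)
Your proof is correct and follows exactly the route the paper intends: the paper states that Lemma~\ref{lem:optimality} "follows the same lines as Lemma~\ref{thm:approx}," whose proof is precisely this complementary-slackness sandwich (tightness plus Active Blossoms pins $w(M)$ to the dual objective; domination, $z\ge 0$, and the odd-set bound $|M'\cap E(B)|\le\lfloor|B|/2\rfloor$ bound any other perfect matching from above). Your explicit handling of the $E_B$ versus $E(B)$ distinction is a correct and worthwhile elaboration of a step the paper leaves implicit.
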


The proof of Lemma~\ref{lem:optimality} follows the same lines as Lemma~\ref{thm:approx}, proved below.
The Gabow-Tarjan algorithms and their successors~\cite{GT89,GT91,OrlinA92,GoldbergK97,DuanS12,DuanP14} maintain a relaxation of complementary slackness.  By using Property~\ref{prop:RCS} in lieu of Property~\ref{prop:CS} we introduce
an additive error as large as $n$.  This does not prevent us from computing exact \MWPMs{} but it does necessitate additional scales.
Before the algorithm proper begins we compute the extended weight function $\bar{w}(e) = (\fr{n}{2}+1)\hat{w}(e)$.  
Note that the weight of every matching w.r.t.~$\bar{w}$ is a multiple of $\fr{n}{2}+1$.
After the final scale of our algorithms $w = 2\bar{w}$ so if we can find a matching $M$ such that
$w(M)$ is additively within $n$ of the \MWPM, then $\bar{w}(M)$ is additively within 
$\fr{n}{2}$ of the \MWPM, which implies that $M$ is exactly optimum w.r.t.~both $\bar{w}(M)$ and 
$\hat{w}(M)$.  These observations motivate the use of Property~\ref{prop:RCS}.

\begin{property}(Relaxed Complementary Slackness)\label{prop:RCS} 
Assume $w$ assigns only {\em even} integers.
Property \ref{prop:CS}(1,2) holds and (3,4) are replaced with
\begin{enumerate}[topsep=0.5ex,itemsep=-0.5ex]
	\item[3.] {\it Near Domination.} $yz(e)\geq w(e)-2$ for each edge $e=(u,v)\in E$. 
	\item[4.] {\it Near Tightness.} $yz(e) \leq w(e)$, for each $e \in M \cup \bigcup_{B\in \Omega} E_B$.
\end{enumerate}
\end{property}

The ``-2'' in Property~\ref{prop:RCS} is due to the fact that $w$ is always even.\footnote{An equivalent implementation would be to assume that $w$ is merely integral, and maintain the invariant that
$z$ is integral and $y$ half-integral.}

\begin{lemma}\label{thm:approx}
If Property~\ref{prop:RCS} is satisfied for some perfect matching $M,$ blossom set $\Omega$, and duals $y,z$,
then $w(M)\geq w(M^*)-n$, where $M^*$ is an \MWPM~w.r.t.~$w$.
\end{lemma}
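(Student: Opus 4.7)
The plan is to sandwich the dual objective $yz(V)$ between $w(M)$ and $w(M^*)-n$, using Property~\ref{prop:RCS}(4) for the lower bound and Property~\ref{prop:RCS}(3) for the upper bound. The whole argument is essentially weak LP duality with the relaxation slack tracked carefully.

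First I would upper-bound $w(M^*)$. For any perfect matching $M^\star$, near domination gives $w(e)\le yz(e)+2$ for each $e\in M^*$, so
\[
w(M^*) \;\le\; \sum_{e\in M^*} yz(e) \;+\; 2|M^*|.
\]
Expanding $yz(e) = y(u)+y(v)+\sum_{B\supset\{u,v\}} z(B)$ and summing over the perfect matching $M^*$, the $y$-terms contribute exactly $\sum_v y(v)$, and each $z(B)$ is counted once for every $M^*$-edge lying inside $B$. Since $|B|$ is odd, at most $\floor{|B|/2}$ edges of any matching can fit inside $B$, so
\[
\sum_{e\in M^*} yz(e) \;\le\; \sum_v y(v) + \sum_B z(B)\floor{|B|/2} \;=\; yz(V).
\]
Combined with $2|M^*| = n$ this yields $w(M^*)\le yz(V)+n$.

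Next I would lower-bound $w(M)$. By near tightness, $w(e)\ge yz(e)$ for every $e\in M$, so $w(M)\ge \sum_{e\in M} yz(e)$. Summing $yz(e)$ over the perfect matching $M$ as before, the $y$-terms again sum to $\sum_v y(v)$. For the $z$-terms, I would argue that the upper bound $\floor{|B|/2}$ is actually achieved for each $B$ with $z(B)>0$: Property~\ref{prop:RCS}(2) (inherited unchanged from Property~\ref{prop:CS}) says $|M\cap E_B|=\floor{|B|/2}$ for every $B\in\Omega$, and since $E_B\subseteq E(B)$ and no matching can contain more than $\floor{|B|/2}$ edges inside $B$, we have $|\{e\in M: e\subseteq B\}|=\floor{|B|/2}$. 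For $B\notin\Omega$ the point is moot since $z(B)=0$. Therefore
\[
\sum_{e\in M} yz(e) \;=\; \sum_v y(v) + \sum_B z(B)\floor{|B|/2} \;=\; yz(V),
\]
giving $w(M)\ge yz(V)$.

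Chaining the two bounds gives $w(M)\ge yz(V)\ge w(M^*)-n$, as desired. The one step that requires more than mechanical rewriting is establishing equality $|\{e\in M:e\subseteq B\}|=\floor{|B|/2}$ for blossoms $B$ with $z(B)>0$, which relies on the Active Blossoms property together with perfectness of $M$ and the odd-cardinality of $B$. Everything else is a direct application of weak duality with an additive slack of $2$ per edge of $M^*$, accumulated over $|M^*|=n/2$ edges.
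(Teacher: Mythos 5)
Your proof is correct and follows essentially the same route as the paper's: near tightness plus the Active Blossoms property give $w(M)\geq \sum_{u} y(u)+\sum_{B}z(B)\floor{|B|/2}$, and near domination plus the odd-set bound $|M^*\cap E(B)|\le\floor{|B|/2}$ (together with $z(B)\ge 0$) give the matching upper bound on $w(M^*)$. The only cosmetic difference is that you route both bounds explicitly through $yz(V)$; the substance is identical.
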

\begin{proof}
By Property \ref{prop:RCS} (near tightness and active blossoms), the definition of $yz$, and the perfection of $M$, we have
\begin{align*}
w(M) &\geq \sum_{e\in M} yz(e)\:=\: \sum_{u\in V} y(u)+\sum_{B\in\Omega}z(B)\cdot|M\cap E(B)| \; \zero{\displaystyle \:=\:\sum_{u\in V} y(u)+\sum_{B\in\Omega}z(B)\cdot\floor{|B|/2}.}
\intertext{Since the \MWPM{} $M^*$ puts at most $\floor{|B|/2}$ edges in any blossom $B\in \Omega$,}
w(M^*)&\leq \sum_{e\in M^*} (yz(e) + 2)								& \mbox{Property \ref{prop:RCS} (near domination)}\\
	&=\sum_{u\in V} y(u)+\sum_{B\in\Omega}z(B)\cdot|M^*\cap E(B)|+2|M^*| & \mbox{Defn.~of $yz$}\\
&\leq \sum_{u\in V} y(u)+\sum_{B\in\Omega}z(B)\cdot\lfloor|B|/2\rfloor+n.		& \mbox{$|M^*| = n/2$, Non-negativity of $z$} 
\end{align*}

Therefore, we have $w(M)\geq w(M^*)-n$.
\end{proof}

\subsection{Edmonds' Search}

Suppose we have a matching $M$, blossom set $\Omega$, and duals $y,z$ satisfying Property~\ref{prop:CS} or~\ref{prop:RCS}.
The goal of Edmonds' {\em search} procedure is to manipulate $y,z,$ and $\Omega$ until an {\em eligible} augmenting path emerges.
At this point $|M|$ can be increased by augmenting along such a path (or multiple such paths), which preserves Property~\ref{prop:CS} or \ref{prop:RCS}.
The definition of {\em eligible} needs to be compatible with the governing invariant (Property~\ref{prop:CS} or \ref{prop:RCS}) and
other needs of the algorithm.
In our algorithms we use several implementations of Edmonds' generic search: they differ in their governing invariants, 
definition of eligibility, and data structural details.  For the time being the reader can imagine that 
Property~\ref{prop:CS} is in effect 
and that we use Edmonds' original eligibility criterion~\cite{Edmonds65}.

\begin{criterion}\label{crit1}
An edge $e$ is {\em eligible} if it is {\em tight}, that is, $yz(e) = w(e)$.  
\end{criterion}

Each scale of our algorithms begins with Property~\ref{prop:CS} as the governing invariant but switches 
to Property~\ref{prop:RCS} when all inherited blossoms are gone.  
When Property~\ref{prop:RCS} is in effect we use Criterion~\ref{crit2} if the algorithm aims to find augmenting paths
in batches and Criterion~\ref{crit3} when augmenting paths are found one at a time.  The reason for switching
from Criterion~\ref{crit2} to \ref{crit3} is discussed in more detail in the proof of Lemma~\ref{lem:liquidationist-correct}.

\begin{criterion}\label{crit2}
An edge $e$ is \emph{eligible} if at least one of the following holds.
\begin{enumerate}[topsep=0.5ex,itemsep=-0.5ex]
	\item $e\in E_B$ for some $B\in\Omega$.
	\item $e\notin M$ and $yz(e)=w(e)-2$.
	\item $e\in M$ and $yz(e)=w(e)$.
\end{enumerate}
\end{criterion}

\begin{criterion}\label{crit3}
An edge is {\em eligible} if $yz(e) = w(e)$ or $yz(e) = w(e)-2$.
\end{criterion}

Regardless of which eligibility criterion is used, 
let $\Gelig = (V,\Eelig)$ be the eligible subgraph
and $\GeligC = \Gelig/\Omega$ be obtained from $\Gelig$ 
by contracting all root blossoms.

We consider a slight variant of Edmonds' search that looks for augmenting paths only from 
a specified set $F$ of free vertices in $V$, that is, each augmenting path must have at least one end in $F$ and possibly both.  
(We also use $F$ to denote the corresponding free vertices in $\GeligC$.) 
The search iteratively performs {\em Augmentation}, {\em Blossom Shrinking}, {\em Dual Adjustment}, and {\em Blossom Dissolution} steps, halting after the first Augmentation step that discovers at least one augmenting path.  
We require that the $y$-values of all $F$ vertices have the same parity (even/odd). 
This is needed to keep $y,z$ integral and allow us to perform discrete dual adjustment steps without violating Property~\ref{prop:CS} or~\ref{prop:RCS}. See Figure \ref{fig:edmondssearch} for the pseudocode.

\begin{figure}
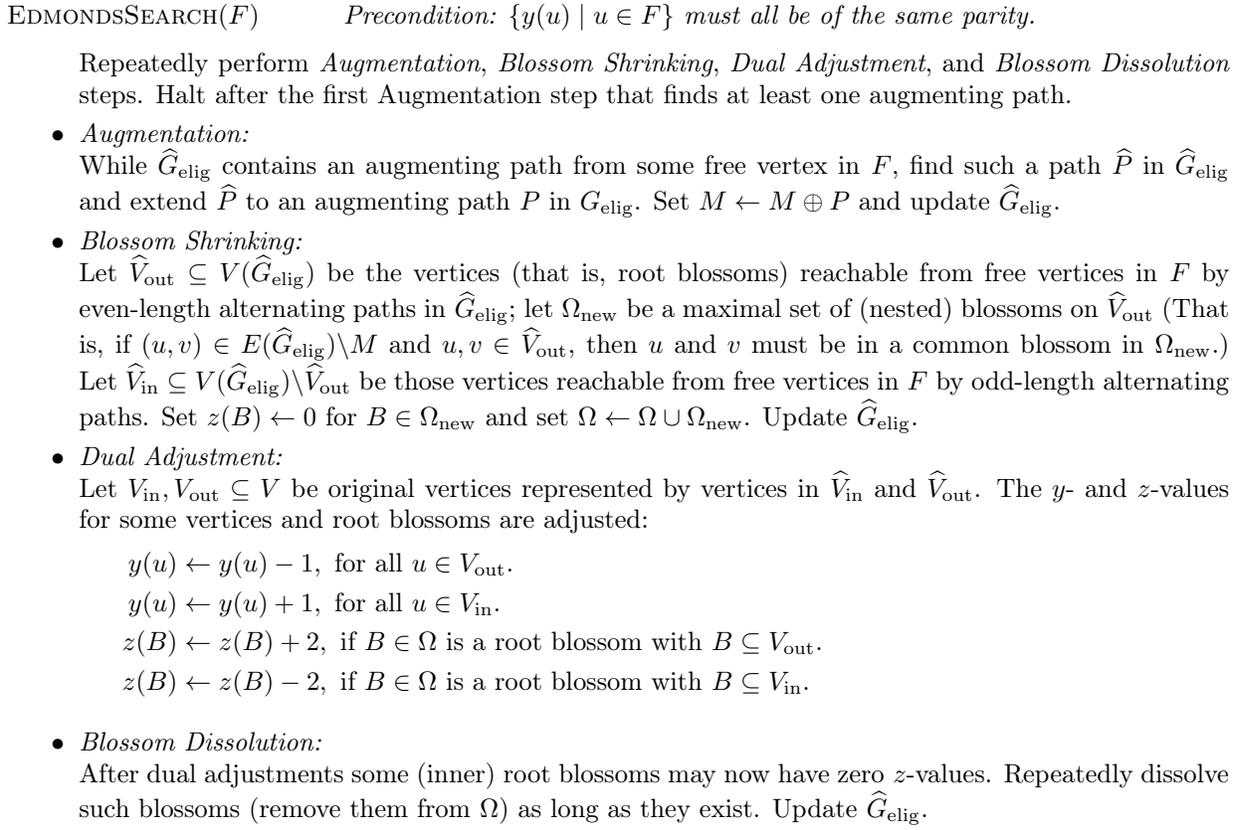

\centering
\framebox{
\begin{minipage}{6.4in}
\small
\noindent $\EdmondsSearch(F)$ \hcm {\em Precondition: $\{y(u) \;|\; u\in F\}$ must all be of the same parity.}
\begin{itemize}[itemsep=0ex]
\item[] Repeatedly perform {\em Augmentation}, {\em Blossom Shrinking}, {\em Dual Adjustment}, and {\em Blossom Dissolution} steps.  Halt after the first Augmentation step that finds at least one augmenting path.

\item {\em Augmentation:}\\  
While $\GeligC$ contains an augmenting path from some free vertex in $F$, find such a path $\widehat{P}$ in $\GeligC$ and 
extend $\widehat{P}$ to an augmenting path $P$ in $\Gelig$. Set $M\leftarrow M\oplus P$ and update $\GeligC$. 

\item \emph{Blossom Shrinking:}\\
    Let $\VoutC \subseteq V(\GeligC)$ be the vertices (that is, root blossoms) reachable from free vertices in $F$ by even-length alternating paths in $\GeligC$;  
    let $\Omega_{\new}$ be a maximal set of (nested) blossoms on $\VoutC$
    (That is, if $(u,v)\in E(\GeligC)\backslash M$ and $u,v\in \VoutC$, then $u$ and $v$ must be in a common blossom in $\Omega_{\new}$.)
    Let $\VinC \subseteq V(\GeligC)\backslash \VoutC$ be those vertices reachable from free vertices in $F$ by odd-length alternating paths.
        Set $z(B) \leftarrow 0$ for $B\in \Omega_{\new}$ and set $\Omega \leftarrow \Omega \cup \Omega_{\new}$.  Update $\GeligC$.

\item \emph{Dual Adjustment:}\\
    Let $\Vin,\Vout \subseteq V$ be original vertices represented by vertices in $\VinC$ and $\VoutC$.
    The $y$- and $z$-values for some vertices and root blossoms are adjusted:
    \vcm[-.2]
    \begin{align*}
        y(u) &\leftarrow y(u)-1, \mbox{ for all $u\in \Vout$.}\\
        y(u) &\leftarrow y(u)+1,  \mbox{ for all $u\in \Vin$.}\\
        z(B) &\leftarrow z(B)+2, \mbox{ if $B\in\Omega$ is a root blossom with $B\subseteq \Vout$.}\\
        z(B) &\leftarrow z(B)-2, \mbox{ if $B\in\Omega$ is a root blossom with $B\subseteq \Vin$.}\hcm[5]
    \end{align*}
    \vcm[-.5]
\item \emph{Blossom Dissolution:}\\
After dual adjustments some (inner) root blossoms may now have zero $z$-values. 
Repeatedly dissolve such blossoms (remove them from $\Omega$) as long as they exist. Update $\GeligC$.
\end{itemize}
\end{minipage}
}
\caption{A \underline{generic} implementation of Edmonds' search procedure.  Data structural issues are ignored,
as is the eligibility criterion, which determines $\GeligC$.}\label{fig:edmondssearch}
\end{figure}

The main data structure needed to implement \EdmondsSearch{} is a priority queue for
scheduling events
(blossom dissolution,
 blossom formation,
and {\em grow} events that add vertices to $V_{\Inner}\cup V_{\Outer}$).
We refer to \PQSearch{} as an implementation of $\EdmondsSearch$ when the 
number of dual adjustments is unbounded.  
See Gabow~\cite{Gabow16} for an implementation of \PQSearch{} taking $O(m+n\log n)$ time,
or Section~\ref{sect:ImplementingEdmonds} for one taking $O(m\sqrt{\log\log n})$ time, w.h.p.
When the number of dual adjustments is $t=O(m)$ we can use a trivial array of buckets as a priority queue.
Let $\BucketSearch$ be an implementation of $\EdmondsSearch$ running in $O(m+t)$ time; refer 
to Section~\ref{sect:ImplementingEdmonds} for a detailed description.

Regardless of what $t$ is or how the dual adjustments are handled, 
we still have options for how to implement the {\em Augmentation} step.
Under Criterion~\ref{crit1} of eligibility, we can make the Augmentation step extend $M$ to a maximum cardinality matching 
in the subgraph of $\Gelig$ induced by $V(M) \cup F$.
This can be done in $O((p+1)m)$ time if $p\ge 0$ augmenting paths are found~\cite{GT85}, 
or in $O(m\sqrt{n})$ time, independent of $p$, using a cardinality matching algorithm, e.g., \cite{MV80,Vazirani12,Vazirani14} or~\cite[\S 10]{GT91} or~\cite{Gabow17}.

When eligibility Criterion~\ref{crit2} is in effect the {\em Augmentation} step is qualitatively different.
Observe that in the contracted graph $G/\Omega$, matched and unmatched edges have {\em different}
eligibility criteria.  It is easily proved that augmenting along a {\em maximal} set of augmenting paths
eliminates all eligible augmenting paths,\footnote{The distinction between a 
maximal set and maximum set of augmenting paths 
is, in the context of flow algorithms, 
entirely analogous to the distinction between blocking flows and maximum flows.}
 quickly paving the way for {\em Blossom Shrinking} and {\em Dual Adjustment}
steps.  Unlike $\PQSearch$ and $\BucketSearch$, $\SearchOne$ only performs {\em one} dual adjustment 
and {\em must} be used with Criterion~\ref{crit2}.
Finding a maximal set of augmenting paths in $O(m)$ time is straightforward with depth first search~\cite[\S 8]{GT91} and a union-find algorithm~\cite{GT85}.

\begin{figure}[h!]
\centering
\framebox{
\begin{minipage}{6.3in}
\small
\noindent $\SearchOne(F)$\hcm {\em Precondition: $\{y(u) \;|\; u\in F\}$ must all be of the same parity.}
\begin{itemize}
\item {\em Augmentation:}\\
Find a maximal set $\Psi$ of vertex-disjoint augmenting paths from $F$ in $\GeligC$.
Set $\displaystyle M \leftarrow M\oplus \bigcup_{P\in \Psi} P$.

\item Perform {\em Blossom Shrinking}, {\em Dual Adjustment}, and {\em Blossom Dissolution} steps from $F$, 
exactly as in $\EdmondsSearch$.
\end{itemize}
\end{minipage}
}
\caption{\label{fig:SearchOne}}
\end{figure}

The following lemmas establish the correctness of $\EdmondsSearch$
(using either Property~\ref{prop:CS} or \ref{prop:RCS})
and $\SearchOne$ (using Property~\ref{prop:RCS} and Criterion~\ref{crit2}).

\begin{lemma}\label{lem:no-eligible-path}
After the Augmentation step of $\SearchOne(F)$ (using Criterion~\ref{crit2} for eligibility), 
$\GeligC$ contains no eligible augmenting paths from an $F$-vertex.
\end{lemma}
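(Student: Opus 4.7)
The plan is to argue by contradiction. Suppose that after the Augmentation step, $\GeligC$ still contains an eligible augmenting path $Q$ from some $F$-vertex; I will derive a contradiction with the maximality of $\Psi$.

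The crux is that under Criterion~\ref{crit2}, an edge $e$ that appears in $\GeligC$ (hence is not in any $E_B$) is eligible in exactly one of two disjoint ways: eligible-as-matched requires $yz(e)=w(e)$, while eligible-as-unmatched requires $yz(e)=w(e)-2$. Since augmenting along $\Psi$ flips the matched/unmatched status of every edge in $\Psi$, no edge of $\Psi$ can be eligible both before and after. Thus, if any edge $e$ lay in both $Q$ and $\Psi$, it would have to be eligible before (as part of an eligible path in $\Psi$) and eligible after (as part of $Q$) with opposite statuses, which is impossible. Hence $Q$ and $\Psi$ are edge-disjoint.

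Edge-disjointness yields that the edges of $Q$ carry identical matched/unmatched labels before and after augmentation, so $Q$ is an alternating path with the same structure in the pre-augmentation graph. I next want to show $Q$ is vertex-disjoint from $\Psi$. For an interior vertex $v$ of $Q$, the unique matching edge of $v$ in the pre-augmentation matching must lie on $Q$ (since $v$'s matching-edge-on-$Q$ is unchanged by augmentation). If $v$ were also interior to some $P\in\Psi$, then $v$'s matching edge before would also lie on $P$, forcing a common edge and contradicting edge-disjointness. The remaining cases fall quickly: endpoints of $Q$ are free after augmentation, so they cannot be interior to any $P$ (interior vertices of $P$ remain matched) nor endpoints of $P$ (which become matched by augmentation); and a vertex interior to $Q$ is matched before, so cannot be an endpoint of $P$ (which is free before).

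Combining these facts, $Q$ is an eligible augmenting path in the pre-augmentation $\GeligC$ starting from an $F$-vertex and vertex-disjoint from every path in $\Psi$, so $\Psi\cup\{Q\}$ would be a strictly larger vertex-disjoint collection of eligible augmenting paths from $F$, contradicting the maximality of $\Psi$. The main obstacle I expect is the case analysis for vertex overlap between $Q$ and $\Psi$; the arithmetic is trivial, but one has to track carefully how augmentation moves vertices into and out of the free set and ensure every configuration of overlap produces a contradiction.
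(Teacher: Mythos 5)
Your proof is correct and rests on the same key observation as the paper's: under Criterion~\ref{crit2} the matched and unmatched eligibility conditions are mutually exclusive, so augmenting along $\Psi$ renders every edge of $\Psi$ ineligible, and maximality then forces a contradiction. You simply organize the argument as ``edge-disjoint, hence vertex-disjoint, hence $\Psi$ was not maximal,'' whereas the paper takes the contrapositive shortcut (a shared vertex $v$ would need its now-ineligible matching edge on the surviving path); the content is the same, just spelled out in more detail.
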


\begin{proof}
Suppose that, after the Augmentation step, there is an augmenting path $P$ from an $F$-vertex in $\GeligC$.
Since $\Psi$ was maximal, $P$
must intersect some $P'\in\Psi$ at a vertex $v$.
However, after the Augmentation step every edge in $P'$ will become ineligible, so the
matching edge $(v,v')\in M$ is no longer in $\GeligC$, contradicting the fact that $P$ consists of eligible edges.
\end{proof}

\begin{lemma}\label{lem:offendCS}
If Property \ref{prop:CS} is satisfied and the $y$-values of vertices in $F$ have the same parity, 
then $\EdmondsSearch(F)$ (under Criterion~\ref{crit1}) preserves Property~\ref{prop:CS}.
\end{lemma}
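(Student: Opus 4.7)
The plan is to verify that each of the four primitive operations in \EdmondsSearch{}---Augmentation, Blossom Shrinking, Dual Adjustment, and Blossom Dissolution---preserves all four clauses of Property~\ref{prop:CS}. A preliminary \emph{parity invariant} makes the verification go through: under Criterion~\ref{crit1} every tight edge $e=(u,v)$ satisfies $yz(e)=w(e)$, which is even, and since every $z(B)$ is even, $y(u)\equiv y(v) \pmod{2}$. Combined with the precondition that the $y$-values on $F$ share a common parity, this forces every vertex of $V_{\Inner}\cup V_{\Outer}$ (all reachable from $F$ along tight edges) to share that same parity.

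I would dispose of Augmentation, Blossom Shrinking, and Blossom Dissolution first, as they are the easy cases. The augmenting paths in $\GeligC$ consist of tight edges, so flipping $M$ along them cannot destroy tightness; the critical-blossom property from Section~\ref{sect:historymatching} preserves $|M\cap E_B|=\floor{|B|/2}$ for every $B\in\Omega$; and $y,z$ are untouched, so the other clauses are trivial. Blossom Shrinking creates new blossoms with $z(B)=0$, leaving $yz$ unchanged on every edge, and the new $E_B$ is tight because it consists of old tight $E_{A_i}$ edges together with the tight odd-cycle edges drawn from $\GeligC$; the odd-cycle structure immediately gives $|M\cap E_B|=\floor{|B|/2}$. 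Dissolution only removes root blossoms whose $z$ has just fallen to $0$, so no $yz$ value is altered and the tightness obligation for $E_B$ simply disappears.

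The real work is in the Dual Adjustment, which I expect to be the main obstacle. For tightness of matched and blossom edges: a matched edge $(u,v)$ with $u\in V_{\Outer}$ either sits inside a root blossom in $V_{\Outer}$ (whose $z$ rises by $2$ while $y(u)+y(v)$ drops by $2$) or has its mate $v$ reached by extending the even alternating path through the tight matched edge $(u,v)$, putting $v\in V_{\Inner}$ so that the $-1$ on $y(u)$ and $+1$ on $y(v)$ cancel; the $V_{\Inner}$ and both-outside cases are analogous, and non-root blossom edges inherit tightness from their enclosing root. For domination, the only edges on which $yz$ can decrease have at least one endpoint in $V_{\Outer}$. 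Suppose both endpoints lie in $V_{\Outer}$ but in different root blossoms; then $yz(e)$ drops by $2$, but $e$ must be ineligible (else Blossom Shrinking would have merged the two root blossoms), so $yz(e)>w(e)$, and the parity invariant forces $yz(e)-w(e)$ to be even, hence $\ge 2$. If instead one endpoint sits in $V_{\Outer}$ and the other outside $V_{\Inner}\cup V_{\Outer}$, $yz(e)$ drops by $1$; were $e$ tight it would push its other endpoint into $V_{\Inner}$, so $e$ is ineligible and $yz(e)\ge w(e)+1$ by integrality. Granularity holds because $y$ changes by $\pm 1$ and $z$ by $\pm 2$, and nonnegativity of $z$ survives because any root blossom reaching $z(B)=0$ is dissolved before the next adjustment. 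The crux is the both-in-$V_{\Outer}$ case: I need the full $\ge 2$ slack rather than the $\ge 1$ available from integrality alone, and this is precisely what the parity invariant---and hence the precondition on $F$---is there to supply.
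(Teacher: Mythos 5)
Your proposal is correct and follows essentially the same route as the paper: the easy operations are dispatched directly, and the Dual Adjustment step is handled by a case analysis whose crux is exactly the paper's parity argument, namely that an ineligible unmatched edge between two outer vertices has even slack and hence slack at least $2$, surviving the drop of $2$. The only cosmetic difference is that you organize the cases by which endpoints lie in $V_{\Inner}\cup V_{\Outer}$ and whether $yz$ can decrease, while the paper enumerates four explicit endpoint/matching configurations; the substance is identical.
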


\begin{proof}
Property~\ref{prop:CS} (granularity) is obviously maintained, 
since we are always adjusting $y$-values by 1 and $z$-values by 2.
Property~\ref{prop:CS} (active blossoms) is also maintained since all the new root blossoms discovered in the Blossom Shrinking step are in $V_{\Outer}$ and will have positive $z$-values after adjustment. Furthermore, each root blossom whose $z$-value drops to zero is removed.

Consider the tightness and the domination conditions of Property \ref{prop:CS}. 
First note that if both endpoints of $e$ lie in the same blossom, $yz(e)$ will not change until the blossom is dissolved. 
When the blossom was formed, the blossom edges must be eligible (tight). 
The augmentation step only makes eligible edges matched, so tightness is satisfied. 

Consider the effect of a dual adjustment on an edge $e = (u,v)$, whose endpoints lie in different blossoms. 
We divide the analysis into the following four cases.  Refer to Figure~\ref{fig:dual-adjustment} for illustrations of the cases.

\begin{figure}
\centering
\scalebox{.35}{\includegraphics{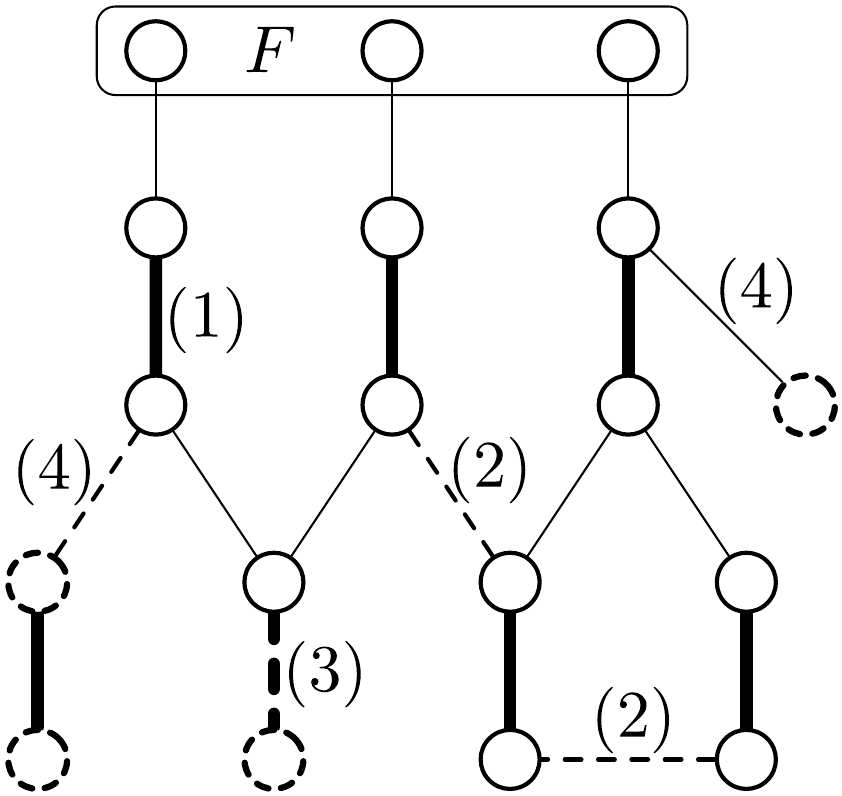}}
\caption{\label{fig:dual-adjustment}Thick edges are matched, thin unmatched.  Dashed edges (whether thick or thin) are ineligible.
Solid vertices are in $\Vin \cup \Vout$; all other vertices are dashed.  Case (3) can only occur under
Criteria~\ref{crit2} or \ref{crit3} of eligibility.}
\end{figure}

\begin{enumerate}
	\item Both $u$ and $v$ are in $\Vin \cup \Vout$ and $e \in M$.  We cannot have both $u,v\in \Vout$ 
	(otherwise they would be in a common blossom, since $e$ is eligible) nor can both be in $\Vin$, so $u\in \Vin, v\in \Vout$
	and $yz(e)$ is unchanged.
	
	\item Both $u$ and $v$ are in $\Vin \cup \Vout$ and $e \notin M$.  If at least one of $u$ or $v$ is in $\Vin$, then $yz(e)$ cannot decrease and domination holds. Otherwise we must have $u,v \in \Vout$. In this case, $e$ must be ineligible, for otherwise an augmenting path or a blossom would have been found.  Ineligibility implies $yz(e) \ge w(e) + 1$ but something stronger can be inferred.
Since the $y$-values of free vertices have the same parity, all vertices reachable from free vertices by eligible alternating paths also have the same parity.
Since $w(e)$ is even (by assumption) and $yz(e)$ is even (by parity) we can conclude that $yz(e) \geq w(e) + 2$ before dual adjustment,
and therefore $yz(e) \geq w(e)$ after dual adjustment.

	\item $u$ but not $v$ is in $\Vin \cup \Vout$ and $e \in M$. This case cannot happen since in this case, 
	$u \in \Vin$ and $e$ must be ineligible, but we know all matched edges are tight.

	\item $u$ but not $v$ is in $\Vin \cup \Vout$ and $e \notin M$. If $u \in \Vin$, then $yz(e)$ increases and domination holds. Otherwise, $u \in \Vout$ and $e$ must be ineligible. In this case, we have $yz(e) \geq w(e) + 1$ before the dual adjustment and 
	$yz(e) \geq w(e)$ afterwards.
\end{enumerate}
\end{proof}

\begin{lemma}\label{lem:offendRCS}
If Property \ref{prop:RCS} is satisfied and the $y$-values of vertices in $F$ have the same parity, 
then 
$\SearchOne(F)$ (under Criterion \ref{crit2})
or
$\EdmondsSearch(F)$ (under Criterion \ref{crit3})
preserves Property~\ref{prop:RCS}.
\end{lemma}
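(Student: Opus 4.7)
My plan is to mirror the structure of the proof of Lemma~\ref{lem:offendCS}, handling the four endpoint/membership cases for an edge $e=(u,v)$ during a dual adjustment, with two modifications: (i) I must argue that augmentation (now performed at the start of $\SearchOne$) already preserves Property~\ref{prop:RCS}, and (ii) the relaxed invariant forces me to push the parity argument from Case (2) of Lemma~\ref{lem:offendCS} by one more step, this time separately for Criterion~\ref{crit2} and Criterion~\ref{crit3}. The granularity and active-blossom parts of Property~\ref{prop:RCS} are unaffected by the relaxations, so that portion of the argument is identical to Lemma~\ref{lem:offendCS} and I would just cite it.

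For near tightness, I would first handle augmentation (in the $\SearchOne$ case): along an eligible augmenting path $P$ under Criterion~\ref{crit2}, every newly matched edge was either in some $E_B$ (so it already satisfied $yz(e)\le w(e)$ by the inherited near tightness) or was previously unmatched with $yz(e)=w(e)-2\le w(e)$; hence near tightness persists after $M\leftarrow M\oplus P$. Then for the dual adjustment step, I observe that any matched or blossom edge whose $yz$-value can change must have both endpoints in $\Vin\cup\Vout$ but in different root blossoms. For a matched edge this forces one endpoint in $\Vout$ and the other in $\Vin$ (the matched partner of any explored vertex is also explored, with opposite $\Inner/\Outer$ label), so $yz(e)$ is unchanged; blossom edges $e\in E_B$ with both endpoints in the same root blossom are likewise unchanged. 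Thus near tightness is preserved.

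The heart of the proof is near domination. I follow the four-case split from Lemma~\ref{lem:offendCS}, splitting by whether $u,v\in\Vin\cup\Vout$ and whether $e\in M$. The cases where at least one endpoint lies in $\Vin$, or where both endpoints lie outside $\Vin\cup\Vout$, are immediate because $yz(e)$ does not decrease. The only nontrivial situations are: (a) $e\notin M$ with $u\in\Vout$ and $v\notin\Vin\cup\Vout$, where $yz(e)$ drops by $1$; and (b) $e\notin M$ with $u,v\in\Vout$, where $yz(e)$ drops by $2$. In case (a), $e$ must be ineligible, so $yz(e)\neq w(e)-2$ under either criterion; combined with near domination $yz(e)\ge w(e)-2$, we get $yz(e)\ge w(e)-1$ before, hence $\ge w(e)-2$ after, as required. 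In case (b), I use the parity argument exactly as in Lemma~\ref{lem:offendCS}: since all $F$-vertices have $y$-values of the same parity and the tree grows via alternating paths, every vertex in $\Vin\cup\Vout$ inherits that parity, so $y(u)+y(v)$ is even; with $z$-values and $w(e)$ even, $yz(e)-w(e)$ is even. Ineligibility gives $yz(e)\neq w(e)-2$ (and also $\neq w(e)$ under Criterion~\ref{crit3}, though this is not needed), so together with $yz(e)\ge w(e)-2$ we conclude $yz(e)\ge w(e)$ before the adjustment and $\ge w(e)-2$ after.

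The main obstacle is getting the case (b) argument to go through uniformly for both $\SearchOne$/Criterion~\ref{crit2} and $\EdmondsSearch$/Criterion~\ref{crit3}: in particular, I need to be careful that under Criterion~\ref{crit2} the ineligibility of an unmatched edge really does yield $yz(e)\neq w(e)-2$ (rather than only ruling out tightness as in Criterion~\ref{crit1}); this is why the proof of Lemma~\ref{lem:offendCS} does not directly transfer. Once the parity/evenness step is in place, everything else is a mechanical rerun of the four-case analysis with the inequality slackened by~$2$ throughout.
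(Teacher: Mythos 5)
Your overall strategy --- rerun the four-case analysis of Lemma~\ref{lem:offendCS} with every inequality relaxed by $2$, plus the parity step --- is the same as the paper's, and your treatment of near domination (your cases (a) and (b)) is correct and matches the paper's Cases 2 and 4. But your near-tightness argument for the dual adjustment step contains a genuine error. You claim that a matched edge whose $yz$-value changes must have both endpoints in $\Vin\cup\Vout$, justified by the parenthetical ``the matched partner of any explored vertex is also explored, with opposite $\Inner/\Outer$ label.'' That claim is only valid under Property~\ref{prop:CS} with Criterion~\ref{crit1}, where matched edges are tight and hence automatically eligible, so the search always crosses them. Under Property~\ref{prop:RCS}, near tightness only guarantees $yz(e)\le w(e)$ for $e\in M$, so a matched edge can be \emph{ineligible} (e.g.\ $yz(e)\le w(e)-1$ under Criterion~\ref{crit2}, which requires $yz(e)=w(e)$ for matched edges). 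In that situation the mate of an inner vertex $u$ is \emph{not} added to the search structure, and the dual adjustment increases $y(u)$, hence increases $yz(e)$ by $1$ --- exactly the configuration that threatens near tightness. This is the paper's Case 3, which Lemma~\ref{lem:offendCS} rules out but which the paper explicitly flags as newly possible here (``Case 3 may happen in this situation'').

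The gap is small and fixable: for such an edge, ineligibility combined with near tightness gives $yz(e)\le w(e)-1$ before the adjustment (under Criterion~\ref{crit2}, ineligible and matched means $yz(e)\ne w(e)$; under Criterion~\ref{crit3}, $yz(e)\notin\{w(e),w(e)-2\}$), hence $yz(e)\le w(e)$ afterwards, and near tightness survives. But as written your proof denies that the case exists rather than handling it, and the false structural claim is load-bearing for your near-tightness argument, so you need to add this case explicitly.
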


\begin{proof}
The proof is similar to that of the previous lemma, except that we replace the tightness and domination by near tightness and near domination. We point out the differences in the following.
An edge $e$ can be included in a blossom only if it is eligible. An eligible edge must have $yz(e) = w(e)$ or $yz(e) = w(e) - 2$.  Augmentations only make eligible edges  matched. Therefore near tightness is satisfied after the Augmentation step.

When doing the dual adjustment, the following are the cases when $yz(e)$ is modified after the dual adjustment. 
In Case 2 of the previous proof, when $u,v \in \Vout$ but $e$ is ineligible we have $yz(e) \ge w(e)-1$.  By parity this implies that 
$yz(e) \geq w(e)$ before the dual adjustment and $yz(e) \geq w(e) -2$ afterwards. 
Case 3 may happen in this situation. It is possible that $u \in \Vin$ and $e \in M$ is ineligible. 
Then we must have $yz(e) \leq w(e) - 1$ before the dual adjustment and $yz(e) \leq w(e)$ afterwards. 
In Case 4, when $u \in \Vout$, we have $yz(e) \geq w(e) - 1$ before the dual adjustment and $yz(e) \geq w(e) - 2$ afterwards.
\end{proof}

\section{The \Liquidationist{} Algorithm}\label{sect:liquidationist}

The most expedient way to get rid of an inherited blossom is to {\em liquidate} it (our term)
by distributing its $z$-value over its constituents' $y$-values, preserving Property~\ref{prop:CS} (domination). 

\vcm[.3]
\centerline{\framebox{\begin{minipage}{3in}
\Liquidate($B$) :\\
\hcm[.7] $y(u) \leftarrow y(u) + z(B)/2$ \ for all $u\in B$\\
\hcm[.7] $z(B) \leftarrow 0$  \mbox{\ \ (and dissolve $B$)}
\end{minipage}}}
\vcm[.3]

From the perspective of a single edge, liquidation has no effect on $yz(e)$ if $e$ is fully inside $B$ or outside $B$,
but it increases $yz(e)$ by $z(B)/2$ if $e$ straddles $B$.  
From a global perspective, liquidation increases the dual objective $yz(V)$
by $|B|\cdot z(B)/2 - \floor{|B|/2}\cdot z(B) = z(B)/2$.  Since $z(B)$ is generally 
unbounded (as a function of $n$), this apparently destroys the key advantage of scaling algorithms, that
$yz(V)$ is within $O(n)$ of optimum.  It is for this reason that~\cite{G85,GT91} did not pursue liquidation.

The \Liquidationist{} algorithm (see Figure~\ref{fig:liquidationist}) is so named because it liquidates all inherited blossoms.  
Let $w',y',z', M',\Omega'$ be the edge weights, dual variables, matching and blossom set at the end of the $(i-1)$th scale.\footnote{In the first scale, $w',y',z'=0$ and $M',\Omega'=\emptyset$, which satisfies Property~\ref{prop:RCS}.}
Recall that a blossom is {\em large} if it contains at least $\tau$ vertices and {\em small} otherwise.  

The first step is to compute the {\em even} weight function $w$ for the $i$th scale and starting duals $y,z$, as follows.
\begin{align*}
w(e)	& \leftarrow 2(w'(e) + \mbox{the $i$th bit of $\bar{w}(e)$}),\\
y(u)	& \leftarrow 2y'(u)+3,\\
z(B)	& \leftarrow 2z'(B).
\end{align*}  
Lemma~\ref{thm:negative} proves that if $w',y',z'$ satisfy Property~\ref{prop:RCS} w.r.t.~$M'$,	
then $w,y,z$ satisfy Property~\ref{prop:CS} w.r.t.~$M=\emptyset$, except for the Active Blossom property, 
a point that will be moot once we liquidate all blossoms in $\Omega'$.\footnote{At this point 
we continue to use the term ``blossom'' to refer to a $B \in \Omega'$ with $z(B)>0$.
Of course, since $M=\emptyset$, $E_B$ no longer satisfies the structural definition of a blossom, i.e., 
consisting of an odd-length cycle of vertices/subblossoms that alternates between $M$ and $E\backslash M$.}
It will be guaranteed that $\sum_{\operatorname{Large}\, B\in \Omega'} z(B) = O(n)$, 
so liquidating all large blossoms increases $yz(V)$ by a tolerable $O(n)$.
After liquidating large blossoms, but before liquidating small blossoms, we {\em reweight} the graph, setting
\begin{align*}
w(u,v) &\leftarrow w(u,v) - y(u) - y(v)		& \mbox{for each edge $(u,v)$}\\
y(u)	&\leftarrow 0					& \mbox{for each vertex $u$}
\end{align*}
Reweighting is a conceptual trick that simplifies the presentation and some proofs.
A practical implementation would simulate this step without actually modifying the edge weights.

Liquidating small blossoms increases 
$y(u)$ from 0 to $\sum_{\operatorname{Small}\, B\in\Omega',\, u\in B} z(B)/2$, 
which {\em temporarily} destroys the property that $yz(V)$ is within $O(n)$ of optimal.
Let $B'$ be a {\em maximal} former small blossom.  We repeatedly execute $\PQSearch(F)$
from the set $F$ of free vertices in $B'$ with maximum $y$-value $Y$
until one of three events occurs 
(i) $|F|$ decreases, because an augmenting path is discovered, 
(ii) $|F|$ increases because $Y-Y'$ dual adjustments have been performed, 
where $Y'$ is the 2nd largest $y$-value of a free vertex in $B'$,
or (iii) the $y$-values of all vertices in $F$ become zero.
Because $B'$ is small there can be at most $O(|B'|) = O(\tau)$ executions that stop due to (i) and (ii).
We prove that conducting Edmonds' searches in exactly this way has two useful properties.
First, no edge straddling $B'$ ever becomes eligible, so the search is {\em confined to} the subgraph induced by $B'$,		
and second, when the $y$-values of free vertices are zero, $yz(V)$ is restored to be within $O(n)$ of optimal.
Each of these Edmonds' searches can form new weighted blossoms, but because of the first property they
all must be {\em small}.  The second property is essential for the next step: efficiently finding a near-perfect matching.

After inherited blossoms have been dealt with, we switch from satisfying Property~\ref{prop:CS} to Property~\ref{prop:RCS}   
and call $\SearchOne(F)$ $\tau$ times using eligibility Criterion~\ref{crit2}, where $F$ is the set of all free vertices.
We prove that this leaves at most $O(n/\tau)$ free vertices.  
Note that large blossoms can only be introduced during the calls to $\SearchOne$.  Since we only perform
$\tau$ dual adjustments, we can bound the sum of $z$-values of all new large blossoms by $O(n)$.

To end the $i$th scale we artificially match up all free vertices with dummy vertices and zero-weight edges, 
yielding a perfect matching.
Thus, the input graph $G_{i+1}$ to scale $i+1$ is always $G$ supplemented with 
some dummy pendants (degree one vertices) that have accrued over scales 1 through $i$.  Pendants can never appear in a blossom.

After the last scale, we have a perfect matching $M$ in $G_{\ceil{\log((\f{n}{2}+1)N)}}$, 
which includes up to $O(n/\tau)\cdot \log((\f{n}{2}+1)N)$ dummy vertices acquired over all the scales.
We delete all dummy vertices and repeatedly call $\PQSearch(F)$ on the current set of free vertices
until $F=\emptyset$.  Since these calls make many dual adjustments, we switch from Criterion~\ref{crit2} (which is only suitable
for use with $\SearchOne$) to Criterion~\ref{crit3} of eligibility.
Each call to $\PQSearch$ matches at least two vertices so the total time for finalization is 
$O(\Edmonds\cdot (n/\tau)\log(nN))$.  
See Figure~\ref{fig:liquidationist} for a compact summary of the whole algorithm.

\begin{figure}
\centering
\framebox{
\begin{minipage}{6in}
\small
\noindent $\Liquidationist(G, \hat{w})$ 

\begin{itemize}[itemsep=0ex]

	\item $G_0 \leftarrow G, y \leftarrow 0, z \leftarrow 0, w \leftarrow 0$, $\Omega\leftarrow \emptyset$.
	
	\item For scales $i=1,\cdots,\ceil{\log((\f{n}{2}+1)N)}$, execute steps {\bf Initialization}--{\bf Perfection}.

	\item[] {\bf Initialization}
	\begin{enumerate}[topsep=0pt,itemsep=0ex,partopsep=0ex,parsep=0ex]
	\item Set $G_i \leftarrow G_{i-1}$, $y' \leftarrow y$, $z' \leftarrow z$, $w'\leftarrow w$, $M \leftarrow \emptyset$, $\Omega' \leftarrow \Omega$, and $\Omega \leftarrow \emptyset$.
	\end{enumerate}

	\item[] {\bf Scaling}
	\begin{enumerate}[topsep=0pt,itemsep=0ex,partopsep=0ex,parsep=0ex]
		\setcounter{enumi}{1}
		\item \label{item:s2} Set $w(e) \leftarrow 2(w'(e)+\mbox{(the $i^{\operatorname{th}}$ bit of $\bar{w}(e)$)})$
						for each edge $e$,  set $y(u) \leftarrow 2y'(u)+3$ for each vertex $u$, and $z(B') \leftarrow 2z'(B')$ for each $B'\in \Omega'$.
	\end{enumerate}

	\item[] {\bf Large Blossom Liquidation and Reweighting}
	\begin{enumerate}[topsep=0pt,itemsep=0ex,partopsep=0ex,parsep=0ex]\setcounter{enumi}{2}
		\item \label{item:s3} $\Liquidate(B')$ for each large $B'\in \Omega'$.

		\item \label{item:s4} Reweight the graph: 
		\vcm[-.3]
		\begin{align*}
			w(u,v) &\leftarrow w(u,v)-y(u)-y(v) && \mbox{for each edge $(u,v) \in E$} \\
		 	y(u)   &\leftarrow 0             && \mbox{for each vertex $u \in V$}
		\end{align*}

	\end{enumerate}

	\item[] {\bf Small Blossom Liquidation}
	\begin{enumerate}[topsep=0pt,itemsep=0ex,partopsep=0ex,parsep=0ex]\setcounter{enumi}{4}
	\item \label{item:s5} $\Liquidate(B')$ for each small $B'\in \Omega'$.

	\item \label{item:s6} For each {\em maximal} old small blossom $B'$:\\
	\hcm[.5] While $\max\{y(u) \;|\; u\in B' \mbox{ is free}\} > 0$,
			\vcm[-.2]
			\begin{align*} 
			Y &\leftarrow \max\{y(u) \,|\, u\in B' \mbox{ is free}\}\\
			F &\leftarrow \{u\in B' \mbox{ is free} \,|\, y(u) = Y\}\\
			Y' &\leftarrow \max\{0, \max\{y(u) \,|\, u\in B'\backslash F \mbox{ is free}\}\} \hcm[3.9]
			\end{align*}
			\hcm[1.1] Run $\PQSearch(F)$ (Criterion~\ref{crit1}) until an augmenting path is found and \\		
			\hcm[1.1] the matching is augmented or $Y-Y'$ dual adjustments have been performed. \\		

	\end{enumerate}

	\item[] {\bf Free Vertex Reduction}

	\begin{enumerate}[topsep=0pt,itemsep=0ex,partopsep=0ex,parsep=0ex]\setcounter{enumi}{6}
		\item\label{item:s7} Run $\SearchOne(F)$ (Criterion~\ref{crit2}) $\tau$ times, where $F$ is the set of free vertices.
	\end{enumerate}

	\item[] {\bf Perfection}
	\begin{enumerate}[topsep=0pt,itemsep=0ex,partopsep=0ex,parsep=0ex]\setcounter{enumi}{7}
		\item Delete all free dummy vertices.  For each remaining free vertex $u$, create a dummy $\hat{u}$ with $y(\hat{u})=\tau$ 
		and a zero-weight matched edge $(u,\hat{u})\in M$.\\
	\end{enumerate}

	\item {\bf Finalization}\\
	Delete all dummy vertices from $G_{\ceil{\log((\f{n}{2}+1)N)}}$.  
	Repeatedly call $\PQSearch(F)$ (Criterion~\ref{crit3}) on the set $F$ of free vertices until $F=\emptyset$.
\end{itemize}
\end{minipage}
}
\caption{\label{fig:liquidationist}}
\end{figure}

\subsection{Correctness}\label{sec:correct}

We first show that rescaling $w,y,z$ at the beginning of a scale restores Property~\ref{prop:CS} (except for Active Blossoms)
assuming Property~\ref{prop:RCS} held at the end of the previous scale.

\begin{lemma}\label{thm:negative}
Consider an edge $e \in E(G_i)$ at scale $i$.
\begin{itemize}
	\item After Step \ref{item:s2} (Scaling), $w(e)\leq yz(e)$. Moreover, if $e \in M' \cup \bigcup_{B'\in \Omega'} E_{B'}$ then $w(e)\geq yz(e)-6$. 
	(In the first scale, $w(e)\geq yz(e)-6$ for every $e$.)

	\item After Step \ref{item:s4} (Large Blossom Liquidation and Reweighting), $w(e)$ is even for all $e\in E(G_i)$
	and $y(u) = 0$ for all $y\in V(G_i)$.  Furthermore,
\[
w(e) \leq yz(e) = \sum_{\substack{\operatorname{Small}\, B'\in\Omega' \;:\\ e\in E(B')}} z(B').  
\]
\end{itemize}
Therefore, after Large Blossom Liquidation and Reweighting,
$(M,\Omega,y,z,w)$ satisfy Property \ref{prop:CS}, excluding Active Blossoms.
\end{lemma}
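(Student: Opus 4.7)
The plan is to verify each bullet by unfolding the definitions of the rescaled duals and weights and invoking Property~\ref{prop:RCS} from the end of scale $i-1$. The central identity to keep in hand is that Step~\ref{item:s2} produces $yz(e) = 2\,y'z'(e) + 6$ on every edge $e$, because each vertex contributes $2y'(u)+3$ and each blossom contributes $2z'(B')$ to the synthetic dual.

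For the first bullet, I would combine this identity with $w(e) = 2w'(e)+2b$, where $b \in \{0,1\}$ is the current bit of $\bar{w}(e)$. Near domination ($y'z'(e) \geq w'(e)-2$) immediately gives $yz(e) \geq 2w'(e)+2 \geq w(e)$. Near tightness ($y'z'(e) \leq w'(e)$ on $M' \cup \bigcup_{B' \in \Omega'} E_{B'}$) gives $yz(e) \leq 2w'(e)+6 \leq w(e)+6$, equivalently $w(e) \geq yz(e)-6$. The base case is handled by substituting $y' \equiv 0$ and $z' \equiv 0$, so that $yz(e) = 6$ and the bound is automatic for every $e$.

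For the second bullet I would treat Large Blossom Liquidation (Step~\ref{item:s3}) and Reweighting (Step~\ref{item:s4}) in sequence. Liquidating $B'$ is analysed edge by edge: if an edge lies entirely inside $B'$, the $+z(B')$ added to $y(u)+y(v)$ is exactly cancelled by the drop in the $z$-contribution, so $yz(e)$ is unchanged; if it lies entirely outside $B'$, nothing changes; if it straddles $B'$, $yz(e)$ grows by $z(B')/2 \geq 0$. So domination survives all the liquidations. Reweighting then subtracts $y(u)+y(v)$ from both $w(u,v)$ and $yz(u,v)$ simultaneously, preserving the inequality, and the identity $yz(e) = \sum_{B' \in \Omega' :\, e \in E(B')} z(B')$ is immediate because $y \equiv 0$ afterward and every already-liquidated large blossom has $z(B')=0$.

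For the closing ``Therefore'' clause I need Property~\ref{prop:CS} parts (1), (3), and (4). Granularity on $z$ holds because each surviving $z(B')=2z'(B')$ is a nonnegative multiple of four; granularity on $y$ is trivial after Reweighting. Domination has just been established. Tightness on $M \cup \bigcup_{B \in \Omega} E_B$ is vacuous because Initialization sets $M = \emptyset$ and $\Omega = \emptyset$. The one real technicality, and in my view the main obstacle, is the evenness of $w(e)$ after Reweighting: I would observe that Step~\ref{item:s2} leaves each $y(u) = 2y'(u)+3$ odd, and each large-blossom liquidation adds $z(B')/2 = z'(B')$ to $y(u)$, an even integer by Property~\ref{prop:CS}(1) granularity inherited from the previous scale. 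Hence $y(u)$ remains odd throughout, $y(u)+y(v)$ stays even, and the even weights produced by Step~\ref{item:s2} remain even after subtracting $y(u)+y(v)$.
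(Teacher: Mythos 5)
Your proposal is correct and follows essentially the same route as the paper's proof: the identity $yz(e)=2y'z'(e)+6$ combined with near domination and near tightness for the first bullet, the edge-by-edge effect of liquidation plus the observation that reweighting shifts $w(e)$ and $yz(e)$ by the same amount for the second, and the odd-$y$/even-$z'$ parity argument for evenness of $w$ after Step~\ref{item:s4}. The only difference is cosmetic: you spell out the liquidation case analysis that the paper delegates to its earlier discussion of \Liquidate, and you make explicit that tightness is vacuous since $M=\emptyset$ and $\Omega=\emptyset$.
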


\begin{proof}
At the end of the previous scale, by Property~\ref{prop:RCS}(near domination), $y'z'(e)\geq w'(e)-2$.  
After the Scaling step, 
\[
yz(e)=2y'z'(e)+6   \geq   2w'(e)+2   \geq  w(e).
\]
If $e \in M' \cup \bigcup_{B'\in \Omega'} E_{B'}$ was an old matching or blossom edge then
\[
yz(e)=2y'z'(e)+6\leq 2w'(e)+6\leq w(e)+6.
\]
In the first scale, $yz(e)=6$ and $w(e) \in\{0,2\}$.
Step \ref{item:s3} will increase some $yz$-values and $w(e)\leq yz(e)$ will be maintained. 
After Step \ref{item:s4} (reweighting), 
$w(u,v)$ will be reduced by $y(u)+y(v)$, so
\[
w(u,v) \leq \sum_{\substack{\operatorname{Small}\, B'\in\Omega':\\(u,v)\in E(B')}} z(B'). 
\]
From Property~\ref{prop:RCS}(1) (granularity) in the previous scale, after Step \ref{item:s2}
all $y$-values are odd and $z$-values are multiples of 4.
Therefore $y$-values remain odd after Step \ref{item:s3}.  
Since $w(u,v)$ is even initially, it remains even after subtracting off odd $y(u),y(v)$ in Step \ref{item:s4}.
\end{proof}

\begin{lemma}\label{lem:smallblossomelimination}
After Step \ref{item:s5} in Small Blossom Liquidation, we have $w(u,v)\leq 2 \cdot \min\{y(u),y(v)\}$ for all edges $(u,v)$,		
hence, $w(u,v)\leq y(u)+y(v)$. Furthermore, Property \ref{prop:CS} holds after Small Blossom Liquidation.
\end{lemma}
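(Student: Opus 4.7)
The plan is to split the two claims: establish $w(u,v)\le 2\min\{y(u),y(v)\}$ right after Step~\ref{item:s5}, and then verify each clause of Property~\ref{prop:CS} at the end of Small Blossom Liquidation (i.e., after Step~\ref{item:s6}).

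For the inequality I would appeal to Lemma~\ref{thm:negative}: immediately after Step~\ref{item:s4} we have $y\equiv 0$ and
\[
w(u,v)\;\le\;yz(u,v)\;=\;\sum_{B'\in\Omega':\,(u,v)\in E(B')}z(B'),
\]
and the sum effectively ranges over the \emph{small} $B'\in\Omega'$, since Step~\ref{item:s3} already zeroed $z$ on the large ones. Step~\ref{item:s5} calls $\Liquidate(B')$ on every small $B'\in\Omega'$, adding $z(B')/2$ to $y(u)$ for every $u\in B'$. Starting from $y\equiv 0$, this gives $y(u)=\sum_{B'\text{ small},\,u\in B'}z(B')/2$ at the end of Step~\ref{item:s5}. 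Restricting the sum defining $y(u)$ to those $B'$ that additionally contain $v$ yields
\[
2y(u)\;\ge\;\sum_{B'\text{ small},\,\{u,v\}\subseteq B'}z(B')\;\ge\;w(u,v),
\]
and symmetrically $2y(v)\ge w(u,v)$, so $w(u,v)\le 2\min\{y(u),y(v)\}\le y(u)+y(v)$.

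Next I would verify Property~\ref{prop:CS} at the end of Step~\ref{item:s5}. Three of the four clauses fall out immediately: $M=\emptyset$ and $\Omega=\emptyset$ make \emph{Active Blossoms} and \emph{Tightness} vacuous, while liquidating a small $B'$ leaves $yz(e)$ unchanged on edges with both or neither endpoint in $B'$ and raises it by $z(B')/2\ge 0$ on edges straddling $B'$, so \emph{Domination} only improves from what Lemma~\ref{thm:negative} guarantees after Step~\ref{item:s4}. The delicate clause is \emph{Granularity}: since $z'(B')$ from the previous scale is a nonnegative even integer, Step~\ref{item:s2} makes each initial $z(B')$ a multiple of $4$, so the increments $z(B')/2$ added in Step~\ref{item:s5} are even; combined with $y$ being reset to $0$ in Step~\ref{item:s4}, every $y$-value at the end of Step~\ref{item:s5} is a nonnegative even integer and every $z$-value is $0$.

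Finally, for Property~\ref{prop:CS} after Step~\ref{item:s6}, each inner call uses $F=\{u\in B':u\text{ is free and }y(u)=Y\}$, so all $F$-vertices share a common $y$-value and hence a common parity, meeting the precondition of $\PQSearch$. Lemma~\ref{lem:offendCS} then guarantees that every invocation (under Criterion~\ref{crit1}) preserves Property~\ref{prop:CS}; since nothing between successive calls touches $M$, $\Omega$, $y$, or $z$, the property persists throughout Step~\ref{item:s6}. The only piece of the argument that requires real care is the parity/granularity bookkeeping across Steps~\ref{item:s2}--\ref{item:s5}; the remaining clauses are either vacuous or monotone consequences of liquidation.
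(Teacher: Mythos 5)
Your proposal is correct and follows essentially the same route as the paper: it derives $y(u)=\sum_{\operatorname{Small}\,B'\ni u}z(B')/2$ after Step~\ref{item:s5}, lower-bounds this by the sum over small blossoms containing the edge $(u,v)$, invokes Lemma~\ref{thm:negative} to get $w(u,v)\le 2y(u)$, observes that $z=0$ and $\Omega=\emptyset$ make the remaining clauses of Property~\ref{prop:CS} hold, and then applies Lemma~\ref{lem:offendCS} to the equal-$y$-value searches of Step~\ref{item:s6}. The extra granularity/parity bookkeeping you include is a correct (and slightly more explicit) elaboration of what the paper delegates to Lemma~\ref{thm:negative}.
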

\begin{proof}
Fix any edge $(u,v)$.  According to Lemma \ref{thm:negative},
after Step \ref{item:s5} we have
\[
y(u)=\sum_{\substack{\operatorname{Small}\, B'\in\Omega':\\u\in B'}}z(B')/2 
\;\geq\; \sum_{\substack{\operatorname{Small}\, B'\in\Omega':\\(u,v)\in E(B')}}z(B')/2 
\;\geq\; w(u,v)/2.
\]
Therefore, $w(u,v)\leq 2y(u)$ and by symmetry, $w(u,v)\leq 2y(v)$. After Step \ref{item:s5}, $z=0$ and $\Omega=\emptyset$,
so Property~\ref{prop:CS} (including Active Blossoms) holds. 
In Step \ref{item:s6} of Small Blossom Liquidation, $\PQSearch$ is always searching from the free vertices with the 
same $y$-values and the edge weights are even. Therefore, by Lemma~\ref{lem:offendCS}, 
Property \ref{prop:CS} holds afterwards.
\end{proof}

\begin{lemma}\label{lem:liquidationist-correct}
The \Liquidationist{} algorithm returns a maximum weight perfect matching of $G$.		
\end{lemma}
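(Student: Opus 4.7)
The plan is to prove correctness by induction on the scales, verifying that the appropriate complementary slackness property (\ref{prop:CS} through Small Blossom Liquidation, then \ref{prop:RCS} once we start \SearchOne) is maintained throughout, and then applying Lemma~\ref{thm:approx} together with the $\bar{w}$-amplification trick to pass from an $n$-additive approximation in $w$ to an exact optimum in $\hat{w}$. The dummy-vertex mechanism must be handled carefully so that its removal in Finalization leaves a valid set of duals on $G$ itself.

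First I would verify the invariants within a single scale. Lemma~\ref{thm:negative} already establishes that Property~\ref{prop:CS} (minus Active Blossoms) holds after Scaling, Large Blossom Liquidation, and Reweighting, and Lemma~\ref{lem:smallblossomelimination} does the same after Step~\ref{item:s5}. The only nontrivial remaining piece of Step~\ref{item:s6} is the \emph{confinement claim}: the \PQSearch{} calls initiated from $F \subseteq B'$ never make an edge straddling $B'$ eligible, so the search and any resulting new blossoms stay inside $B'$. This follows because after reweighting, for any edge $(u,v)$ with $u\in B', v\notin B'$, we have $y(u) \geq w(u,v)/2$ (by Lemma~\ref{lem:smallblossomelimination}) while $y(v)\geq 0$, so $yz(u,v)\geq w(u,v)$ holds strictly once we account for the parity constraint used in Lemma~\ref{lem:offendCS}, and decrementing $y(u)$ only bridges this gap once $y(u)$ has reached the level of the second-largest free $y$-value in $B'$, at which point $F$ grows before the edge can become eligible. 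Combined with Lemma~\ref{lem:offendCS}, this shows Property~\ref{prop:CS} survives Small Blossom Liquidation.

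Next I would handle Free Vertex Reduction. Since we finish Small Blossom Liquidation with Property~\ref{prop:CS} in force, Property~\ref{prop:RCS} holds a fortiori, and Lemma~\ref{lem:offendRCS} implies that $\tau$ iterations of $\SearchOne(F)$ under Criterion~\ref{crit2} preserve Property~\ref{prop:RCS}. The key numerical fact is that these $\tau$ blocking-path phases leave at most $O(n/\tau)$ free vertices: by Lemma~\ref{lem:no-eligible-path}, after the $t$-th call no eligible augmenting path of length less than $t$ can exist (a Hopcroft–Karp-style shortest-augmenting-path argument applied in $\GeligC$), so after $\tau$ phases every remaining augmenting path has $\Omega(\tau)$ vertices, forcing at most $O(n/\tau)$ surviving free vertices. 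The Perfection step then pairs each free vertex with a fresh dummy via a zero-weight matched edge with $y(\hat u)=\tau$; this choice was made precisely so that near tightness holds on the dummy edge after Reweighting in the next scale, which I would check by a direct computation of $yz(u,\hat u)$.

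Finally I would address Finalization and global optimality. After the last scale we hold a perfect matching in $G_{\lceil\log((n/2+1)N)\rceil}$ satisfying Property~\ref{prop:RCS} with $w = 2\bar w$. Deleting all dummy vertices removes $O((n/\tau)\log(nN))$ free real vertices (plus their zero-weight incident dummy edges) without disturbing $y,z$ or the matching on the remaining real vertices, so Property~\ref{prop:RCS} still holds on $G$. Each subsequent call to $\PQSearch(F)$ under Criterion~\ref{crit3} preserves Property~\ref{prop:RCS} by Lemma~\ref{lem:offendRCS} and reduces $|F|$ by at least $2$, so termination with a perfect matching $M$ of $G$ is guaranteed. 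Applying Lemma~\ref{thm:approx} gives $w(M)\geq w(M^*)-n$, i.e.\ $\bar w(M)\geq \bar w(M^*)-n/2$; since $\bar w = (n/2+1)\hat w$, the gap $\bar w(M^*)-\bar w(M)$ is a nonnegative multiple of $n/2+1$ that is strictly less than $n/2+1$, hence zero, so $\hat w(M)=\hat w(M^*)$.

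The main obstacle is the confinement argument in Step~\ref{item:s6} together with the companion claim that once \PQSearch{} drives the $y$-values of all free vertices in a maximal small blossom $B'$ down to $0$, the global ``$yz(V)$ within $O(n)$ of optimum'' invariant is restored. Making this precise requires a careful parity and potential-function accounting using the fact that the initial $y$-values on $B'$ equal the accumulated liquidated $z$-mass. Everything else is a mechanical application of Lemmas~\ref{lem:offendCS}, \ref{lem:offendRCS}, \ref{thm:negative}, and \ref{lem:smallblossomelimination}, plus the shortest-augmenting-path length bound for \SearchOne.
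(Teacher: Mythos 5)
Your core argument is the same as the paper's: maintain Property~\ref{prop:CS} through Small Blossom Liquidation (Lemmas~\ref{thm:negative}, \ref{lem:smallblossomelimination}, \ref{lem:offendCS}), switch to Property~\ref{prop:RCS} for Free Vertex Reduction (Lemma~\ref{lem:offendRCS}), observe that the dummy edges in Perfection are tight, and finish with the $\bar w=(\fr{n}{2}+1)\hat w$ amplification to convert the additive-$n$ guarantee of Lemma~\ref{thm:approx} into exact optimality. Two remarks. First, you fold a good deal of the \emph{running-time} analysis into this correctness proof: the confinement claim for Step~\ref{item:s6} and the $O(n/\tau)$ bound on free vertices are the content of Lemmas~\ref{thm:z-small} and \ref{thm:number-free} and are not needed to prove that the returned matching is optimal (Finalization restores perfection no matter how many vertices are free). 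More importantly, the argument you propose for the free-vertex bound --- a Hopcroft--Karp shortest-augmenting-path count in $\GeligC$ --- is not sound here: in the weighted setting the augmenting paths of $M\oplus M^{\max}$ need not be \emph{eligible}, so ``all eligible augmenting paths have length $\Omega(\tau)$'' does not yield ``at most $O(n/\tau)$ free vertices.'' The paper instead derives this bound from LP duality (Lemmas~\ref{lem:comparematching} and~\ref{thm:number-free}): each surviving free vertex loses $\tau$ units of $y$-value, while the dual objective can drop by only $O(n)$. Second, you note the switch to Criterion~\ref{crit3} in Finalization but not \emph{why} it is made: edges shrunk into blossoms under Criteria~\ref{crit1} or \ref{crit2} may end up with $yz(e)=w(e)-2$ after augmentations, and Criterion~\ref{crit3} is chosen precisely so that these edges remain eligible when their blossoms dissolve mid-search; this is the one genuinely delicate point in the paper's proof and deserves to be stated rather than absorbed into a citation of Lemma~\ref{lem:offendRCS}.
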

\begin{proof}
First we claim that at the end of each scale $i$, $M$ is a perfect matching in $G_i$ and Property \ref{prop:RCS} is satisfied. By Lemma \ref{lem:smallblossomelimination}, Property \ref{prop:CS} is satisfied after the Small Blossom Liquidation step. The calls to $\SearchOne$ in the Free Vertex Reduction step always search from free vertices with the same $y$-values. Therefore, by Lemma~\ref{lem:offendRCS}, Property \ref{prop:RCS} holds afterwards. The perfection step adds/deletes dummy free vertices and edges to make the matching $M$ perfect. The newly added edges have $w(e) = yz(e)$, and so Property \ref{prop:RCS} is maintained at the end of scale $i$.

Therefore, Property \ref{prop:RCS} is satisfied at the end of the last scale $\lceil\log((\fr{n}{2}+1)N)\rceil$.  
Consider the shrunken blossom edges at this point in the algorithm.  
Each edge $e$ was made a blossom edge when it was eligible according to Criterion~\ref{crit1} (in Step~\ref{item:s6}) or Criterion~\ref{crit2} (in Step~\ref{item:s7})
and may have participated in augmenting paths while its blossom was still shrunken.  Thus, all we can claim 
is that $yz(e) - w(e) \in \{0,-2\}$.  In the calls to $\PQSearch$ in the Finalization step we switch to eligibility 
Criterion~\ref{crit3} in order to ensure that edges inside shrunken blossoms {\em remain} eligible whenever 
the blossoms are dissolved in the course of the search.\footnote{Alternatively, we could use Criterion~\ref{crit2} but allow all formerly shrunken blossom edges to be automatically eligible.}
By Lemma~\ref{lem:offendRCS}, each call to $\PQSearch$ maintains Property \ref{prop:RCS} while making the matching perfect. 
After Finalization, $w(M)\geq w(M^*)-n$.  
Note that in the last scale ${w}(e)=2\bar{w}(e)$ for each edge $e$, so $\bar{w}(M)\geq \bar{w}(M^*)-n/2$. 
By definition of $\bar{w}$, $\bar{w}(M)$ is a multiple of $\fr{n}{2}+1$, so $M$ maximizes $\bar{w}(M)$ 
and hence $\hat{w}(M)$ as well.
\end{proof}

\subsection{Running time}\label{sec:time}

Next, we analyze the running time. 

\begin{lemma}\label{thm:z-small}
In Step $\ref{item:s6}$, we only need to consider the edges within small blossoms of the previous scale.
The total time needed for Step $\ref{item:s6}$ in one scale is 
$O((m+n\log n)\tau)$ (using~\cite{Gabow16})
or $O(m\sqrt{\log\log n}\cdot \tau)$ w.h.p. (see Section~\ref{sect:ImplementingEdmonds}).
\end{lemma}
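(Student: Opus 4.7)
My plan is to establish two claims: (a) during Step~\ref{item:s6} for each maximal old small blossom $B'$, the $\PQSearch$ invocations touch only edges in $E(B')$; and (b) using (a) plus an iteration count, bound the aggregate cost.

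For (a), I argue that no edge $e=(u,v)$ with $u\in B'$, $v\notin B'$ is ever eligible under Criterion~\ref{crit1}. Lemma~\ref{thm:negative} bounds $w(e)\le\sum_{B''\in\Omega':\,e\in E(B'')} z(B'')$ after reweighting; by laminarity and the maximality of $B'$ among small blossoms, any such $B''$ strictly contains $B'$, hence is large, hence was liquidated in Step~\ref{item:s3} with $z(B'')=0$, so $w(e)\le 0$. After Step~\ref{item:s5} we have $y(u)\ge z(B')/2>0$ (otherwise the loop for $B'$ is skipped) and $y(v)\ge 0$; inductively, while the search stays inside $B'$, $y(v)$ is untouched, so $yz(e)=y(u)+y(v)\ge y(u)$. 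The outer-loop condition, together with each $\PQSearch$ halting after at most $Y-Y'$ dual adjustments, keeps $y(u)>0$ for every $u\in B'$ that the search can reach---directly for free vertices, and via Property~\ref{prop:CS} (preserved by Lemma~\ref{lem:offendCS}) plus matched-edge tightness for matched $\Vout$ vertices. Hence $yz(e)-w(e)>0$ throughout and $e$ is never eligible.

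For (b), I bound the outer iterations per $B'$ by $O(|B'|)$. Each iteration of the while loop terminates in one of three ways: (i) an augmenting path is found, decreasing the free-vertex count in $B'$ by at least two; (ii) exactly $Y-Y'$ dual adjustments occur, merging the top $y$-level into the next and decreasing the number of distinct positive free $y$-levels by at least one; or (iii) the $y$-values of $F$ drop to zero, ending the outer loop. Since the initial free count and the initial number of positive $y$-levels are each $\le |B'|$, we get $O(|B'|)=O(\tau)$ iterations for $B'$. By (a), each iteration runs $\PQSearch$ on the subgraph induced by $B'$, costing $O(m(B')+|B'|\log|B'|)$ with the implementation of~\cite{G90} or $O(m(B')\sqrt{\log\log n})$ w.h.p.\ with the one from Section~\ref{sect:ImplementingEdmonds}.

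Summing over the maximal old small blossoms $B'$, which are pairwise vertex-disjoint (so $\sum|B'|\le n$ and $\sum m(B')\le m$), and using $|B'|\le \tau$:
\[
\sum_{B'}|B'|\cdot\bigl(m(B')+|B'|\log|B'|\bigr)\;\le\;\tau\sum_{B'} m(B')\;+\;\tau\log n\sum_{B'}|B'|\;\le\;\tau m+\tau n\log n,
\]
yielding the deterministic bound $O((m+n\log n)\tau)$; the identical summation with $m(B')\sqrt{\log\log n}$ in place of $m(B')+|B'|\log|B'|$ gives the w.h.p.\ bound $O(m\sqrt{\log\log n}\cdot\tau)$. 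The main obstacle is the confinement claim (a): the easy half, $w(e)\le 0$ on straddling edges, is immediate from Lemma~\ref{thm:negative} and laminarity, but ruling out the borderline case $w(e)=0$ with $yz(e)$ reaching $0$ mid-search is where the level-by-level scheduling of $\PQSearch$ (rate-limited to $Y-Y'$ adjustments per call rather than running unrestricted) does genuine work.
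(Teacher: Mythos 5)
Your decomposition (confinement to $E(B')$, then an $O(|B'|)$ bound on the number of $\PQSearch$ calls, then summation over disjoint maximal small blossoms) is the same as the paper's, and parts (b) and the final summation are correct. The gap is in (a), and it sits exactly where you yourself flag "the main obstacle": the claim that $y(u)>0$ for every $u\in B'$ that the search can reach is asserted, not proved. Your stated justification --- Property~\ref{prop:CS} plus tightness of the matched edge $(u,u')$ --- gives only $y(u)=w(u,u')-y(u')$, which is not a lower bound on $y(u)$ without further information about $w(u,u')$ and an \emph{upper} bound on $y(u')$; since inner vertices have their $y$-values \emph{increased} by dual adjustments, tightness of the matched edge by itself pushes in the wrong direction. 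In particular, nothing in your argument rules out that a vertex $u$ with $y(u)=0$ after Step~\ref{item:s5} is pulled into the search structure as an outer vertex while the loop is still running, at which point a straddling edge with $w=0$ to an outside vertex of $y$-value $0$ would be tight under Criterion~\ref{crit1} and the search would escape $B'$.

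The paper closes this by proving a stronger invariant by induction on dual adjustments: every vertex of $\Vin\cup\Vout$ has $y$-value at least $Y$, the \emph{current} maximum $y$-value of a free vertex. The engine of that induction is Lemma~\ref{lem:smallblossomelimination} applied to the \emph{intra-blossom} edge $(u,v)$ along which a new vertex $v$ enters the structure: tightness gives $y(u)+y(v)=w(u,v)\le 2y(v)$ (with $y(v)$ still at its Step~\ref{item:s5} value), hence $y(v)\ge y(u)\ge Y$; the invariant is then preserved because each dual adjustment lowers any $y$-value by at most $1$ while $Y$ drops by exactly $1$. You only invoke the $w\le 2\min\{y(u),y(v)\}$ bound in its degenerate form $w\le 0$ for straddling edges; using it on internal edges is precisely what makes the level-by-level scheduling do its work, and it is the missing step. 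With that invariant in hand, confinement is immediate ($yz(e)\ge Y>0\ge w(e)$ for straddling $e$ while the loop runs), and the rest of your proof goes through as written.
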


\begin{proof}
We first analyze the behavior of Step 6 assuming we only consider edges with both endpoints in the same maximal small blossom,
i.e., straddling edges are ignored.  Then we argue that straddling edges can never become eligible in Step 6, so a correct implementation
may ignore them.

Let $Y$ denote the \emph{current} maximum $y$-value of a free vertex in a maximal small blossom $B$ processed in Step 6.
We prove by induction that the $y$-values of all vertices in $\Vin \cup \Vout$ are at least $Y$.
The proof is by induction.
After Initialization, since $M=\emptyset$, we have $\Vin \cup \Vout = F$. 
Suppose that it is true before a dual adjustment in $\PQSearch(F)$. 
After the dual adjustment, the maximum $y$-value of a free vertex is now $Y-1$. 
Vertices can have their $y$-values decreased by at most one which may cause new edges straddling $\Vin \cup \Vout$ to become eligible.
Suppose that $(u,v) \in E(B)$ becomes eligible after the dual adjustment, adding $v\in B$ to the set $\Vin \cup \Vout$.
The eligibility criterion is {\em tightness} (Criterion~\ref{crit1}), so
we must have $w(u,v) = y(u) + y(v) \geq (Y-1) + y(v)$. 
On the other hand, by Lemma \ref{lem:smallblossomelimination} and since 
$y(v)$ has not been changed since Step \ref{item:s5}, we have $w(u,v) \leq 2y(v)$. 
Therefore, $y(v) \geq Y-1$.

Now consider an edge $(u,v)$ with $u$ and $v$ in different maximal small blossoms.  Just before we liquidate small blossoms,
$w(u,v) \le yz(u,v) = y(u) + y(v) = 0$; after liquidation we have $yz(u,v) = Y_u + Y_v$ where 
$Y_x = \sum_{\operatorname{Small}\, B\in \Omega' : x\in B} z(B)/2$.  As argued above, when we process $u$'s (resp., $v$'s)
maximal small blossom, $u$ (resp., $v$) will participate in at most $Y_u$ (resp., $Y_v$) 
dual adjustments before the free vertices' $y$-values reach zero.  Thus, $(u,v)$ will never become eligible during any search in Step 6.

Thus, we only consider the edge set $E(B')$ when processing $B'$ in Step 6. 
Sorting the $y$-values takes $O(n\log n)$ time. 	
Before $Y$ reaches 0, each call to $\PQSearch(F)$ takes $O(m(B') + n(B')\log n(B'))$ time (using \cite{Gabow16})
or $O(m(B')\sqrt{\log\log n(B')})$ time w.h.p. (Section~\ref{sect:ImplementingEdmonds})
and either matches at least two more 
vertices in $B'$ or enlarges the set $F$ of free vertices with maximum $y$-value in $B'$.  
Thus there can be at most $O(n(B')) = O(\tau)$ calls to $\PQSearch$ on $B'$.  
Summed over all maximal small $B'\in \Omega'$,
the total time for Step~\ref{item:s6} is $O((m+n\log n)\tau)$ or $O(m\sqrt{\log\log n}\cdot \tau)$ w.h.p.
\end{proof}

\begin{lemma}\label{thm:z-large}
The sum of $z$-values of large blossoms at the end of a scale is at most $2n$.
\end{lemma}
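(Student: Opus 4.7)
The plan is to introduce the potential $\Phi = \sum_{B \in \Omega,\, B \text{ large}} z(B)$ and show that $\Phi \le 2n$ by tracking how it evolves through one scale of $\Liquidationist$.

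First I will argue that $\Phi = 0$ at the start of the Free Vertex Reduction step. All inherited blossoms in $\Omega'$ are dissolved in Steps~\ref{item:s3} and~\ref{item:s5}, and by Lemma~\ref{thm:z-small} each $\PQSearch$ call in Step~\ref{item:s6} is confined to some maximal former small blossom $B' \in \Omega'$ with $|B'| < \tau$, so any new blossom formed there is contained in $B'$ and is itself small. Consequently every large blossom present at the end of the scale must have been created during Step~\ref{item:s7}.

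Next I will show that within Step~\ref{item:s7} only the dual adjustments can raise $\Phi$: Augmentation leaves $y$ and $z$ untouched; Blossom Shrinking introduces a new root blossom with $z = 0$ while leaving the $z$-values of its sub-blossoms unchanged; Blossom Dissolution only removes blossoms whose $z$-value has already reached $0$; and the subsequent Perfection step merely attaches zero-weight matched edges to dummy pendants. Then I will run the counting. Because root blossoms are vertex-disjoint and every large blossom has at least $\tau$ vertices, at most $n/\tau$ large root blossoms coexist at any moment. A single dual adjustment raises $z(B)$ by $2$ only for root blossoms $B \subseteq \Vout$, so $\Phi$ increases by at most $2n/\tau$ per adjustment. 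Since Step~\ref{item:s7} performs exactly $\tau$ calls to $\SearchOne$, each triggering one dual adjustment, the cumulative increase (and hence $\Phi$ at the end of the scale) is at most $\tau \cdot (2n/\tau) = 2n$.

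The one spot I expect to require care is the Blossom Shrinking case: when a new large root blossom $B$ encloses sub-blossoms that may themselves already be large with positive $z$-values, I must check that $\Phi$ neither loses the sub-blossoms' contributions (they remain in $\Omega$ as non-root blossoms with their $z$-values frozen) nor gains anything spurious from $B$ (which starts with $z(B) = 0$). Once this bookkeeping is verified, the monovariant argument goes through cleanly and yields the desired $\Phi \le 2n$ bound.
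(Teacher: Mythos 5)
Your proof is correct and takes essentially the same route as the paper's: large blossoms can only arise during Free Vertex Reduction, and each of the $\tau$ dual adjustments there increases the $z$-values of at most $n/\tau$ vertex-disjoint large root blossoms by $2$ each, giving the $2n$ bound. The only small point worth adding (the paper notes it parenthetically) is that the $n/\tau$ count uses the fact that dummy pendants can never belong to a blossom, so ``$n$'' refers to original vertices rather than $|V(G_i)|$.
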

\begin{proof}
By Lemma~\ref{thm:z-small}, 
Small Blossom Liquidation only operates on subgraphs of at most $\tau$ vertices and therefore cannot create any large blossoms. Every dual adjustment  performed in the Free Vertex Reduction step 
increases the $z$-values of at most $n / \tau$ large root blossoms, each by exactly 2.
(The dummy vertices introduced in the Perfection step of scales $1$ through $i-1$ are pendants and cannot be in any blossom.
Thus, the `$n$' here refers to the number of original vertices, not $|V(G_i)|$.)
There are at most $\tau$ dual adjustments in Free Vertex Reduction, which implies the lemma.
\end{proof}

\begin{lemma}\label{lem:comparematching}
Let $M'$ be the perfect matching obtained in the previous scale. Let $M''$ be any (not necessarily perfect) matching. 
After Large Blossom Liquidation we have $w(M'') \leq w(M') + 8n$.
\end{lemma}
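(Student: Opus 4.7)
The plan is to upper-bound $w(M'')$ via the domination inequality and lower-bound $w(M')$ via near tightness inherited from the end of scale $i-1$, then carefully track the dual objective $yz(V)$ through Scaling (Step~\ref{item:s2}), Large Blossom Liquidation (Step~\ref{item:s3}), and Reweighting (Step~\ref{item:s4}) to show that the gap is at most $3n_i + 2n$, where $n_i = |V(G_i)|$. Bounding $n_i \leq 2n$ (at most $O(n/\tau)$ dummies per scale, and then only $\log(nN)$ scales) then yields $8n$.

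First I would invoke Property~\ref{prop:RCS} at the end of scale $i-1$: $y'z'(e) \leq w'(e)$ for every $e \in M'$. After Scaling, $y(u) = 2y'(u)+3$ and $z(B) = 2z'(B)$ give $yz(e) = 2y'z'(e) + 6$, and since $w(e) = 2(w'(e)+b_i(e)) \geq 2w'(e)$, one gets $yz(e) \leq w(e) + 6$ for each $e \in M'$. The critical observation is that perfection of $M'$ combined with Active Blossoms inherited from scale $i-1$ (which forces $|M' \cap E(B')| = \lfloor|B'|/2\rfloor$ for every $B' \in \Omega'$) yields the identity $\sum_{e \in M'} yz(e) = yz(V)$. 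Summing the per-edge bound therefore gives $yz(V) \leq w(M') + 6|M'| = w(M') + 3n_i$ at the end of Step~\ref{item:s2}.

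Next I would account for Step~\ref{item:s3}: liquidating a blossom $B'$ changes $yz(V)$ by a net $+z(B')/2 = +z'(B')$ (the $\lfloor|B'|/2\rfloor \cdot z(B')$ blossom contribution disappears while $|B'|\cdot z(B')/2$ is added to the vertex duals), so by Lemma~\ref{thm:z-large} the total increase is at most $2n$, giving $yz(V) \leq w(M') + 3n_i + 2n$ after Step~\ref{item:s3}. Reweighting in Step~\ref{item:s4} preserves per-edge slack $yz(e) - w(e)$ and hence Lemma~\ref{thm:negative}'s domination; globally, letting $Y = \sum_u y(u)$ just before reweighting, $yz(V)$ decreases by exactly $Y$ while $w(M')$ also decreases by exactly $Y$ (because $M'$ is perfect, so every vertex is adjusted). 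Applying domination $w(M'') \leq yz(V)$ and letting the $Y$ terms cancel,
\[
w(M'') - w(M') \leq yz(V) - w(M') \leq 3n_i + 2n \leq 8n.
\]

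The main obstacle is the bookkeeping across the three updates in Steps~\ref{item:s2}--\ref{item:s4}, during which $y$, $z$, and $w$ all change. The key non-routine observation is the cancellation in the Reweighting step: $\sum_u y(u)$ appears in both $yz(V)$ (as the lost vertex-dual contribution) and in $w(M')$ (because every vertex is covered by the perfect matching $M'$). Without perfection of $M'$ this cancellation fails, which is precisely why the algorithm's Perfection step ensures that the matching handed to the next scale is perfect.
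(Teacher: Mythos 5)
Your proof is correct and follows essentially the same route as the paper's: both rest on the per-edge bounds of Lemma~\ref{thm:negative}, the bound $K\le 2n$ from Lemma~\ref{thm:z-large}, domination applied to $M''$, and the fact that reweighting shifts $yz(V)$ and $w(M')$ by the same amount $\sum_u y(u)$ because $M'$ is perfect. The only difference is presentational --- you track the global quantity $yz(V)$ through Steps~\ref{item:s2}--\ref{item:s4} and make the reweighting cancellation explicit, whereas the paper writes the same accounting as a single chain of inequalities on $\sum_{e\in M'}y_jz_j(e)$.
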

\begin{proof}
Consider the perfect matching $M'$ obtained in the previous scale, whose blossom set $\Omega'$ is partitioned into small and large blossoms.  (For the first scale, $M'$ is any perfect matching and $\Omega'=\emptyset$.)
Define $K$ to be the increase in the dual objective due to Large Blossom Liquidation,
\[
K \;=\;  \sum_{\operatorname{Large}\, B'\in \Omega'} z(B')/2 \;=\; \sum_{\operatorname{Large}\, B'\in \Omega'} z'(B').
\]
By Lemma~\ref{thm:z-large}, $K \leq 2n$.  Let $y_i,z_i$ denote the duals after Step $i$ of \Liquidationist. Let $w_0$ be the weight function before Step 4 (reweighting) 
and $w$ be the weight afterwards. 
We have:
\begin{align*}
w_0(M') & \geq  - 6|M'| + \sum_{e \in M'} y_2z_2 (e) && \mbox{Lemma \ref{thm:negative}}\\
&=  - 6 |M'| - K + \sum_{e \in M'} y_3z_3 (e) && \mbox{see above}\\
w(M') &\geq - 6|M'| - 2n + \sum_{e \in M'} y_4z_4(e) \\
& \geq  -6|M'|- 2n+\sum_{\operatorname{Small}\, B'\in \Omega'}z_4(B')\cdot \floor{|B'|/2} && \mbox{Since $y_4=0$}\\
&\geq -8n + \sum_{\operatorname{Small}\, B'\in \Omega'}z_4(B') \cdot \lfloor|B'|/2\rfloor && \mbox{(\#dummy vertices) $\leq n$ } \\
&\geq -8n + \sum_{e \in M''}\sum_{\substack{\operatorname{Small}\, B'\in\Omega': \\ e \in E(B') }}z_4(B') && \mbox{$|M'' \cap E(B')| \leq \lfloor |B'| / 2\rfloor$} \\
&\geq -8n + \sum_{e \in M''}w(e) \;=\; -8n + w(M'')  && \mbox{by Lemma \ref{thm:negative}}
\end{align*}
Observe that this Lemma would not be true as stated without the Reweighting step, which allows us to directly
compare the weight of perfect and imperfect matchings.
\end{proof}

The next lemma is stated in a more general fashion than is necessary so that we can apply it again later, in Section~\ref{sect:hybrid}.
In the \Liquidationist{} algorithm, after Step~\ref{item:s6} all $y$-values of free vertices are zero, so the
sum $\sum_{u \notin V(M)}y_6(u)$ seen below vanishes.

\begin{lemma}\label{thm:number-free} 
Let $y_6, z_6$ be the duals after Step~\ref{item:s6}, just before the Free Vertex Reduction step. 
Let $M$ be the matching after Free Vertex Reduction and $f$ be the number of free vertices with respect to $M$. Suppose that there exists some perfect matching $M'$ such that $w(M) \leq w(M') + 8n - \sum_{u \notin V(M)}y_6(u)$. Then, $f \leq 10n/\tau$.
\end{lemma}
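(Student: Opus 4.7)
My plan starts from a single key identity: any vertex $u$ still free at the end of Free Vertex Reduction satisfies
\begin{align*}
y(u) \;=\; y_6(u) - \tau.
\end{align*}
The reason is that augmentations only move vertices into $V(M)$, never out of it, so a surviving free vertex $u$ was free throughout all $\tau$ calls to $\SearchOne(F)$; being a free vertex it was a root of the alternating forest and hence in $\Vout$ in each call, so its $y$-value dropped by exactly $1$ in the single dual adjustment performed during that call.

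Next, I will run the standard LP bookkeeping in the style of Lemma~\ref{thm:approx}. Lemma~\ref{lem:offendRCS} tells us Property~\ref{prop:RCS} is preserved through the $\tau$ calls to $\SearchOne$, so near tightness applied to $M$ and near domination applied to the perfect matching $M'$ give
\begin{align*}
\sum_{e\in M} yz(e) \;\leq\; w(M), \qquad w(M') \;\leq\; \sum_{e\in M'}(yz(e)+2).
\end{align*}
Expanding both sums via the definition of $yz$, and using Active Blossoms ($|M\cap E(B)|=\floor{|B|/2}$) together with $|M'\cap E(B)|\leq \floor{|B|/2}$ to cancel the blossom terms, I expect to obtain
\begin{align*}
w(M') - w(M) \;\leq\; \sum_{v\notin V(M)} y(v) \;+\; |V(G_i)|,
\end{align*}
the final $|V(G_i)| = 2|M'|$ term absorbing the $+2$ slack in near domination. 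Combining with the hypothesis $w(M)\leq w(M')+8n-\sum_{u\notin V(M)} y_6(u)$ and rearranging yields
\begin{align*}
\sum_{v\notin V(M)} y(v) \;\geq\; \sum_{u\notin V(M)} y_6(u) \,-\, 8n \,-\, |V(G_i)|.
\end{align*}

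Now I substitute $y(v) = y_6(v) - \tau$ on the left for each of the $f$ free vertices, at which point the $y_6$ sums on both sides telescope away and I am left with $f\tau \leq 8n + |V(G_i)|$. To close out I still need $|V(G_i)| \leq 2n$, which I will argue by an easy induction on the scale index: each earlier scale contributes at most $f_j \leq 10n/\tau$ dummy pendants to $V(G_i)$ (applying this lemma inductively), and for the ultimate choice of $\tau$ these accumulated dummies, spread across $\log(nN)$ scales, stay comfortably below $n$. This delivers $f \leq 10n/\tau$.

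The main subtlety I anticipate is the constant bookkeeping that compresses $8n + |V(G_i)|$ down to $10n$: it relies on the inductive dummy-vertex bound, and also on being careful that the Active Blossom equality $|M \cap E(B)| = \floor{|B|/2}$ still holds at the end of Free Vertex Reduction---which it does, since $\SearchOne$ never leaves partially-matched blossoms behind. The $y(u) = y_6(u) - \tau$ identity and the Lemma~\ref{thm:approx}-style duality comparison are both routine once these invariants are in place.
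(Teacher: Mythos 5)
Your proposal is correct and follows essentially the same route as the paper's proof: the identity $y(u)=y_6(u)-\tau$ for surviving free vertices, the near-tightness/near-domination comparison of $M$ against the perfect matching $M'$ in the style of Lemma~\ref{thm:approx}, and the final cancellation against the hypothesis. The only cosmetic difference is that you carry $|V(G_i)|$ explicitly and bound it by $2n$ at the end via the accumulated-dummy count, whereas the paper folds the same bound in as ``$2|M'|\le 2n$ since the number of dummy vertices is at most $n$.''
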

\begin{proof}
Let  $y_7,z_7,\Omega$ denote the duals and blossom set {\em after} Free Vertex Reduction. 
By Property~\ref{prop:RCS},
\begin{align*}
w(M')&\leq \sum_{e \in M'} (y_7z_7(e) + 2) 				&& \mbox{near domination}\\
&= \sum_{u\in V} y_7(u)+\sum_{e\in M'}\sum_{\substack{B\in\Omega:\\e\in E(B)}} z_7(B)+2|M'| &&  \\
&\leq \sum_{u\in V} y_7(u)+\sum_{B\in\Omega}z_7(B)\cdot \floor{|B|/2}+2n 			&& \mbox{(\#dummy vertices) $\leq n$ }\\
&\leq \left(\sum_{u \in V(M)} y_7(u) + \sum_{B\in\Omega}z_7(B)\cdot \floor{|B|/2} \right) + \sum_{u \notin V(M)} y_7(u) +  2n\\
&= \sum_{e \in M}y_7z_7(e) + \sum_{u \notin V(M)} y_7(u) +  2n\\
&\leq w(M) + \sum_{u \notin V(M)} y_7(u) +  2n && \mbox{near tightness}\\
&= w(M) + \sum_{u \notin V(M)} y_6(u) - f \tau +  2n   && y_7(u) = y_6(u) - \tau\\
&\le w(M') + 10n - f\tau					&& \mbox{by assumption of $M'$}	
\end{align*}
Therefore, $f\tau \leq 10n$, and $f\leq 10n / \tau$.
\end{proof}

\begin{theorem}
The \Liquidationist{} algorithm runs in $O((m+n\log n)\sqrt{n}\log(nN))$ time,
or $O(m\sqrt{n\log\log n}\log(nN))$ time w.h.p.
\end{theorem}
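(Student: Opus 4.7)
The plan is to bound the cost of one scale, then the cost of Finalization, and then balance $\tau = \sqrt{n}$.

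For the per-scale cost, I would first observe that Initialization, Scaling, Large Blossom Liquidation, Reweighting, and the naive liquidation of small blossoms in Step~\ref{item:s5} each touch the graph and the $\Omega'$ forest a constant number of times, so they together cost $O(m+n)$. Step~\ref{item:s6} is already bounded by Lemma~\ref{thm:z-small} as $O((m+n\log n)\tau)$ deterministically or $O(m\sqrt{\log\log n}\,\tau)$ w.h.p. For Step~\ref{item:s7} I would argue that $\tau$ invocations of $\SearchOne$, each producing a maximal set of augmenting paths in $O(m)$ time via DFS and union-find, contribute only $O(m\tau)$. Perfection is $O(n)$. Thus a single scale costs $O((m+n\log n)\tau)$ (or the randomized analogue), and multiplying by the $O(\log(nN))$ scales gives the cumulative ``per-scale'' contribution.

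Next, to handle Finalization, I need to bound the total number of dummy pendants ever created. Here I would apply Lemma~\ref{thm:number-free}, taking $M'$ to be the perfect matching inherited from the previous scale (still a perfect matching in $G_i$ before Perfection) and $M$ the matching produced by Free Vertex Reduction. The crucial point is that Step~\ref{item:s6} is designed to continue until every free vertex has $y$-value zero, so the correction term $\sum_{u\notin V(M)} y_6(u)$ vanishes; since the weight function has been frozen since Step~\ref{item:s4}, Lemma~\ref{lem:comparematching} delivers the required hypothesis $w(M) \le w(M')+8n$. This gives at most $10n/\tau$ free vertices per scale, hence at most $D = O((n/\tau)\log(nN))$ dummy pendants created in total. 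In Finalization we delete all dummies and call $\PQSearch$ repeatedly; each call matches at least two formerly-dummy-partnered vertices along an augmenting path, so the number of calls is at most $D$, contributing $O((m+n\log n)\cdot (n/\tau)\log(nN))$ deterministically or $O(m\sqrt{\log\log n}\cdot(n/\tau)\log(nN))$ w.h.p.

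Adding the two contributions yields a total of $O\!\left((m+n\log n)(\tau + n/\tau)\log(nN)\right)$ deterministically, with $(m+n\log n)$ replaced by $m\sqrt{\log\log n}$ in the randomized bound. I would balance by setting $\tau = \sqrt{n}$, which minimizes $\tau + n/\tau$ and yields the claimed running times $O((m+n\log n)\sqrt{n}\log(nN))$ and $O(m\sqrt{n\log\log n}\log(nN))$ respectively. The main obstacle I anticipate is cleanly justifying the hypothesis of Lemma~\ref{thm:number-free} after Step~\ref{item:s7}; this is precisely the design reason that Step~\ref{item:s6} is engineered to drain every free vertex's $y$-value to zero before Free Vertex Reduction begins, and that the weight function is fixed by the reweighting in Step~\ref{item:s4}.
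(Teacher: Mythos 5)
Your proposal is correct and follows essentially the same route as the paper: bound each phase of a scale (with Lemma~\ref{thm:z-small} for Small Blossom Liquidation and $O(m)$ per $\SearchOne$ for Free Vertex Reduction), invoke Lemmas~\ref{lem:comparematching} and~\ref{thm:number-free} to get $O(n/\tau)$ free vertices per scale and hence $O((n/\tau)\log(nN))$ Edmonds searches in Finalization, and balance at $\tau=\sqrt{n}$. Your explicit justification that the correction term $\sum_{u\notin V(M)}y_6(u)$ vanishes is exactly the remark the paper makes just before Lemma~\ref{thm:number-free}, so nothing is missing.
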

\begin{proof}
Initialization, Scaling, and Large Blossom Liquidation take $O(n)$ time.
By Lemma~\ref{thm:z-small}, the time needed for Small Blossom Liquidation is 
$O(\Edmonds\cdot \tau)$, where $\Edmonds$ is the cost of one Edmonds' search.
Each iteration of $\SearchOne$ takes $O(m)$ time, so the time needed for Free Vertex Reduction is $O(m \tau)$.
By Lemmas~\ref{lem:comparematching} and~\ref{thm:number-free}, 
at most $10(n/\tau)\ceil{\log((\fr{n}{2}+1)N)}$ free vertices emerge after deleting dummy vertices. 
Since we have rescaled the weights many times, we cannot bound the weighted length of augmenting paths by 
$O(m)$.  The cost for rematching vertices in the Finalization step is $O(\Edmonds \cdot (n/\tau)\log(nN))$.
The total time is therefore
$O((m\tau + \Edmonds\cdot (\tau + n/\tau))\log(nN))$, which is minimized when $\tau=\sqrt{n}$.
Depending on the implementation of $\PQSearch$ this is
$O((m+n\log n)\sqrt{n}\log(nN))$ or $O(m\sqrt{n\log\log n}\log(nN))$ w.h.p.
\end{proof}

\section{The {\sc Hybrid} Algorithm}\label{sect:hybrid}

In this section, we describe an \MWPM{} algorithm called \Hybrid{} that runs in 
$O( m\sqrt{n} \log (nN) )$ time even on sparse graphs. 
In the \Liquidationist{} algorithm, the Small Blossom Liquidation and the Free Vertex Reduction steps 
contribute $O(\Edmonds\cdot \tau)$ and $O(m \tau)$ to the running time. If we could do these steps faster, then it would be possible for us to choose a slightly larger $\tau$, thereby reducing the number
of vertices that emerge free in the Finalization step.
The time needed to rematch these vertices is $O(\Edmonds\cdot (n / \tau)\log (nN))$, 
which is at most $O(m\sqrt{n}\log(nN))$ for, say, $\tau = \sqrt{n}\log n$.

The pseudocode for $\Hybrid$ is given in Figure~\ref{fig:hybrid}.  It differs from the \Liquidationist{} algorithm in two respects.  Rather than do Small Blossom Liquidation, it uses Gabow's method on each maximal small blossom $B'\in \Omega'$
in order to dissolve $B'$ and all its sub-blossoms.  (Lemma~\ref{lem:gabow} lists the salient properties of Gabow's algorithm; it is proved in Section \ref{sec:gabow}.)
The Free Vertex Reduction step is now done in two stages since we cannot afford to call $\SearchOne$ $\tau=\omega(\sqrt{n})$ times.  The first $\sqrt{n}$ dual adjustments are performed by $\SearchOne$ with eligibility Criterion~\ref{crit2} and the remaining $\tau-\sqrt{n}$ dual adjustments are performed in calls to $\BucketSearch$ with eligibility Criterion~\ref{crit3}.\footnote{We switch to Criterion~\ref{crit3} 
to ensure that formerly shrunken blossom edges remain eligible when the blossom is dissolved in the course of a search.  See the discussion in the proof of Lemma~\ref{lem:liquidationist-correct}.}

\begin{lemma}\label{lem:gabow}
Fix a $B\in \Omega'$.  Suppose that Property~\ref{prop:CS} holds, that all free vertices in $B$ have the same parity,
and that $yz(e) \le w(e) + 6$ for all $e\in E_B$.
After calling Gabow's algorithm on $B$ the following hold.
\begin{itemize}[itemsep=0ex]
\item All the old blossoms $B' \subseteq B$ are dissolved. 
\item Property \ref{prop:CS} holds and the $y$-values of free vertices in $B$ have the same parity.
\item $yz(V)$ does not increase.
\end{itemize}
Futhermore, Gabow's algorithm runs in $O(m(B) (n(B))^{3/4} )$ time.
\end{lemma}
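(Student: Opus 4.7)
The plan is to apply Gabow's 1985 weight-scaling algorithm~\cite{G85} as a subroutine, operating on the subgraph $G[B]$ with the current weight function $w$, duals $y,z$, matching $M \cap E(B)$, and the nested sub-blossom structure $\{B' \in \Omega' : B' \subseteq B\}$. The three hypotheses of the lemma match the preconditions of one scale of Gabow's algorithm exactly: Property~\ref{prop:CS} restricted to $B$ holds; free vertices of $B$ share parity; and the blossom edges carry bounded slack $yz(e)-w(e) \le 6$, which is the kind of starting configuration Gabow's single-scale procedure is designed to process.

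Correctness of the first two bullets I would derive from Lemma~\ref{lem:offendCS}. At a high level, Gabow's procedure is a bounded sequence of Edmonds-style searches confined to $G[B]$, each of which preserves Property~\ref{prop:CS} and the parity of $y$-values among free vertices in $B$. The procedure is run until every inherited sub-blossom of $B$ has had its $z$-value driven to zero and is dissolved, which establishes the first two bullets.

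For the third bullet, I would track the effect of each elementary step on the dual objective $yz(V) = \sum_u y(u) + \sum_{B'} z(B')\floor{|B'|/2}$. Augmentation does not touch $y$ or $z$. Blossom Shrinking creates a new root blossom with $z=0$, so $yz(V)$ is unchanged; Blossom Dissolution is invoked only when $z=0$, so it too leaves $yz(V)$ unchanged. For Dual Adjustment, since each root blossom has odd cardinality, every outer root blossom contributes net $-|B'|+2\floor{|B'|/2} = -1$ to $\Delta yz(V)$ and every inner root blossom contributes $+1$. Since in any Edmonds search the identity $|\text{outer roots}| = |\text{inner roots}| + |F|$ always holds (each non-free outer has a unique matched inner parent), the net change per dual adjustment is $-|F| \le 0$. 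Hence $yz(V)$ is monotonically non-increasing.

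The main obstacle is the running-time bound $O(m(B)(n(B))^{3/4})$, which genuinely depends on the internals of Gabow's scaling machinery and is deferred to Section~\ref{sec:gabow}. The intuition is that the slack bound $yz(e) \le w(e)+6$ on blossom edges caps the total work the procedure has to perform, allowing the computation to be organized into roughly $(n(B))^{3/4}$ phases, each of which executes a maximal family of augmentations on $G[B]$ in $O(m(B))$ time via a Micali--Vazirani-style blocking search; the careful accounting of dual adjustments versus phase size is where the $3/4$ exponent arises.
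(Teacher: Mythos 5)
There is a genuine gap: you treat Gabow's procedure as ``a bounded sequence of Edmonds-style searches confined to $G[B]$'' and invoke Lemma~\ref{lem:offendCS}, but a search confined to a proper subgraph does \emph{not} preserve Property~\ref{prop:CS}. A dual adjustment decreases the $y$-values of outer vertices inside the region being searched, which can violate domination on edges with exactly one endpoint in that region; Lemma~\ref{lem:offendCS} only covers searches over the whole graph. The paper's whole mechanism for fixing this is missing from your proposal: $\Path(R)$ works shell by shell ($G(C^*,D^*)$ for consecutive undissolved old blossoms $C^*\supset D^*$), and every unit of dual adjustment is accompanied by a unit \emph{translation} of $C^*$ and $D^*$ (decrement $z$ by $2$, increment $y$ of every contained vertex by $1$), which restores domination on crossing edges (Figure~\ref{fig:Shell}). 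These translations are also the only mechanism by which the inherited blossoms' $z$-values are driven to zero --- an ordinary Edmonds search never decrements $z(B')$ for $B'\in\Omega'$, so your description gives no reason the first bullet would ever be achieved. (The leftover case of a single free vertex is handled separately by $\Liquidate$ followed by $\PQSearch$ for exactly $T$ dual adjustments.)

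This omission also breaks your argument for the third bullet. Your per-adjustment accounting $\Delta yz(V) = -|F|$ is correct for a vanilla search but not here: each translation of $C^*$ (and of $D^*$) adds $+1$ to $yz(V)$, so the true change is $-|F^*|+2$ (or $-|F^*|+1$ when $D^*=\emptyset$). This is nonpositive only because the algorithm refuses to search shells with fewer than two free vertices --- that scheduling constraint is essential, not incidental (Lemma~\ref{lem:yznoincrease}). Likewise, the stage-two liquidation \emph{increases} $yz(V)$ by $T=\sum_i z(B_i)/2$, and monotonicity is recovered only by the subsequent $T$ dual adjustments from the lone free vertex $\omega$. Finally, your running-time sketch omits the two structural ingredients that actually produce the exponent $3/4$: the major-path decomposition of the old blossom tree (summing $m(B)\cdot(2^{r+1})^{3/4}$ over ranks $r$), and the potential argument that each iteration with $f$ free vertices decreases $yz(R)$ by $\Omega(f)$ while the total decrease is $O(n(R))$ by Lemma~\ref{lem:shellbound} --- which is where the hypothesis $yz(e)\le w(e)+6$ is actually used. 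So the bound is not merely ``deferred''; the argument you give does not lead to it.
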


\begin{figure}
\centering
\framebox{
\begin{minipage}{6.2in}
\noindent $\Hybrid(G, \hat{w})$ 

\begin{itemize}[topsep=1ex,itemsep=1ex,partopsep=0ex,parsep=0ex]

	\item $G_0 \leftarrow G, y \leftarrow 0, z \leftarrow 0, w\leftarrow 0, \Omega\leftarrow \emptyset$.
	
	\item For scales $i=1,\cdots,\lceil\log((\fr{n}{2}+1)N)\rceil$, execute steps {\bf Initialization} through {\bf Perfection}.

	\item[] {\bf Initialization}, {\bf Scaling}, and {\bf Large Blossom Liquidation} are performed exactly as in \Liquidationist.
	(There is no need to do Reweighting after Large Blossom Liquidation.)

	\item[] {\bf Small Blossom Dissolution}

	\begin{enumerate}\setcounter{enumi}{0}
	\item \label{item:gabowsalgorithm} Run Gabow's algorithm on each maximal small blossom $B'\in\Omega'$.
	\end{enumerate}
	
	\item[] {\bf Free Vertex Reduction}

	\begin{enumerate}\setcounter{enumi}{1}
		\item[] Let $F$ always denote the current set of free vertices and $\delta$ the number of dual adjustments 
		performed so far in Steps~\ref{item:searchone} and \ref{item:bucketsearch}.
		\item\label{item:searchone} Run $\SearchOne(F)$ (Criterion~\ref{crit2}) $\sqrt{n}$ times.
		\item\label{item:bucketsearch} While $\delta < \tau$ and $M$ is not perfect, 
		call $\BucketSearch(F)$ (Criterion~\ref{crit3}), terminating when an augmenting path is found and the matching is augmented or when $\delta = \tau$.		
	\end{enumerate}
	\item[] {\bf Perfection} is performed as in \Liquidationist.

	\item {\bf Finalization} is performed as in \Liquidationist.
\end{itemize}
\end{minipage}
}
\caption{\label{fig:hybrid}}
\end{figure}


\subsection{Correctness and Running Time}
We first argue scale $i$ functions correctly.  
Assuming Property~\ref{prop:RCS} holds at the end of scale $i-1$,
Property \ref{prop:CS} (except Active Blossoms) holds after Initialization at scale $i$.  
Note that Lemma \ref{thm:z-large} was not sensitive to the value of $\tau$, so it holds for \Hybrid{} as well as \Liquidationist. 
We can conclude that Large Blossom Liquidation increases the dual objective by 
$\sum_{\operatorname{Large}\, B'\in\Omega'} z'(B') \le 2n$.
By Lemma~\ref{lem:gabow}, the Small Blossom Dissolution step dissolves all remaining old blossoms and restores Property \ref{prop:CS}. 
By Lemma~\ref{lem:offendRCS}, the Free Vertex Reduction step maintains Property \ref{prop:RCS}. 	
The rest of the argument is the same as in Section \ref{sec:correct}.

In order to bound the running time we need to prove that the Free Vertex Reduction step runs in $O(m\sqrt{n})$ time, independent of $\tau$, and that afterwards there are at most $O(n/\tau)$ free vertices.

We now prove a lemma similar to Lemma \ref{lem:comparematching} that allows us to apply Lemma \ref{thm:number-free}.
\begin{lemma}\label{lem:comparematching2}
Let $M'$ be the perfect matching obtained in the previous scale and $M''$ be any matching, not necessarily perfect. 
We have $w(M'') \leq w(M') + 8n - \sum_{u \notin V(M'')} y(u)$ after the Small Blossom Dissolution step of \Hybrid.
\end{lemma}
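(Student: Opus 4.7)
The plan is to mirror the proof of Lemma \ref{lem:comparematching}, compensating for the fact that after Small Blossom Dissolution in \Hybrid{}, neither the $y$-values nor the blossom set $\Omega$ is trivial: Gabow's algorithm may perform dual adjustments and even create new blossoms. The central trick is to absorb the unmatched-vertex dual contribution onto the left-hand side, reducing the claim to a bound on the total dual objective $yz(V)$. By Lemma \ref{lem:gabow}, Property \ref{prop:CS} holds after Small Blossom Dissolution, so domination ($w(e)\leq yz(e)$) together with the fact that any matching uses at most $\lfloor |B|/2\rfloor$ edges inside each blossom $B\in\Omega$ yields $\sum_{e\in M''} yz(e)\leq \sum_{u\in V(M'')} y(u)+\sum_{B\in\Omega} z(B)\lfloor |B|/2\rfloor = yz(V)-\sum_{u\notin V(M'')} y(u)$. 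Hence $w(M'')+\sum_{u\notin V(M'')} y(u)\leq yz(V)$, and it suffices to prove $yz(V)\leq w(M')+8n$ at this moment.

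For the auxiliary inequality I would trace $yz(V)$ backwards through the scale. Lemma \ref{lem:gabow} guarantees Gabow's algorithm never raises $yz(V)$, so the current value is at most the value right after Large Blossom Liquidation (and Reweighting, if \Hybrid{} performs it; Reweighting changes both $yz(V)$ and $w(M')$ by the same quantity $\sum_u y_3(u)$, since $M'$ is perfect, and so leaves $yz(V)-w(M')$ invariant). Large Blossom Liquidation raises $yz(V)$ by exactly $K=\sum_{\operatorname{Large}} z(B')/2$, which is at most $2n$ by Lemma \ref{thm:z-large} applied to the previous scale (the lemma is explicitly noted to be insensitive to $\tau$, so it applies to \Hybrid{} as well). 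Immediately after Scaling, Lemma \ref{thm:negative} gives $w(e)\geq y_2z_2(e)-6$ for every $e\in M'$, and summing (using perfection of $M'$ and Active Blossoms from the previous scale, so that $\sum_{e\in M'} y_2z_2(e)=y_2z_2(V)$) yields $w(M')\geq y_2z_2(V)-6|M'|$. Using $|M'|\leq n$ (the total vertex count including dummies is at most $2n$, as in the proof of Lemma \ref{lem:comparematching}), the chain becomes $yz(V)\leq y_2z_2(V)+2n\leq w(M')+6|M'|+2n\leq w(M')+8n$, completing the proof.

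The main conceptual obstacle is spotting the reformulation in the first paragraph. In \Liquidationist{}, Reweighting zeroed out $y$, so Lemma \ref{lem:comparematching} only had to track the surviving small-blossom $z$-values. Here $y$ is nonzero and $\Omega$ may contain fresh blossoms created during Gabow's searches, so a naive rerun of Lemma \ref{lem:comparematching} fails. Once one sees that the unmatched-vertex $y$-sum can be moved to the left (and that this is exactly the form that Lemma \ref{thm:number-free} is set up to consume in the analysis of Free Vertex Reduction), the rest is a routine replay of the arithmetic in Lemma \ref{lem:comparematching}, augmented by a single extra appeal to Lemma \ref{lem:gabow} to link $yz(V)$ after Small Blossom Dissolution to $yz(V)$ after Large Blossom Liquidation.
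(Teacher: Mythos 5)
Your proof is correct and follows essentially the same route as the paper's: the same chain of inequalities (Lemma~\ref{thm:negative} after Scaling, $K\le 2n$ from Lemma~\ref{thm:z-large}, monotonicity of $yz(V)$ from Lemma~\ref{lem:gabow}, and Property~\ref{prop:CS} domination for $M''$), merely reorganized by splitting the argument at $yz(V)$ instead of writing one chain starting from $w(M')$. Your extra care about Reweighting leaving $yz(V)-w(M')$ invariant is a point the paper glosses over but does not change the substance.
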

\begin{proof}

Let $y_0,z_0$ denote the duals immediately before Small Blossom Dissolution and
$y,z,\Omega$ denote the duals and blossom set after Small Blossom Dissolution. Similar to the proof of Lemma \ref{lem:comparematching}, we have, for $K = \sum_{\operatorname{Large}\, B'\in \Omega'} z'(B')$,
\begin{align*}
w(M') &\geq - 6|M'| - K + \sum_{e \in M'} y_0z_0(e) 			&& \mbox{Lemma~\ref{thm:negative}}\\
&\geq  -8n+y_0z_0(V) 									&& K \le 2n\\			
&\geq -8n + yz(V) && \mbox{By Lemma \ref{lem:gabow}}\\ 
&= -8n + \sum_{u \in V}y(u) + \sum_{B' \in \Omega}z(B')\cdot \lfloor|B'| / 2 \rfloor \\
&\geq -8n + \sum_{u \notin V(M'')}y(u) + \left( \sum_{u \in V(M'')}y(u) + \sum_{B' \in \Omega}z(B')\cdot \lfloor|B'| / 2 \rfloor \right) \\
&\geq -8n + \sum_{u \notin V(M'')}y(u) + \sum_{e \in M''} yz(e) \\
&\geq -8n + \sum_{u \notin V(M'')}y(u) + w(M'') && \mbox{Property \ref{prop:CS} (domination)}
\end{align*}

\end{proof}

Therefore, because Lemma \ref{lem:comparematching2} holds for any matching $M''$, we can apply Lemma \ref{thm:number-free} to show the number of free vertices after Free Vertex Reduction is bounded by $O(n/\tau)$.


\begin{theorem}
\Hybrid{} computes an \MWPM{} in time
\[
O\left(\left[m\sqrt{n} + m \tau^{3/4} + \Edmonds\cdot (n/\tau)\right] \log(nN)\right).
\]
\end{theorem}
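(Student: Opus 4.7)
The plan is to charge each per-scale subroutine separately, sum over the $\log(nN)$ scales, and then add the Finalization cost. The three terms in the bound will correspond to (a) Free Vertex Reduction, (b) Small Blossom Dissolution, and (c) Finalization, respectively.

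First I would dispose of the cheap per-scale work. Initialization, Scaling, Large Blossom Liquidation, and Perfection all cost $O(m+n)$ per scale. For Small Blossom Dissolution, Lemma~\ref{lem:gabow} says that Gabow's algorithm run on a maximal small blossom $B'$ costs $O(m(B')\,n(B')^{3/4})$. Since the maximal small blossoms in $\Omega'$ are vertex-disjoint, their edge sets $E(B')$ are disjoint subsets of $E$ and each satisfies $n(B') \le \tau$. Summing yields $\sum_{B'} m(B')\,n(B')^{3/4} \le \tau^{3/4}\sum_{B'} m(B') \le m\tau^{3/4}$, so Step~\ref{item:gabowsalgorithm} costs $O(m\tau^{3/4})$ per scale.

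Next I would analyze the Free Vertex Reduction step. Step~\ref{item:searchone} calls $\SearchOne$ exactly $\sqrt{n}$ times, each costing $O(m)$, giving $O(m\sqrt{n})$. For Step~\ref{item:bucketsearch} the key point is that after Step~\ref{item:searchone} there are only $O(\sqrt n)$ free vertices. To see this, note that Lemma~\ref{lem:comparematching2} holds for \emph{any} matching $M''$, and Step~\ref{item:searchone} makes exactly $\sqrt{n}$ dual adjustments under Criterion~\ref{crit2}, preserving Property~\ref{prop:RCS} (Lemma~\ref{lem:offendRCS}); re-running the proof of Lemma~\ref{thm:number-free} with $\tau$ replaced by $\sqrt{n}$ shows the free-vertex count is at most $10n/\sqrt{n}=O(\sqrt{n})$. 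Since each invocation of $\BucketSearch$ in Step~\ref{item:bucketsearch} either halts after finding an augmenting-path set (which strictly shrinks $F$) or exhausts the $\tau$-adjustment budget, the number of calls is $O(\sqrt n)+1$. Each call costs $O(m+t_i)$ where $t_i$ is the number of dual adjustments it performs, and $\sum t_i \le \tau$. Thus Step~\ref{item:bucketsearch} takes $O(m\sqrt{n} + \tau)$ time, and the whole Free Vertex Reduction step is $O(m\sqrt{n})$ (absorbing $\tau\le m$).

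Putting these together, one scale costs $O(m\sqrt{n} + m\tau^{3/4})$, and summing over $\lceil\log((\tfrac{n}{2}+1)N)\rceil = O(\log(nN))$ scales gives the first two terms of the bound. For Finalization, I would apply Lemma~\ref{lem:comparematching2} together with Lemma~\ref{thm:number-free} at the true parameter $\tau$: after Free Vertex Reduction in each scale there are $O(n/\tau)$ free vertices, all converted to dummy pendants in Perfection. Accumulating over all scales, the number of free vertices in $G_{\lceil\log((\fr{n}{2}+1)N)\rceil}$ once the dummies are deleted is $O((n/\tau)\log(nN))$. Each $\PQSearch$ call matches at least two of these vertices, so Finalization costs $O(\Edmonds\cdot(n/\tau)\log(nN))$, contributing the third term and completing the bound. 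The most delicate step is the transition between Steps~\ref{item:searchone} and~\ref{item:bucketsearch}: we must be sure the switch from Criterion~\ref{crit2} to Criterion~\ref{crit3} still preserves Property~\ref{prop:RCS} (which it does by Lemma~\ref{lem:offendRCS}) and that the $O(\sqrt n)$ bound on free vertices, obtained by applying Lemma~\ref{thm:number-free} with parameter $\sqrt n$, really controls the number of $\BucketSearch$ invocations in Step~\ref{item:bucketsearch}.
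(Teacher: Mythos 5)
Your proposal is correct and follows essentially the same route as the paper's proof: the same per-step decomposition, the same $O(m\tau^{3/4})$ charge for Gabow's algorithm on disjoint maximal small blossoms, the same application of Lemmas~\ref{lem:comparematching2} and~\ref{thm:number-free} with parameter $\sqrt{n}$ to bound the free vertices entering Step~\ref{item:bucketsearch} and with parameter $\tau$ for Finalization. The only cosmetic difference is in accounting for \BucketSearch{} (you charge $O(m+t_i)$ per call and bound the call count; the paper charges $O(m(p+1))$ per call), which amounts to the same $O(m\sqrt{n})$ bound.
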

\begin{proof}
Initialization, Scaling, and Large Blossom Liquidation still take $O(n)$ time.
By Lemma \ref{lem:gabow}, the Small Blossom Dissolution step takes $O(m(B) (n(B))^{3/4})$ time for each maximal small blossom $B\in\Omega'$, for a total of $O(m \tau^{3/4})$. 
We now turn to the Free Vertex Reduction step. 
After $\sqrt{n}$ iterations of $\SearchOne(F)$, we have performed $\ceil{\sqrt{n}}$ units of dual adjustment from all the remaining free vertices. By Lemma \ref{lem:comparematching2} and Lemma \ref{thm:number-free}, there are at most $10n/\ceil{\sqrt{n}} = O(\sqrt{n})$ free vertices.  
Throughout the Free Vertex Reduction step, the difference $w(M)-yz(V)$ is $O(n)$, since $w(M)-w(M')$ is $O(n)$ by Lemma~\ref{lem:comparematching} for the perfect matching $M'$ of the previous scale, and $w(M')-yz(V)$ is $O(n)$ by domination. 		
Since each dual adjustment reduces $yz(V)$ by at least 1, 
we can implement $\BucketSearch$ with an array of $O(n)$ buckets for the priority queue.
See Section~\ref{sect:ImplementingEdmonds} for details.
A call to $\BucketSearch(F)$ that finds $p\ge 0$ augmenting paths takes $O(m(p+1))$ time.
Only the last call to \BucketSearch{} may fail to find at least one augmenting path, 
so the total time for all such calls is $O(m\sqrt{n})$.

By Lemma~\ref{thm:number-free} again, after Free Vertex Reduction, there can be at most $10n/\tau$ free vertices. Therefore, in the Finalization step, at most $(10n/ \tau)\ceil{\log((\fr{n}{2}+1)N)}$ free vertices emerge after deleting
dummy vertices.  It takes $O(\Edmonds\cdot (n / \tau )\log(nN))$ time to rematch them with Edmonds' search.
\end{proof}

Here we can afford to use any reasonably fast $O(m\log n)$ implementation of $\PQSearch$, such 
as~\cite{GalilMG86,GGS89,Gabow16} or the one presented in Section~\ref{sect:ImplementingEdmonds}.  
Setting $\tau \in [\sqrt{n} \log n, n^{2/3}]$, we get a running time of $O(m\sqrt{n} \log (nN))$
with any $O(m\log n)$ implementation of $\PQSearch$.

\subsection{Gabow's Algorithm}\label{sec:gabow}

The input is a maximal old small blossom $B\in\Omega'$ containing no matched edges, where $yz(e) \geq w(e)$ for all $e \in B$ and $yz(e) \leq w(e) + 6$ for all $e \in E_{B}$. Let $T$ denote the old blossom subtree rooted at $B$.  The goal is to dissolve all the old blossoms in $T$ and satisfy Property \ref{prop:CS} 
\emph{without increasing the dual objective value $yz(V)$}. Gabow's algorithm achieves this in $O({m}(B) ({n}(B))^{3/4})$ time. This is formally stated in Lemma \ref{lem:gabow}.

Gabow's algorithm decomposes $T$ into major paths.  Recall that a child $B_1$ of $B_2$ is a {\em major child} if $|B_1| > |B_2|/2$.
A node $R$ is a {\em major path root} if $R$ is not a major child, so $B$ is a major path root.
The {\it major path} $P(R)$ rooted at $R$ is obtained by starting at $R$ and moving to the major child of the current node, so long as it exists. 

Gabow's algorithm is to traverse each node $R$ in $T$ in {\em postorder}, and if $R$ is a major path root, to call 
$\Path(R)$.  The outcome of $\Path(R)$ is that all remaining old sub-blossoms of $R$ are dissolved, including $R$.
Define the {\em rank} of $R$ to be $\floor{\log {n}(R)}$. 
Suppose that $\Path(R)$ takes $O({m}(R) ({n}(R))^{3/4})$ time. 
If blossoms $R$ and $R'$ correspond to major path roots with the same rank, then $R\cap R' = \emptyset$.
In particular, each edge has its endpoints in at most one major path root of each rank.
Thus, summing over all ranks, the total time to dissolve $B$ and its sub-blossoms is therefore 
\[
O\paren{\sum_{r = 1}^{\lfloor \log {n}(B) \rfloor} {m}(B)\cdot (2^{r+1})^{3/4}} = O\paren{({m}(B)({n}(B))^{3/4}}.
\]
Thus, our focus will be on the analysis of $\Path(R)$.  In this algorithm inherited blossoms from $\Omega'$
coexist with new blossoms in $\Omega$.  We enforce a variant of Property~\ref{prop:CS} that additionally 
governs how old and new blossoms interact.

\begin{property}\label{prop:blossom}
Property~\ref{prop:CS}(1,3,4) holds 
and (2) (Active Blossoms) is changed as follows.
Let $\Omega'$ denote the set of as-yet undissolved blossoms from the previous scale and 
$\Omega,M$ be the blossom set and matching from the current scale. 
\begin{enumerate}[itemsep=0ex]
\item[2a.] $\Omega' \cup \Omega$ is a laminar (hierarchically nested) set.
\item[2b.] There do not exist $B\in \Omega, B'\in \Omega'$ with $B' \subseteq B$.
\item[2c.] No $e\in M$ has exactly one endpoint in some $B' \in \Omega'$.
\item[2d.] If $B\in \Omega$ and $z(B)>0$ then $|E_B \cap M| = \floor{|B|/2}$.
An $\Omega$-blossom is called a {\em root blossom} if it is not contained in any other $\Omega$-blossom.  
All root blossoms have positive $z$-values.
\end{enumerate}
\end{property}

\begin{figure}
\centering
\framebox{
\begin{minipage}{6.3in}
$\Path(R)$: $R$ is a major path root. 
\begin{enumerate}
\item[] Let $F$ be the set of free vertices that are still in undissolved blossoms of $P(R)$.		

\item While $P(R)$ contains undissolved blossoms and $|F| \ge 2$,

	\begin{itemize}
	\item Sort the undissolved atomic shells in non-increasing order by the number of free vertices, excluding those with less than 2 free vertices. 		
	Let $S_1, S_2, \ldots, S_k$ be the resulting list.

	\item For $i \leftarrow 1 \ldots k$, call $\ShellSearch(S_i)$ (Criterion~\ref{crit1}).
	\end{itemize}

\item If $P(R)$ contains undissolved blossoms (implying $|F|=1$)
	\begin{itemize}
	\item Let $\omega$ be the free vertex in $R$.  Let $B_1 \subset B_2 \subset \cdots \subset B_\ell$ be the undissolved blossoms in $P(R)$ and $T = \sum_{i} z(B_i)/2$.
	\item For $i = 1,2,\ldots, \ell$, $\Liquidate(B_i)$
	\item Call $\PQSearch(\{\omega\})$ (Criterion~\ref{crit1}), halting after $T$ dual adjustments.
	\end{itemize}
\end{enumerate}
\end{minipage}}

\end{figure}

\begin{figure}
\centering
\framebox{
\begin{minipage}{6in}
	$\ShellSearch(C,D)$ 
	\begin{itemize}[itemsep=0ex]
	\item[]	Let $C^{*} \supseteq C$ be the smallest undissolved blossom containing $C$.\\
	Let $D^{*} \subseteq D$ be the largest undissolved blossom contained in $D$, or $\emptyset$ if none exists.\\
	Let $F^*$ be the set of free vertices in $G(C^*,D^*)$.
	
	\item[] 	Repeat Augmentation, Blossom Shrinking, Dual Adjustment, and Blossom Dissolution steps
	\ul{until a halting condition occurs} (enumerated below).  

		\item {\em Augmentation:}\\
		Augment $M$ to contain an \MCM{} in the eligible subgraph of $G(C^{*},D^{*})$ and update $F^*$.	
		(This step may find zero augmenting paths and not change $M$.)
		
		\item {\em Blossom Shrinking}:\\ 
		Find and shrink blossoms reachable from $F^*$, exactly as in Edmonds' algorithm.

		\item {\em Dual Adjustment}:\\
		Peform dual adjustments (from $F^*$) as in Edmonds' algorithm, and perform a unit translation on $C^*$ and $D^*$ as follows:
   		\begin{align*}
			z(C^{*}) &\leftarrow z(C^{*}) - 2  &\\
			z(D^{*}) &\leftarrow z(D^{*}) - 2  & \mbox{if $D^*\neq \emptyset$}\\
			y(u) &\leftarrow y(u) + 2 & \mbox{for all $u \in D^{*}$} \\
			y(u) &\leftarrow y(u) + 1 & \mbox{for all $u \in C^{*} \setminus D^{*}$}
		\end{align*}
	
		\item {\em Blossom Dissolution}:\\
		Dissolve root blossoms in $\Omega$ with zero $z$-values as long as they exist.  In addition,
		\begin{itemize}[itemsep=0ex]
			\item[] If $z(C^{*}) = 0$, \ set $\Omega' \leftarrow \Omega' \backslash \{C^*\}$ \ and update $C^*$.
			\item[] If $z(D^{*}) = 0$, \ set $\Omega' \leftarrow \Omega' \backslash \{D^*\}$ \ and update $D^*$.
			\item[] Update $F^*$ to be the set of free vertices in $G(C^*,D^*)$.
		\end{itemize}
	\end{itemize}
\ \\
\ \\
	\underline{Halting Conditions:}
	\begin{enumerate}[topsep=2pt, itemsep=-0.5ex]
		\item The Augmentation step discovers at least one augmenting path.
		\item $G(C^{*}, D^{*})$ absorbs vertices already searched in the same iteration of $\Path$.
		\item $C^{*}$ was the outermost undissolved blossom and dissolves during Blossom Dissolution.
	\end{enumerate}
\end{minipage}}

\caption{$\ShellSearch(C,D)$}
\label{fig:shellsearch}		
\end{figure}

\subsubsection{The procedure $\Path(R)$}

Because $\Path$ is called on the sub-blossoms of $B$ in {\em postorder}, upon calling $\Path(R)$ the only undissolved blossoms in $R$ are those in $P(R)$.
Let $C,D\in P(R) \cup \{\emptyset\}$ with $C\supset D$.  The subgraph induced by $C\backslash D$ is called a {\em shell}, denoted $G(C,D)$. 
Since all blossoms have an odd number of 
vertices, $G(C,D)$ is an {\em even} size shell if $D\neq \emptyset$ and an {\em odd} size shell if $D=\emptyset$.
Moreover, the number of free vertices in an even (odd) shell is always even (odd).
It is an {\em undissolved shell} if both $C$ and $D$ are undissolved, or $C$ is undissolved and $D=\emptyset$.
We call an undissolved shell {\em atomic} if there is no undissolved blossom $C'\in \Omega'$ with $D \subset C' \subset C$.

The procedure $\Path(R)$ has two stages.  The first consists of {\em iterations}. Each iteration begins by surveying the undissolved blossoms in $P(R)$, say they are 
$B_k \supset B_{k-1} \supset \cdots \supset B_1$.  Let the corresponding atomic shells be $S_i = G(B_{i}, B_{i-1})$, where $B_0 \bydef \emptyset$.
We sort the set of atomic shells $\{S_i\}$ in non-increasing order by their number of free vertices and call 
$\ShellSearch(S_i)$ in this order, 
but refrain from making the call unless $S_i$ contains at least two free vertices.

The procedure $\ShellSearch(C,D)$ (see Figure~\ref{fig:shellsearch}) is simply an instantiation of \EdmondsSearch{} with the following features and differences.	
\begin{enumerate}
\item There is a {\em current atomic shell} $G(C^*,D^*)$, which is initially $G(C,D)$, and the Augmentation, Blossom Shrinking, and Dual Adjustment steps		
only search from the set of free vertices in the current atomic shell.  By definition $C^*$ is the smallest undissolved blossom containing $C$ and $D^*$ the largest undissolved blossom contained in $D$, or $\emptyset$ if no such blossom exists.	
\item An edge is eligible if it is tight (Criterion~\ref{crit1}) and in the current atomic shell.  Tight edges that straddle the shell are specifically excluded.
\item Each unit of dual adjustment is accompanied by a unit {\em translation} of $C^*$ and $D^*$, if $D^*\neq \emptyset$.  This may cause either/both of $C^*$ and $D^*$
to dissolve if their $z$-values become zero, which then causes the current atomic shell to be updated.\footnote{To {\em translate} a blossom $B$ by one unit means to decrement $z(B)$ by 2 and increment $y(u)$ by 1 for each $u\in B$.}
See Figure~\ref{fig:Shell}.
\item Like $\EdmondsSearch$, $\ShellSearch$ halts after the first Augmentation step that discovers an augmenting path.
However, it halts in two other situations as well.  If $C^*$ is the outermost undissolved blossom in $P(R)$ and $C^*$ dissolves, $\ShellSearch$ halts immediately.
If the current shell $G(C^*,D^*)$ ever intersects a shell searched in the same iteration of $\Path(R)$, $\ShellSearch$ halts immediately.
Therefore, at the end of an iteration of $\Path(R)$, every undissolved atomic shell contains at least
two vertices that were matched (via an augmenting path) in the iteration.
\end{enumerate}

\begin{figure}
\centering
\begin{tabular}{cc}
\scalebox{.4}{\includegraphics{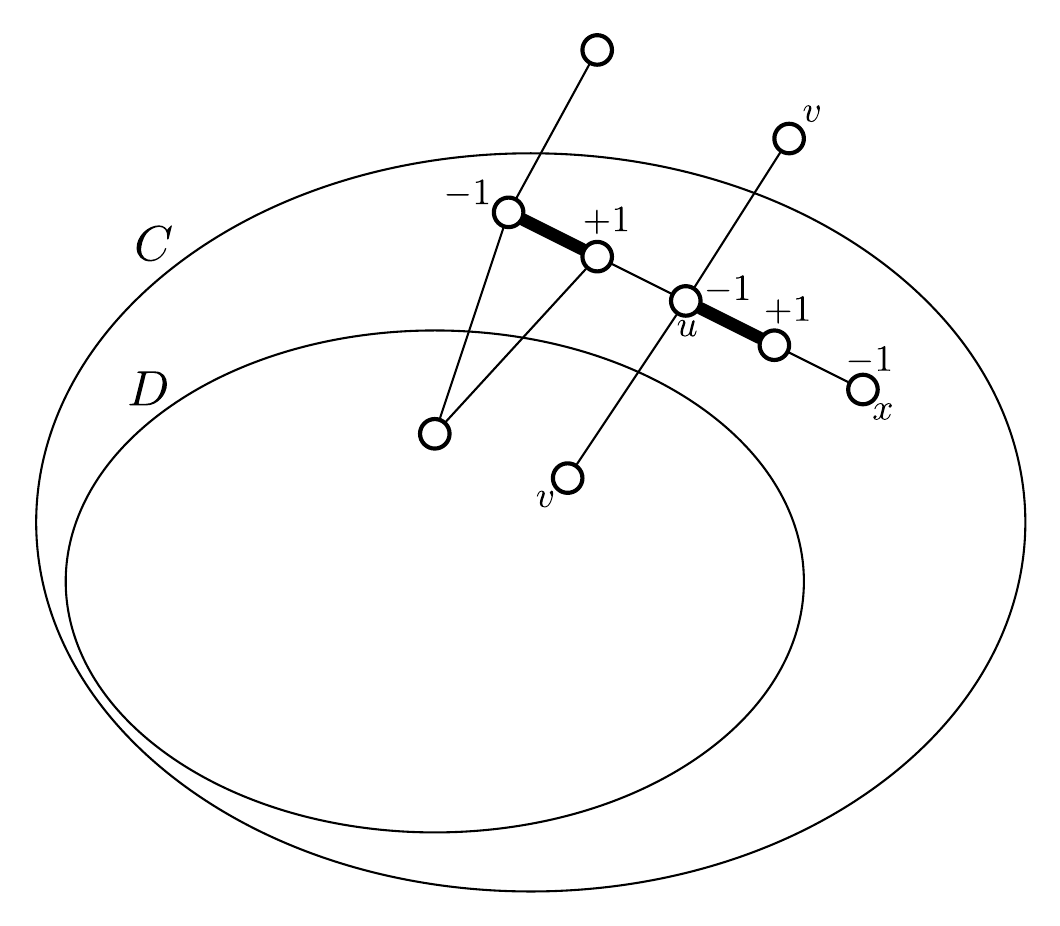}} & \scalebox{.4}{\includegraphics{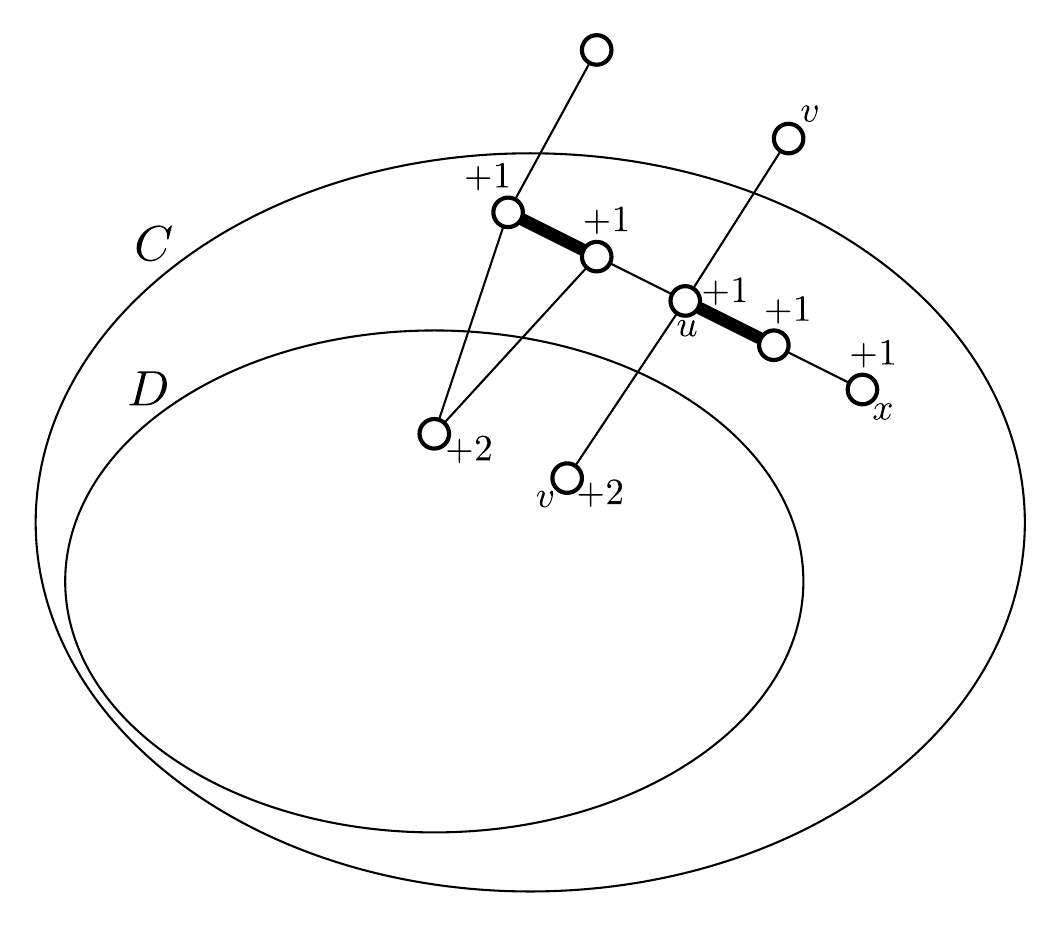}}
\end{tabular}
\caption{\label{fig:Shell}Left: a dual adjustment performed in a shell $G(C,D)$
decrements some $y$-values and may violate Property~\ref{prop:CS} (domination) for edges $(u,v)$ crossing the shell, with $u\in C\backslash D$ and
$v \not\in C$ or $v\in D$.  Right: a unit-translation of $C$ and $D$ decreases $z(C)$ and $z(D)$ by 2 and increases the $y$-values of vertices in $C$ as indicated.
This increases $yz(u,v)$ for each crossing edge $(u,v)$, and preserves domination.}
\end{figure}

Blossom translations are used to preserve Property~\ref{prop:CS}(domination) for all edges, specifically those crossing the shell boundaries.
We implement $\ShellSearch(C,D)$ using an array of buckets for the priority queue, as in \BucketSearch, 
and execute the Augmentation step using a cardinality matching algorithm such as~\cite{MV80,Vazirani12,Vazirani14} or~\cite[\S 10]{GT91} or~\cite{Gabow17}.
Let $t$ be the number of dual adjustments, $G(C^*,D^*)$ be the current atomic shell before the last dual adjustment, and $p\ge 0$ be the number of augmenting paths discovered before halting.  The running time of $\ShellSearch(C,D)$ is $O(t + m(C^*,D^*)\cdot \min\{p+1, \sqrt{n(C^*,D^*)}\})$.
We will show that $t$ is bounded by $O({n}(C^{*},D^{*})))$ as long as the number of free vertices inside $G(C^{*}, D^{*})$ is at least 2.
See Corollary \ref{cor:shellbound}.

The first stage of $\Path(R)$ ends when either all old blossoms in $P(R)$ have dissolved (in which case it halts immediately) or there is exactly one free vertex remaining in an undissolved blossom.
In the latter case we proceed to the second stage of $\Path(R)$ and liquidate all remaining old blossoms.
This preserves Property~\ref{prop:CS} but screws up the dual objective $yz(R)$, which must be corrected before we can halt.
Let $\omega$ be the last free vertex in an undissolved blossom in $R$ and $T = \sum_{i}z(B_i)/2$ be the aggregate amount
of translations performed when liquidating the blossoms.  We perform $\PQSearch(\{\omega\})$, halting after exactly $T$ dual adjustments.
The search is guaranteed not to find an augmenting path.  It runs in $O({m}(R)+{n}(R) \log {n}(R))$ time~\cite{Gabow16}
or $O(m(R)\sqrt{\log\log n(R)})$ w.h.p.; see Section~\ref{sect:ImplementingEdmonds}.

To summarize, $\Path(R)$ dissolves all old blossoms in $P(R)$, either in stage 1, through gradual translations, or in stage 2 through liquidation.  
Moreover, Property 1 is maintained throughout $\Path(R)$.  
In the following, we will show that $\Path(R)$ takes $O({m}(R) ({n}(R))^{3/4})$ time and the dual objective value 
$yz(S)$ does not increase for every $S$ such that $R \subseteq S$. 
In addition, we will show that at all times, the $y$-values of all free vertices have the same parity.

\subsubsection{Properties}

We show the following lemmas to complete the proof of Lemma \ref{lem:gabow}. Let $y_0, z_0$ denote the initial duals, before calling Gabow's algorithm.

\begin{lemma}\label{lem:yvalue} 
After the call to $\Path(R)$, we have $y(u) \geq y_0(u)$ for all $u \in R$. 
Moreover, the $y$-values of free vertices in $R$ are always odd.
\end{lemma}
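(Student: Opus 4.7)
The plan is to verify both claims by tracking $y(u)$ through the two stages of $\Path(R)$: the $\ShellSearch$ calls of Stage~1, and the bulk liquidation plus $\PQSearch(\{\omega\})$ of Stage~2. The initial odd parity is inherited from \Hybrid's Scaling step ($y(u)=2y'(u)+3$ is odd) and Large Blossom Liquidation (which adds $z(B')/2 = z'(B')$, an even integer, to $y(u)$), and propagated through the postorder of Gabow's algorithm over earlier $\Path$ calls; note that \Hybrid{} does not reweight $y$ to $0$ the way \Liquidationist{} does.

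For Stage~1, I would do a case analysis on the effect of one unit of dual adjustment inside $\ShellSearch(C,D)$ with current atomic shell $G(C^*,D^*)$: if $u \notin C^*$, no change; if $u \in D^*$, only the translation acts, giving $+2$; if $u \in C^* \setminus D^*$, the translation contributes $+1$ and the Edmonds component contributes $-1$, $0$, or $+1$ depending on whether $u \in V_{\Outer}$, outside the search, or $u \in V_{\Inner}$ (and $0$ if $u$ lies inside a new root blossom in $\Omega$). In every case the net change is nonnegative, so $y \geq y_0$ is maintained. For parity, a free vertex in $C^* \setminus D^*$ is necessarily in $V_{\Outer}$ (as the root of an alternating tree), giving net change $-1+1=0$; a free vertex in $D^*$ gains $+2$; a free vertex outside $C^*$ is unchanged. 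Hence odd parity persists, and vertices that become matched via augmenting paths shed the parity obligation.

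For Stage~2, let $B_1 \subset \cdots \subset B_\ell$ be the undissolved blossoms in $P(R)$ entering Stage~2, and set $T = \sum_j z(B_j)/2$. Liquidating them adds $L(u) \bydef \sum_{j\,:\,u \in B_j} z(B_j)/2$ to $y(u)$, which is nonnegative. Subsequently $\PQSearch(\{\omega\})$ performs exactly $T$ dual adjustments, each potentially decreasing $y(u)$ by $1$. The crux is to show $y(u)$ loses at most $L(u)$: for $u = \omega$ this is trivial since $L(\omega)=T$; for $u \in B_i \setminus B_{i-1}$ with $i\geq 2$, one computes that liquidating $B_1,\ldots,B_\ell$ raises $yz(e)$ for every edge $e=(x,y)$ by exactly $\sum_{j=\min(j_x,j_y)}^{\max(j_x,j_y)-1} z(B_j)/2$, where $j_v$ is the smallest index with $v \in B_{j_v}$. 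Hence any alternating path in the eligible subgraph from $\omega \in B_1$ to $u$ must cross each of the nested boundaries $B_1,\ldots,B_{i-1}$, accumulating extra slack at least $\sum_{j<i} z(B_j)/2$. Since each dual adjustment shrinks the aggregate slack along any $V_{\Outer}$-to-outside alternating path by at most $1$, $u$ cannot enter the search tree during the first $\sum_{j<i} z(B_j)/2$ adjustments, so $u \in V_{\Outer}$ for at most $T - \sum_{j<i} z(B_j)/2 = L(u)$ adjustments, yielding $y(u) \geq y_0(u)$. The Stage~2 parity claim is immediate: $\omega$ is the unique free vertex, receives $+T$ from liquidation, and loses exactly $T$ by being the free root of $\PQSearch$ throughout, for net $0$.

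The main obstacle is the slack-increase bookkeeping and the boundary-crossing argument bounding when a vertex in $B_i \setminus B_{i-1}$ may first join the $\PQSearch$ tree; this is where the nested blossom structure and the choice of performing exactly $T$ adjustments come together. The remaining work is a careful but routine accounting of how translations, Edmonds-style adjustments, and $\Liquidate$ operations each act on $y$.
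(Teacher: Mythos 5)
Your proof is correct and follows essentially the same route as the paper: the same per-adjustment case analysis (the translations of $C^{*}$ and $D^{*}$ versus the Edmonds adjustment) for the first stage, and the same reduced-slack/distance argument for the second stage. The one place you go beyond the paper's written proof is worth noting: the paper's second-stage argument only explicitly verifies that $y(\omega)$ is restored (via the claim that the $w'$-distance from $\omega$ to everything outside $B_\ell$ grows by $T$), whereas your boundary-crossing bound---that a vertex $u\in B_i\setminus B_{i-1}$ cannot enter the search before $\sum_{j<i}z(B_j)/2$ adjustments and hence is outer for at most $\sum_{j\ge i}z(B_j)/2 = L(u)$ of the $T$ adjustments---is exactly the detail needed to conclude $y(u)\ge y_0(u)$ for \emph{all} $u\in R$ rather than just for $\omega$, so your write-up is, if anything, more complete on this point. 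One small slip: a vertex absorbed into a new outer root blossom of $\Omega$ still has its $y$-value decremented by $1$ in each dual adjustment (the compensating $+2$ goes to the blossom's $z$-value, not to $y$), so that case is $-1$ rather than $0$; since your argument already tolerates $-1$ against the $+1$ translation, nothing breaks.
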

\begin{proof}

We will assume inductively that this holds after every recursive call of $\Path(R')$ for every $R'$ that is a non-major child of a $P(R)$ node. Then, it suffices to show $y(u)$ does not decrease and the parity of free vertices always stays the same during $\Path(R)$. Consider doing a unit of dual adjustment inside the shell $G(C^{*},D^{*})$. 
Due to the translations of $C^*$ and $D^*$, every vertex in $D^{*}$ has its $y$-value increased by 2
and every vertex in $C^{*}$ either has its $y$-value unchanged or increased by 1 or 2. 
The $y$-values of the free vertices in $C^{*}\backslash D^*$ remain unchanged.  (The dual adjustment decrements
their $y$-values and the translation of $C^*$ increments them again.)

Consider the second stage of \Path$(R)$.  In $\ShellSearch(C,D)$, only augmenting paths within atomic shells can be found, so only the smallest atomic shell can contain odd number of free vertices. 
Therefore $\omega$ is in $B_1$, the smallest undissolved blossom.			
When liquidating blossoms $\{B_i\}$, 
$y(\omega)$ increases by $T \bydef \sum_i z(B_i)/2$ before the call to $\PQSearch(\{\omega\})$. 
Define $w'(u,v) = yz(u,v) - w(u,v)$. The eligible edges must have $w'(u,v) = 0$. 
We can easily see that when we dissolve $B_i$ and increase the $y$-values of vertices in $B_i$, the $w'$-distance from $\omega$ to any vertex outside the largest undissolved blossom $B_\ell$ increases by $z(B_i)/2$. 
Therefore, the total distance from $\omega$ to any vertex outside $B_\ell$ increases by $T$ after dissolving all the blossoms, 
since $\omega\in B_1$.  
Every other vertex inside $B_\ell$ is matched, so
$\PQSearch(\{\omega\})$ will perform $T$ dual adjustments and halt before finding an augmenting path.
We conclude that $y(\omega)$ is restored to the value it had before the second stage of \Path(R).
\end{proof}

\begin{lemma}\label{thm:offend3}
If Property \ref{prop:blossom} holds and 
$y$-values of free vertices have the same parity, 
then Property \ref{prop:blossom} holds after calling $\ShellSearch(C,D)$.
\end{lemma}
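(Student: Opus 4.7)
My plan is to mirror the structure of the proof of Lemma~\ref{lem:offendCS}, verifying each clause of Property~\ref{prop:blossom} in turn, but with two new ingredients to account for: (i) the blossom \emph{translations} of $C^*$ and $D^*$ that accompany each dual adjustment, and (ii) edges that straddle a shell boundary, which do not exist in the setting of Lemma~\ref{lem:offendCS}. Granularity (1) is immediate, since dual adjustments and translations both alter $y$ by integers and $z$ by even integers. For (2a)--(2b), all new blossoms formed inside $\ShellSearch$ are built on vertices of the current atomic shell $G(C^*,D^*)$, so by construction they lie entirely inside $C^* \setminus D^*$ and thus nest properly with every $B' \in \Omega'$; in particular no new $B \in \Omega$ can contain any old $B' \in \Omega'$. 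Clause (2c) is preserved because every augmenting path found during the Augmentation step consists of eligible edges in the shell, so no new matched edge can cross the boundary of $C^*$ or $D^*$. Clause (2d) is preserved exactly as in standard Edmonds' search, since the Blossom Dissolution step removes $\Omega$-root blossoms whose $z$ hits $0$.

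The main work is domination and tightness, which I would handle by a case analysis on the position of each edge $e=(u,v)$ relative to the pair $(C^*,D^*)$. For $e$ with both endpoints in $C^* \setminus D^*$, the translation contributes $+1 + 1 - 2 = 0$ to $yz(e)$, so the analysis reduces to that of Lemma~\ref{lem:offendCS} with the standard dual adjustment; edges fully inside $D^*$ and edges fully outside $C^*$ are likewise unaffected by the translation (for the first, $+2+2-2-2=0$; for the second, there is no contribution at all) and untouched by the dual adjustment, so both tightness and domination persist. For edges crossing the $C^*$ boundary, the translation contributes $+1$ or $+2$ to $yz(e)$ and the dual adjustment can change $y(u)$ (the inside endpoint) by at most $-1$, so $yz(e)$ does not decrease; edges crossing the $D^*$ boundary within $C^*$ gain $+1$ from the translation and lose at most $1$ from the dual adjustment, again nonnegative. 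By laminarity (and clause 2c), none of these crossing edges lie in $M$ or in any $E_B$, so only domination -- not tightness -- is required of them, and domination is preserved.

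The hard part, just as in Lemma~\ref{lem:offendCS}, is showing that domination is not \emph{barely} violated for outer-outer edges lying inside the shell whose endpoints are in different root blossoms; the resolution is the same parity argument, exploiting evenness of $w$ and the fact that all free vertices in the shell share a common parity of $y$-value (which I verify below) to upgrade the apparent slack of $+1$ from ineligibility to $+2$, so that after decrementing both endpoints the edge is still dominated. The crossing-edge analog uses the same parity trick applied to one endpoint only, because the outside endpoint is frozen by $\ShellSearch$.

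Finally, for the parity claim on free vertices: free vertices inside $D^*$ are untouched by dual adjustment (they are not in $V_{\mathrm{in}} \cup V_{\mathrm{out}}$, both of which live in $C^* \setminus D^*$) and gain $+2$ from translation, so their parity is unchanged; free vertices in $C^* \setminus D^*$ belong to $V_{\mathrm{out}}$ and hence see a dual adjustment of $-1$ plus a translation of $+1$, a net zero; free vertices outside $C^*$ are untouched entirely. Combined with the inductive hypothesis that the $y$-values of free vertices in $R$ had a common parity before the call, this parity is preserved after $\ShellSearch(C,D)$ returns.
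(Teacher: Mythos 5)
Your proof is correct and follows essentially the same route as the paper's: reduce the interior of the atomic shell to the analysis of Lemma~\ref{lem:offendCS}, observe via Property~\ref{prop:blossom}(2c) and the confinement of new blossoms that crossing edges need only domination, and verify that the unit translations of $C^*$ and $D^*$ keep $yz(e)$ non-decreasing on every edge straddling the shell. You merely spell out the translation arithmetic case by case where the paper appeals to Figure~\ref{fig:Shell}, and you additionally verify the parity preservation that the paper defers to Lemma~\ref{lem:yvalue}.
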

\begin{proof}
The current atomic shell $G(C^{*},D^{*})$ cannot contain any old undissolved blossoms, since we are calling $\Path(R)$ in postorder. 
Because we are simulating $\EdmondsSearch(F^*)$ from the set $F^*$ of free vertices in $G(C^{*},D^{*})$,
whose $y$-values have the same parity, by Lemma \ref{lem:offendCS},
Property \ref{prop:blossom} holds within $G(C^{*},D^{*})$. It is easy to check that Property \ref{prop:blossom}(1) (granularity)
holds in $G$. Now we consider Property \ref{prop:blossom}(3,4) (domination and tightness) for the edges crossing $C^{*}$ or $D^{*}$. By Property \ref{prop:blossom}(2c) there are no crossing matched edges and all the newly created blossoms lie entirely in $G(C^{*},D^{*})$. Therefore, tightness must be satisfied. 
The translations on blossoms $C^{*}$ and $D^{*}$ keep the $yz$-values of edges straddling $C^*\backslash D^*$ 
non-decreasing, thereby preserving domination.

Now we claim Property \ref{prop:blossom}(2) holds. We only consider the effect on the creation of new blossoms, since the dissolution of $C^{*}$ or $D^{*}$ cannot violate Property \ref{prop:blossom}(2).  Since edges straddling the atomic shell $G(C^*,D^*)$ are automatically ineligible, we will only create new blossoms inside $G(C^{*}, D^{*})$. 
Since $G(C^{*}, D^{*})$ does not contain any old undissolved blossoms and the new blossoms in $G(C^{*}, D^{*})$ form a laminar set, 
Property \ref{prop:blossom}(2a,2b) holds. 
Similarly, the augmentation only takes place in $G(C^{*}, D^{*})$ which does not contain old undissolved blossoms, Property \ref{prop:blossom}(2c) holds.
\end{proof}

\begin{lemma}\label{lem:yznoincrease} 
The value of $yz(V)$ is non-increasing after the call to $\Path(R)$.	
\end{lemma}

\begin{proof}
Consider a dual adjustment in $\ShellSearch(C,D)$ in which $F^{*}$ is the set of free vertices in the current atomic shell $G(C^{*},D^{*})$. 
By Property~\ref{prop:blossom}(tightness), each dual adjustment within the shell decreases $yz(R)$ by $|F^*|$
since free vertices' $y$-values are decremented and $yz(e)$ is unchanged for each matched edge $e$.
The translation on $C^{*}$ increases $yz(R)$ by 1, 
and if $D^{*} \neq \emptyset$, the translation of $D^{*}$ also increases $yz(R)$ by 1.
Therefore, a dual adjustment in $\ShellSearch$ decreases $yz(R)$
by $|F^{*}|-2$, if $D^{*} \neq \emptyset$, 
and by $|F^{*}|-1$ if $D=\emptyset$.
Since $G(C^{*},D^{*})$ contains at least 2 free vertices, $yz(R)$ does not increase during the first stage of $\Path(R)$.		

Suppose the second stage of \Path$(R)$ is reached, that is, there is exactly one free vertex $\omega$ in an undissolved blossom in $R$.  
When we liquidate all remaining blossoms in $R$,  $yz(R)$ increases by $T$.  
As shown in the proof of Lemma \ref{lem:yvalue}, $\PQSearch(\{\omega\})$ 
cannot stop until it reduces $yz(\omega)$ by $T$. Since Property~\ref{prop:blossom}(tightness) is maintained,
this also reduces $yz(R)$ by $T$, thereby restoring $yz(R)$ back to its value before the second stage of $\Path(R)$.
Since $\Path(R)$ only affects the graph induced by $R$, the arguments above show that $yz(S)$ is non-increasing,
for every $S\supseteq R$.
\end{proof}

The following lemma considers a {\em not necessarily atomic} undissolved shell $G(C,D)$ at some point in time, 
which may, after blossom dissolutions, become an atomic shell.
Specifically, $C$ and $D$ are undissolved but there could be {\em many} undissolved $C'\in \Omega'$ for which $D \subset C' \subset C$.

\begin{lemma}\label{lem:shellbound} 
Consider a call to $\Path(R)$ and any shell $G(C,D)$ in $R$.
Throughout the call to $\Path$, so long as $C$ and $D$ are undissolved 
(or $C$ is undissolved and $D=\emptyset$)
$yz(C) - yz(D) \geq y_0z_0(C) - y_0z_0(D) - 3{n}(C \setminus D)$. 
\end{lemma}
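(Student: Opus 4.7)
My plan is to classify each operation of $\Path(R)$ by its effect on $yz(C)-yz(D)$, compute the per-step change in the only nontrivial case, and finally bound the cumulative decrease by $3n(C\setminus D)$.

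First I would show that $yz(C)-yz(D)$ is preserved by every operation except dual adjustments (with their accompanying translations on $C^*$ and $D^*$) performed inside $\ShellSearch$ calls whose current atomic shell $G(C^*,D^*)$ is a sub-shell of $G(C,D)$, i.e.\ satisfies $D\subseteq D^*\subsetneq C^* \subseteq C$. By laminarity of $P(R)$ combined with the hypothesis that $C$ and $D$ are both undissolved, this exhausts the cases other than shells entirely inside $D$ or entirely outside $C$, and in each of those two cases a direct verification shows that all changes contribute equally to $yz(C)$ and $yz(D)$. Augmentations, blossom shrinking with $z=0$, blossom dissolution, and each Stage 2 liquidation $\Liquidate(B_i)$ of a blossom in the $P(R)$ chain other than $C$ or $D$ themselves preserve $yz(C)-yz(D)$ by the same balancing of $y$-increase against $z$-decrease.

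Second, for the sub-shell case, I would compute the per-step change. Decompose $yz(C)-yz(D)=\sum_{u\in C\setminus D}y(u)+Z(C,D)$, where $Z(C,D)$ collects the relevant $z$-contributions. The dual adjustment lowers the first sum by exactly $|F^*|$ (because $\Vin\cup\Vout\subseteq C^*\setminus D^*\subseteq C\setminus D$ and $|\Vout|-|\Vin|=|F^*|$) and leaves $Z(C,D)$ unchanged (the $z$-gains on $\Vout$-root blossoms balance the $z$-losses on $\Vin$-root blossoms, via the Edmonds-tree identity that the root-blossom counts also differ by $|F^*|$). The translations on $C^*$ and $D^*$ each contribute exactly zero when $D\neq\emptyset$: the $y$-increase over $C^*$ (respectively $D^*$) is balanced by the $z(C^*)$ (respectively $z(D^*)$) decrease against the corresponding coefficient in $Z(C,D)$. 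Hence each qualifying dual adjustment decreases $yz(C)-yz(D)$ by $|F^*|$ when $D\neq\emptyset$, and by $|F^*|-2$ or $|F^*|-1$ when $D=\emptyset$.

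The heart of the proof is to bound the cumulative sum $\sum|F^*|$ over these dual adjustments by $3n(C\setminus D)$, and this is where the main difficulty lies. I would proceed by an amortized per-vertex charging argument. The essential ingredients are Lemma \ref{lem:yvalue}, which implies $y(u)$ of any free vertex $u$ currently in $F^*$ is constant across that step (its $-1$ from the dual adjustment is canceled by the $+1$ from the $C^*$ translation, and $u\notin D^*$); the precondition $yz(e)\le w(e)+6$ on blossom edges, which bounds the initial slack to be closed; and the structural design of $\Path(R)$ and $\ShellSearch$, in particular the descending-free-count ordering of atomic shells per iteration together with the three halting conditions (augmentation, shell intersection with an already-searched region, outermost dissolution), which prevent any vertex in $C\setminus D$ from contributing to $|F^*|$ in more than a bounded constant number of adjustments. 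The main obstacle is weaving these structural and invariant-based ingredients into a clean per-vertex charge of at most $3$, producing the $3n(C\setminus D)$ target.
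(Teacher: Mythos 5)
Your plan correctly identifies the easy part (most operations of $\Path(R)$ leave $yz(C)-yz(D)$ unchanged, and a qualifying dual adjustment decreases it by roughly $|F^*|$ -- these computations do appear in the paper, but in Lemma~\ref{lem:yznoincrease} and Corollary~\ref{cor:shellbound}, not here). The gap is in what you call the heart of the proof: bounding $\sum|F^*|$ over all qualifying dual adjustments by $3n(C\setminus D)$ via a per-vertex charge of at most $3$, justified by the claim that the ordering of shells and the halting conditions ``prevent any vertex in $C\setminus D$ from contributing to $|F^*|$ in more than a bounded constant number of adjustments.'' That structural claim is false. A single atomic shell with exactly two free vertices can undergo $\Theta(n(C^*\setminus D^*))$ dual adjustments within one call to $\ShellSearch$ before an augmenting path appears (indeed, Corollary~\ref{cor:shellbound} only bounds this count by $\tfrac{3}{2}n(C^*\setminus D^*)$, and it does so \emph{by invoking Lemma~\ref{lem:shellbound}}); each such adjustment charges both free vertices, so a free vertex is charged $\Theta(n)$ times, not $O(1)$. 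Nothing in the design of $\Path$ or $\ShellSearch$ limits the number of dual adjustments per shell a priori -- that limit is exactly what this lemma is needed to establish, so your route is circular: you are trying to derive the lemma from a bound that is itself a consequence of the lemma.

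The paper's proof is entirely static and makes no reference to the dynamics. Using criticality of blossoms, it exhibits a witness matching $M_\omega$ that is perfect on $C\setminus D'$ (where $D'$ is $D$ or a singleton when $D=\emptyset$), and sandwiches $w(M_\omega\cap E(C\setminus D'))$: from below by $y_0z_0(C)-y_0z_0(D')-3n(C\setminus D')$ using the initial near-tightness $y_0z_0(e)\le w(e)+6$ on blossom edges (Lemma~\ref{thm:negative}), and from above by $yz(C)-yz(D')$ using domination (Property~\ref{prop:blossom}), which holds at every moment of the execution. Comparing the two bounds gives the inequality instantly, at any point in time. You correctly listed the slack-$6$ precondition as an ingredient, but the only way it enters is through this duality sandwich on a fixed witness matching; it cannot be converted into a combinatorial participation bound on free vertices. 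To repair your proof you would have to abandon the amortized-charging step and replace it with this LP-duality argument, at which point the dynamic bookkeeping in your first two steps becomes unnecessary for this lemma.
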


\begin{proof}
If $D = \emptyset$, we let $D'$ be the singleton set consisting of an arbitrary vertex in $C$.
Otherwise, we let $D' = D$.  
Let $\omega$ be a vertex in $D'$. Since blossoms are {\em critical}, we can find a perfect matching $M_{\omega}$ that is also perfect when restricted to $D' \setminus \{ \omega \} $ or $C' \setminus D'$, 
for any $C'\in\Omega'$ with $C' \supset D'$.
($M_{\omega}$ can be derived from the matching $M'$ from the previous scale, by changing the base of $R$ to $\omega$. We only case about the part of $M_{\omega}$ within $C$.)		
By Lemma~\ref{thm:negative}, every $e\in M_\omega \cap E_R$ has 
$y_0z_0(e) \leq w(e) + 6$.  Therefore, 
\begin{align*}
\lefteqn{\sum_{e \in M_{\omega} \cap E(C \setminus D')}w(e) }\\
		&\geq \sum_{e \in M_{\omega} \cap E(C \setminus D')} y_0 z_0(e) -6{n}(C\setminus D')/2 \\
		&= \sum_{u \in V(C \setminus D')} y_0(u) +    \sum_{\substack{C' \in \Omega':\\D' \subset C'}} z_0(C')\cdot \frac{|C' \cap C|- |D'|}{2} 
											+\sum_{\substack{C''\in\Omega':\\ C''\subset C\setminus D'}} z_0(C'') \floor{\frac{C''}{2}} -3{n}(C\setminus D') \\				
		&= y_0 z_0(C) - y_0 z_0(D') - 3{n}(C \setminus D').
\end{align*} 

On the other hand, by Property \ref{prop:blossom}(domination), we have 
\begin{align*}
&\sum_{e \in M_{\omega} \cap E(C\setminus D')}w(e) \\
&\leq \sum_{e \in M_{\omega} \cap E(C \setminus D')} yz(e) \\
&= \sum_{u \in V(C \setminus D')}y(u) + \sum_{\substack{C' \in \Omega':\\D' \subset C'}}z(C')\cdot\frac{|C' \cap C|- |D'|}{2}  + \sum_{B \in \Omega}z(B)\cdot{| M_{\omega}  \cap E(B \cap C\setminus D' )| }\\
&\leq \sum_{u \in V(C \setminus D')}y(u) + \sum_{\substack{C' \in \Omega':\\D' \subset C'}}z(C')\cdot\frac{|C' \cap C|- |D'|}{2}  + \sum_{\substack{B \in \Omega:\\B\subset C}} z(B)\cdot \lfloor \frac{|B| - |B \cap D'|}{2} \rfloor\\ 
\intertext{Consider a $B\in\Omega$ that contributes a non-zero term to the last sum.  By Property~\ref{prop:blossom}, $\Omega\cup\Omega'$ is laminar so
either $B\subseteq D$ or $B\subseteq C\setminus D$.  In the first case $B$ contributes nothing to the sum.  In the second case we have 
$|B\cap D'| \leq 1$ (it can only be 1 when $D=\emptyset$ and $D'$ is a singleton set intersecting $B$) so it contributes exactly $z(B)\cdot \floor{|B|/2}$.
Also, since $\Omega\cup\Omega'$ is laminar and $D' \subset C'$, $C'\cap C$ is either $C'$ or $C$, so $|C' \cap C|$ and $|D'|$ are odd. 		
Continuing on,}
&= \sum_{u \in V(C \setminus D')}y(u) + \sum_{\substack{C' \in \Omega':\\D' \subset C'}} z(C')\cdot\frac{|C' \cap C|- |D'|}{2}  + \sum_{\substack{B \in \Omega:\\B\subset (C\setminus D)}} z(B)\cdot  \lfloor\frac{|B|}{2}\rfloor \\ 
&= yz(C) - yz(D').
\end{align*}
Therefore, $yz(C) - yz(D') \geq y_0z_0(C) - y_0z_0(D') - 3{n}(C\setminus D')$.		
When $D = \emptyset$ we have $yz(D') = y(\omega) \geq y_0(\omega)$. Therefore, regardless of $D$, $yz(C) - yz(D) \geq y_0z_0(C) - y_0z_0(D) - 3{n}(C\setminus D)$.		
\end{proof}

\begin{corollary}\label{cor:shellbound}
The number of dual adjustments in $\ShellSearch(C,D)$ is bounded by $O(n(C^{*} \setminus D^{*}))$ 
where $G(C^{*}, D^{*})$ is the current atomic shell when the last dual adjustment is performed.
\end{corollary}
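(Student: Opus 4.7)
The plan is to use Lemma~\ref{lem:shellbound} to cap the total decrease of a suitable potential $\Phi$ during $\ShellSearch$, and to show that each dual adjustment (DA) drops $\Phi$ by at least a positive constant, forcing $T = O(n(C^*_{\text{last}} \setminus D^*_{\text{last}}))$, where $T$ is the number of DAs. The first order of business is to compute, by direct bookkeeping, the per-DA change in $yz(C^*_{\text{last}})$ and $yz(D^*_{\text{last}})$, decomposing each DA into the standard Edmonds step in the current shell $(C^*_t,D^*_t)$ and the translations of $C^*_t$ and (when non-empty) $D^*_t$. Standard accounting---each outer root of the alternating forest contributes $-1$ to $yz(C^*_{\text{last}})$, each inner root $+1$, with $|\Vout|-|\Vin|=|F^*|$---shows the Edmonds step drops $yz(C^*_{\text{last}})$ by $|F^*|$, while each translation lifts it by $+1$. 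Thus $yz(C^*_{\text{last}})$ changes by $-|F^*|+2$ when $D^*_t\neq\emptyset$ and by $-|F^*|+1$ when $D^*_t=\emptyset$. The analogous calculation for $yz(D^*_{\text{last}})$---whose contents are untouched by the Edmonds step since $\Vin\cup\Vout\subseteq C^*_t\setminus D^*_t$ is disjoint from $D^*_{\text{last}}\subseteq D^*_t$---yields a change of $+2$ per DA whenever $D^*_{\text{last}}\subseteq D^*_t$.

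In the case $D^*_{\text{last}}\neq\emptyset$: monotone shrinkage of $D^*_t$ forces $D^*_{\text{last}}\subseteq D^*_t$ at every DA, so $\Phi=yz(C^*_{\text{last}})-yz(D^*_{\text{last}})$ decreases by $|F^*|\geq 2$ per DA. Lemma~\ref{lem:shellbound} applied to $(C^*_{\text{last}},D^*_{\text{last}})$ caps the total drop of $\Phi$ by $3n(C^*_{\text{last}}\setminus D^*_{\text{last}})$, giving $T\leq \tfrac{3}{2}n(C^*_{\text{last}}\setminus D^*_{\text{last}})$.

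In the case $D^*_{\text{last}}=\emptyset$: I would partition the DAs by the value of $D^*_t$, producing a decreasing chain of blossoms $B^{(1)}\supsetneq\cdots\supsetneq B^{(K-1)}\supsetneq B^{(K)}=\emptyset$, and let $T_l$ count the DAs in phase $l$ (i.e.\ with $D^*_t=B^{(l)}$). For $l\leq K-1$, the Case-1 per-DA calculation with $B^{(K-1)}$ replacing $D^*_{\text{last}}$ (valid because $B^{(K-1)}\subseteq B^{(l)}=D^*_t$) shows that $yz(C^*_{\text{last}})-yz(B^{(K-1)})$ drops by $\geq 2$ per DA, and Lemma~\ref{lem:shellbound} applied to $(C^*_{\text{last}},B^{(K-1)})$ gives $T_1+\cdots+T_{K-1}\leq \tfrac{3}{2}n(C^*_{\text{last}})$. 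For phase $K$, each DA decreases $yz(C^*_{\text{last}})$ by $|F^*|-1\geq 1$, and Lemma~\ref{lem:shellbound} in its $D=\emptyset$ form---using $D'=\{\omega\}$ and $y(\omega)\geq y_0(\omega)$ from Lemma~\ref{lem:yvalue}---bounds the total decrease of $yz(C^*_{\text{last}})$ by $3n(C^*_{\text{last}})$, yielding $T_K = O(n(C^*_{\text{last}}))$. Summing gives $T = O(n(C^*_{\text{last}})) = O(n(C^*_{\text{last}}\setminus D^*_{\text{last}}))$.

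The hardest part is the Case-2 bookkeeping: one must verify, via laminarity of $\Omega\cup\Omega'$ and the definition of $yz(S)$, that during earlier phases the translation on $D^*_t\supsetneq B^{(K-1)}$ still contributes exactly $+1$ to $yz(B^{(K-1)})$, entering through the $\sum_{B\supsetneq B^{(K-1)}}$ term with weight $\lfloor|B^{(K-1)}|/2\rfloor$, so that the per-phase bounds compose cleanly at the blossom-dissolution events where $D^*_t$ transitions from one blossom to a strictly smaller one.
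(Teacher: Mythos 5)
Your proposal follows the paper's proof essentially verbatim: the same potentials $yz(C^{*})-yz(D^{*})$ (and $yz(C^{*})$ alone once $D^{*}$ dissolves), the same per-adjustment accounting of $-|F^{*}|$ from the Edmonds step against $+1$ per blossom translation, and the same split into the $D^{*}\neq\emptyset$ and $D^{*}=\emptyset$ periods bounded via Lemma~\ref{lem:shellbound}. The one step you leave implicit is that the potential has not risen above its value under $y_0,z_0$ by the time the relevant dual adjustments begin---the paper's opening paragraph verifies that recursive calls to $\Path$ and adjustments in other shells never increase $yz(C^{*})-yz(D^{*})$---and this is needed before the floor provided by Lemma~\ref{lem:shellbound} can be read as a cap on the total decrease during $\ShellSearch(C,D)$.
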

\begin{proof}
We first claim that the recursive calls to $\Path(R')$ on the descendants $R'$ of $P(R)$ do not decrease $yz(C^{*}) - yz(D^{*})$. If $R' \subset D^{*}$, then any dual adjustments done in $\Path(R')$ changes $yz(C^{*})$ and $yz(D^{*})$ by the same amount. Otherwise, $R' \subset G(C^{*}, D^{*})$. In this case, $\Path(R')$ has no effect on $yz(D^{*})$ and does not increase $yz(C^{*})$ by Lemma \ref{lem:yznoincrease}. Therefore, $yz(C^{*}) - yz(D^{*}) \leq y_0z_0(C^{*}) - y_0z_0(D^{*})$.

First consider the period in the execution of $\ShellSearch(C,D)$ when $D^{*} \neq \emptyset$.
During this period $\ShellSearch$ performs some number of dual adjustments, say $k$.
There must exist at least two free vertices in $G(C^{*},D^{*})$ that participate in all $k$ dual adjustments.  
Note that a unit translation on an old blossom $C''\in\Omega'$, where $D^{*} \subseteq C'' \subseteq C^{*}$, has no net effect on $yz(C^{*}) - yz(D^{*})$, since it increases both $yz(C^{*})$ and $yz(D^{*})$ by 1. 
Thus, each dual adjustment reduces $yz(C^{*}) - yz(D^{*})$ by the number of free vertices in the given shell,
that is, by at least $2k$ over $k$ dual adjustments.  (See the proof of Lemma~\ref{lem:yznoincrease}.)
By Lemma \ref{lem:shellbound}, 
$yz(C^{*})-yz(D^{*})$ decreases by at most $3n(C^{*} \setminus D^{*})$ overall, which implies that $k\le 3/2\cdot n(C^{*} \setminus D^{*})$.

Now consider the period when $D^{*} = \emptyset$.  Let $G(C', D')$ to be the current atomic shell just before the smallest undissolved blossom $D'$ dissolves
and let $k'$ be the number of dual adjustments performed in this period, after $D'$ dissolves.
By Lemma \ref{lem:yznoincrease}, all prior dual adjustments have not increased $yz(C^{*})$.
There exists at least 3 free vertices in $C^*$ that participate in all $k'$ dual adjustments. 
Each translation of $C^*$ increases $yz(C^*)$ by 1.  According to the proof of Lemma \ref{lem:yznoincrease}, 
$yz(C^{*})$ decreases by at least $3k' - k' = 2k'$ due to the $k'$ dual adjustments and translations performed in tandem.  
By Lemma \ref{lem:shellbound}, $yz(C^{*})$ can decrease by at most $3n(C^{*})$, so $k' \le 3/2\cdot n(C^*)$.
The total number of dual adjustments is therefore $k+k' \le 3/2(n(C'\setminus D') + n(C^*)) < 3n(C^*)$.
\end{proof}



The following two lemmas are adapted from~\cite{G85}.  

\begin{lemma}\label{lem:freevertices} 
Let $F$ be the set of free vertices in an undissolved blossom of $P(R)$, at some point in the execution of $\Path(R)$.
For any fixed $\epsilon > 0$, the number of iterations of $\Path(R)$ with $|F| \geq ({n}(R))^{\epsilon}$ is $O(({n}(R))^{1-\epsilon})$.
\end{lemma}

\begin{proof}
Consider an iteration in $\Path(R)$. Let $f$ be the number of free vertices before this iteration. Call an atomic shell {\it big} if it contains strictly more than 2 free vertices. 
We consider two cases depending on whether more than $f/2$ vertices are in big atomic shells or not. Suppose big shells do contain more than $f/2$ free vertices.
The free vertices in an atomic shell will not participate in any dual adjustment only if some adjacent shells have dissolved into it. 
Suppose a shell containing $f'$ free vertices dissolves into (at most 2) adjacent shells and, simultaneously, 
the call to $\ShellSearch$ finds an augmenting path and halts. 
This prevents at most $2f'$ free vertices in the formerly adjacent atomic shells from participating in a dual adjustment, 
since we sorted the shells in non-increasing order by number of free vertices.	
Since there are more than $f/2$ vertices in big atomic shells, 
at least $f/6$ free vertices in the big shells participate in at least one dual adjustment. 
Let $S_i$ be a big even shell with $f_i$ free vertices.  
If they are subject to a dual adjustment then, according to the proof of Lemma \ref{lem:yznoincrease}, 
$yz(R)$ decreases by at least $(f_i - 2)\geq f_i/2$, since the shell is big.  
If $S_i$ is a big {\em odd} shell then the situation is even better.  In this case 
$yz(R)$ is reduced by $(f_i - 1)\geq \frac{2}{3}f_i$.  
Therefore, when $f/2$ free vertices are in big shells, $yz(R)$ decreases by at least $f/12$.

The case when more than $f/2$ free vertices are in small atomic shells can only happen 
$O(\log n)$ times. In this case, there are at least $\lfloor f/4 \rfloor$ small shells. 
In each shell, there must be vertices that were matched during the previous iteration, which implies 
that there must have been at least $f + 2 \lfloor f/4 \rfloor$ free vertices in the previous iteration.  
Thus, we can only be in this situation $\ceil{\log_{3/2} n(R)}$ times, since the number of free vertices shrinks by a 3/2 factor each time.

By Lemma \ref{lem:yznoincrease}, $yz(R)$ does not increase in the calls to $\Path$ on the descendants of $P(R)$. 
By Lemma \ref{lem:shellbound}, since $yz(R)$ decreases by at most $3{n}(R)$, 
the number of iterations with $|F| \geq ({n}(R))^{\epsilon}$ 
is at most $O({n}(R)^{1-\epsilon} +  \log n(R)) = O({n}(R)^{1-\epsilon})$.
\end{proof}

\begin{lemma}
$\Path(R)$ takes at most $O(m(R)(n(R))^{3/4})$ time.
\end{lemma}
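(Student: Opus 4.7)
Let me write $n = n(R)$ and $m = m(R)$ throughout. The total running time of $\Path(R)$ splits into (i) the first-stage \textbf{while} loop, each round of which invokes $\ShellSearch$ on every undissolved atomic shell with at least two free vertices, and (ii) the second stage, which liquidates all remaining blossoms in $P(R)$ and runs a single $\PQSearch(\{\omega\})$ capped at $T$ dual adjustments. Stage (ii) performs $O(n)$ liquidations and one $\PQSearch$ call bounded by $T$ adjustments, costing $O(m + n \log n)$ in total, which is easily absorbed into $O(m n^{3/4})$.

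For the first stage I combine the stated $\ShellSearch$ running time $O(t + m(S^*)\min\{p+1, \sqrt{n(S^*)}\})$ with Corollary~\ref{cor:shellbound}'s bound $t = O(n(S^*))$. In one round of the outer loop the atomic shells $S_i$ are vertex-disjoint subsets of $R$, so $\sum_i m(S_i^*) \leq m$ and $\sum_i n(S_i^*) \leq n$; choosing the $\sqrt{n(S^*)}$ branch of the minimum gives a per-round cost of $O(m\sqrt n)$.

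The plan is to partition rounds at the threshold $|F| = n^{3/4}$ and bound each class separately. By Lemma~\ref{lem:freevertices} applied with $\epsilon = 3/4$, at most $O(n^{1/4})$ rounds have $|F| \ge n^{3/4}$; these ``heavy'' rounds contribute $O(n^{1/4}) \cdot O(m\sqrt n) = O(mn^{3/4})$. For ``light'' rounds (with $|F| < n^{3/4}$) I switch to the $(p+1)$ branch, giving per-call cost $O(m(S_i^*)(p_i + 1))$. Because $|F|$ is monotonically non-increasing across rounds, the total number of augmenting paths discovered in the light phase is at most $n^{3/4}/2$, so the contribution of the $p_i$ term summed over all light-phase shell searches is $O(m n^{3/4})$. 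The ``$+1$'' term contributes $O(m(S_i^*))$ per shell search; I would charge each such call to a structural event---either a blossom dissolution (halting condition~3) or a shell absorption (halting condition~2)---of which there are $O(n)$ in total across $\Path(R)$.

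The main obstacle is making this last charge tight: a direct $O(m)$-per-event accounting gives $O(mn)$ rather than $O(mn^{3/4})$ for the ``$+1$'' cost. The remedy, following Gabow~\cite{G85}, is to exploit the fact that shells are processed in each round in order of non-increasing free-vertex count, so that once a shell search halts via condition~2 or~3 the remaining shell searches of the same round operate on a strictly smaller effective subgraph; the wasted cost therefore telescopes along the major path $P(R)$, yielding the sharper $O(mn^{3/4})$ bound on the aggregate light-round contribution. Combining the heavy-round, light-round, and stage-(ii) contributions gives the claimed $O(m(R)(n(R))^{3/4})$ time bound.
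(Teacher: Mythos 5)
Your overall decomposition is the same as the paper's: split iterations of the first stage at the threshold $|F|=(n(R))^{3/4}$, bound the number of heavy iterations by $O((n(R))^{1/4})$ via Lemma~\ref{lem:freevertices}, pay $O(m(R)\sqrt{n(R)})$ per heavy iteration using the Micali--Vazirani branch and disjointness of the shells, and switch to the $O(m(p+1))$ branch for light iterations. The bucket-scanning overhead of $O(n(R))$ per iteration (Corollary~\ref{cor:shellbound}) and the second stage are handled correctly. The $\sum_i m(S_i^*)p_i$ accounting in the light phase is also fine.

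The gap is in your treatment of the ``$+1$'' terms in the light phase. You try to charge each unproductive $\ShellSearch$ call to a structural event (a dissolution or an absorption), correctly observe that this only yields $O(m(R)\,n(R))$, and then appeal to an unsubstantiated ``telescoping along $P(R)$'' to rescue the bound; that last step is not an argument, and it is not what is needed. The obstacle you identify does not exist: the relevant quantity is not the number of $\ShellSearch$ calls but the number of \emph{iterations} of the while loop. Within a single iteration the shells are vertex-disjoint, so the $+1$ terms contribute only $O(\sum_i m(S_i^*))=O(m(R))$ per iteration. And the paper's key structural fact --- that at the end of each iteration every still-undissolved atomic shell contains at least two vertices newly matched during that iteration --- implies that every iteration (except possibly the last) decreases $|F|$ by at least $2$. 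Hence once $|F|<(n(R))^{3/4}$ there are at most $(n(R))^{3/4}/2+O(1)$ further iterations, and the $+1$ terms total $O(m(R)(n(R))^{3/4})$ with no charging scheme at all. You should replace your final paragraph with this iteration-count bound; as written, the proof of the light-phase contribution is incomplete.
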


\begin{proof}
Recall that $\ShellSearch$ is implemented like $\BucketSearch$, using an array for a priority queue; 
see Section~\ref{sect:ImplementingEdmonds}.
This allows all operations (insert, deletemin, decreasekey)
to be implemented in $O(1)$ time, but incurs an overhead {\em linear} in the number of dual adjustments/buckets scanned.
By Corollary \ref{cor:shellbound} this is $\sum_{i}O(n(S_i)) = O({n}(R))$ per iteration. 
By Lemma \ref{lem:freevertices}, there are at most $O(({n}(R))^{1/4})$ iterations with $|F| \geq ({n}(R))^{3/4}$. Consider one of these iterations. Let $\{S_i\}$ be the shells at the end of the iteration. The augmentation step
takes $\sum_{i}O({m}(S_i)\sqrt{{n}(S_i)})  = O({m}(R)\sqrt{{n}(R)})$ time.
Therefore, the total time of these iterations is $O({m}(R)({n}(R))^{3/4})$. 
There can be at most $({n}(R))^{3/4}$ more iterations afterwards, since each iteration matches at least 2 free vertices.
Therefore, the cost for all subsequent Augmentation steps is $O({m}(R)({n}(R))^{3/4})$.
Finally, the second stage of $\Path(R)$, when there is exactly one free vertex in an undissolved blossom, 
involves a single Edmonds search.  This takes $O({m}(R) + {n}(R) \log {n}(R))$ time~\cite{Gabow16} 
or $O(m(R)\sqrt{\log\log n(R)})$ time w.h.p.; see Section~\ref{sect:ImplementingEdmonds}.
Therefore, the total running time of $\Path(R)$ is $O({m}(R)({n}(R))^{3/4})$.
\end{proof}

Let us summarize what has been proved.  By the inductive hypothesis, all calls to $\Path$ preceding $\Path(R)$ have (i) dissolved all old blossoms in $R$ excluding those in $P(R)$, (ii) kept the $y$-values of all free vertices in $R$ the same parity (odd) and kept $yz(R)$ non-increasing, and (iii) maintained Property \ref{prop:blossom}.  If these preconditions are met, the call to $\Path(R)$ dissolves all remaining old blossoms in $P(R)$ while satisfying (ii) and (iii).
Futhermore, $\Path(R)$ runs in $O(m(R) (n(R))^{3/4} )$ time. This concludes the proof of Lemma~\ref{lem:gabow}.

\section{Implementing Edmonds' Search}\label{sect:ImplementingEdmonds}

This section gives the details of a reasonably efficient implementation of Edmonds' search.
Previous algorithms for real-weighted inputs, such as Galil et al.'s~\cite{GalilMG86} and Gabow's~\cite{Gabow16},
implement specialized priority queues for dealing with blossom formulation/dissolution.  These data structures
do not benefit from having \emph{integer}-valued duals.  Indeed, their per-operation running times are $\Omega(\log n)$ for 
reasons that have nothing to do with the $n\log n$ lower bound on comparison-based sorting.

The implementation
of Edmonds' algorithm presented here was suggested by Gabow~\cite{G85}.  
It uses an ``off the shelf'' priority queue (among other data structures),
and can therefore be sped up when the graph happens to be integer-weighted.
When the duals are integers and the number
of dual adjustments is $t$ it runs in $O(m+t)$ time using a bucket array for the priority queue; this is called \BucketSearch.
When the number of dual adjustments is unbounded we call it 
\PQSearch; it runs in $O(mq)$ time, given a priority queue supporting insert and delete-min in $O(q)$ amortized time.

Let us first walk through a detailed execution of the search for an augmenting path, which
illustrates some of the unusual data structural challenges of implementing Edmonds' algorithm.  
In Figure~\ref{fig:detailed-example} edges are labeled by their initial slacks and blossoms are labeled
by their initial $z$-values; we are performing a search from the set $F=\{u\}$. 
All matched and blossom edges are tight and we are using Criterion~\ref{crit1} (tightness)
for eligibility.
It is convenient to conflate the number of units
of dual adjustment performed by Edmonds' algorithm with {\em time}.  

\begin{figure}[h]
\centering
\begin{tabular}{c@{\hcm[1]}c}
\scalebox{.45}{\includegraphics{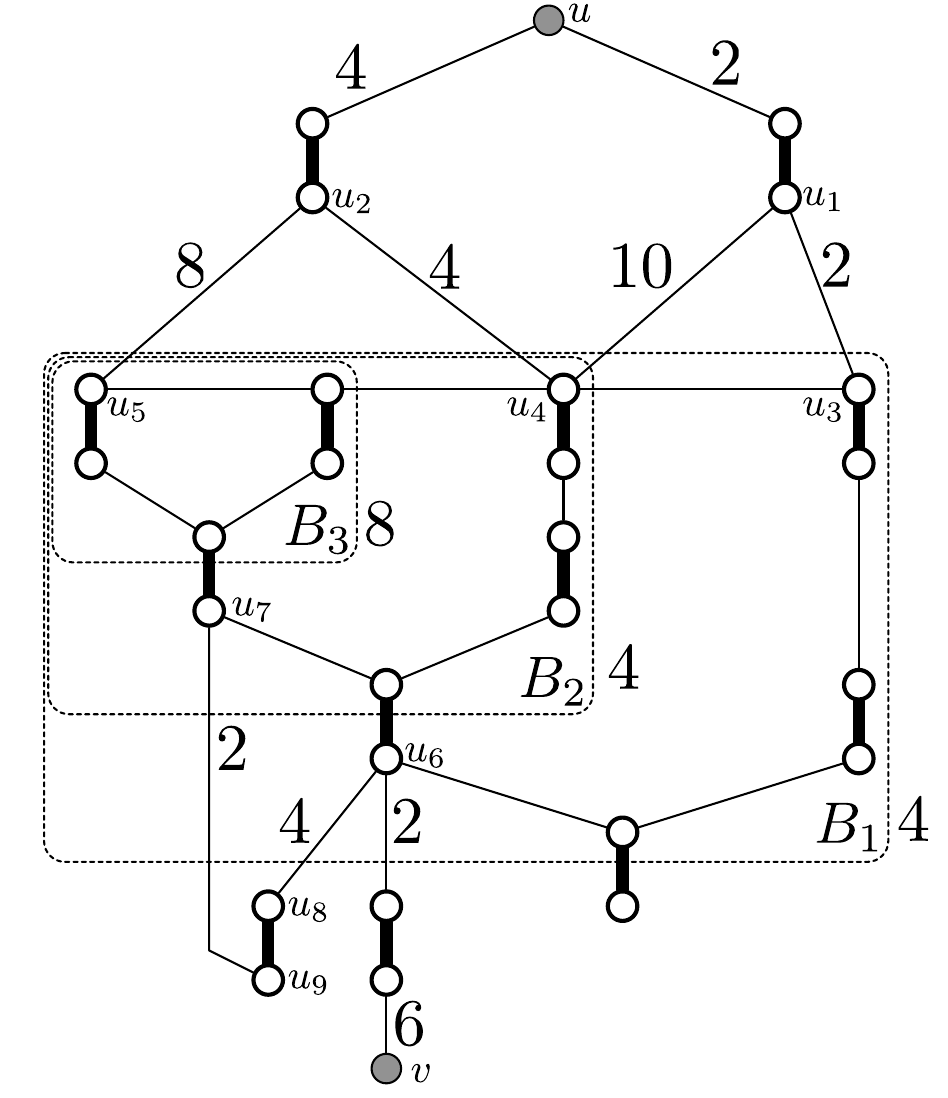}}
&
\scalebox{.45}{\includegraphics{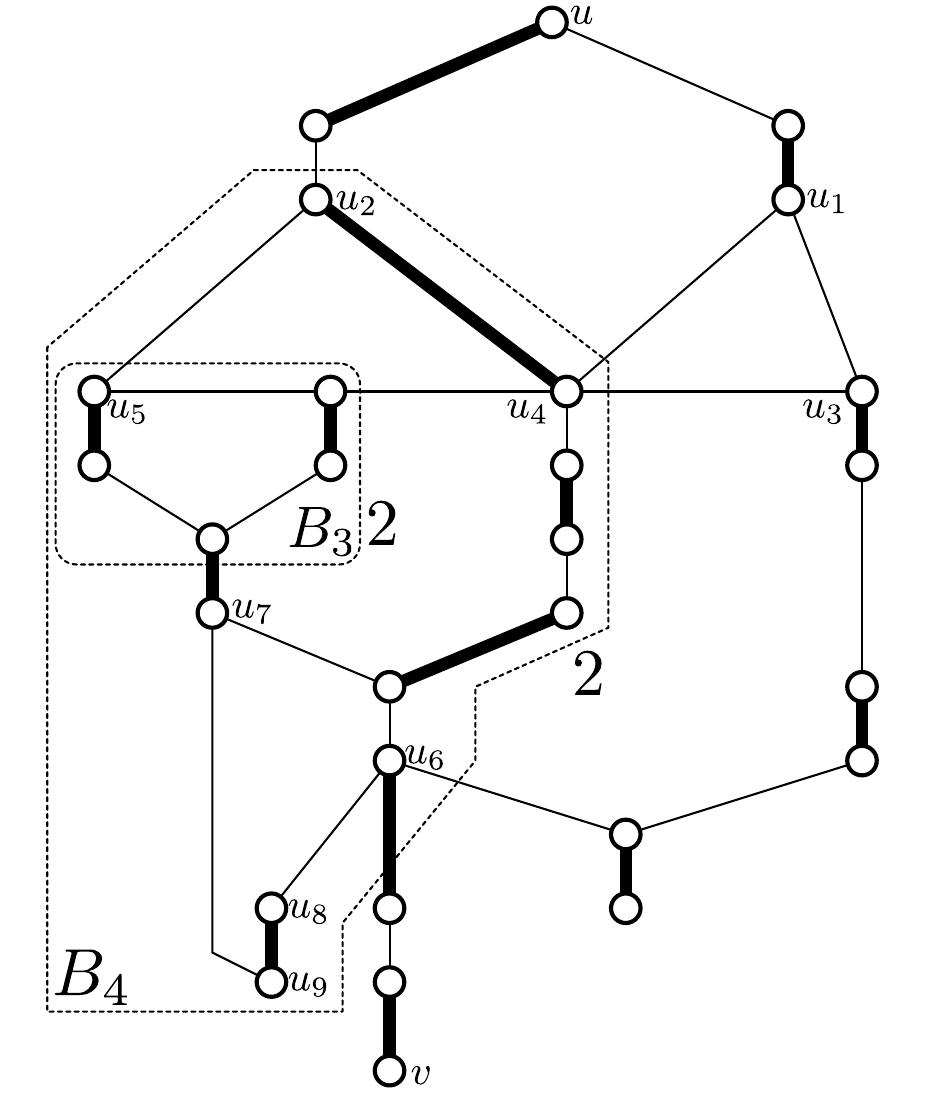}}\\
&\\
\multicolumn{2}{c}{\scalebox{.5}{\includegraphics{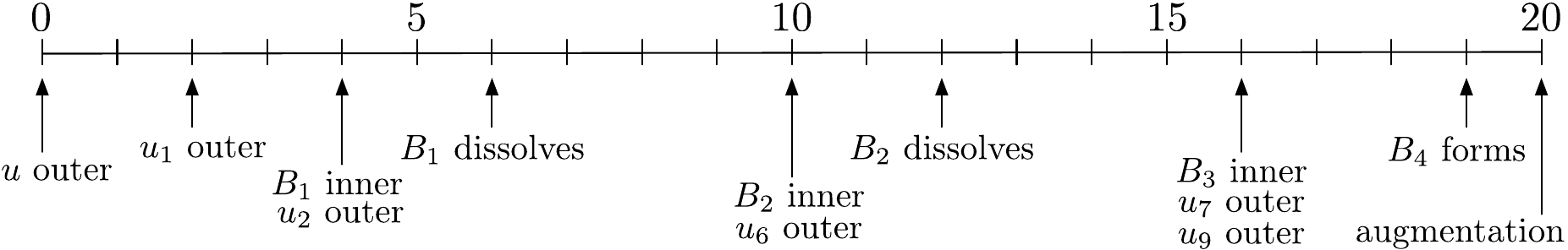}}}
\end{tabular}
\caption{\label{fig:detailed-example}Edges are labeled with their initial slack; blossoms are labeled
with their $z$-values.}
\end{figure}

At time zero $u$ is outer and all other vertices are not in the search structure.  

At time 2 $u_1$ becomes outer and the edges $(u_1,u_3)$ and $(u_1,u_4)$ are scanned.
Both edges connect to blossom $B_1$ and $(u_1,u_3)$ has less slack, but, as we shall see, 
$(u_1,u_4)$ cannot be discarded at this point.  

At time 4 $B_1$ becomes an inner blossom
and $u_2$ becomes outer, causing $(u_2,u_4)$ and $(u_2,u_5)$ to be scanned.
Note that since $u_4$ is inner (as part of $B_1$), further dual adjustments 
will not change the slack on $(u_2,u_4)$ (slack 4) or $(u_1,u_4)$ (now slack 8).
Nonetheless, in the future $u_4$ may not be in the search structure, so we note that the edge
with least slack incident to it is $(u_2,u_4)$ and discard $(u_1,u_4)$.

At time 6 $B_1$ dissolves: the path from $u_3$ to $B_1$'s base enters the search structure
and everything else ($B_2$ and $u_6$) are split off.  At this point further dual adjustments {\em do}
change the slack on edges incident to $B_2$.

At time 10 $(u_2,u_4)$ becomes tight, $B_2$ becomes inner and $u_6$ becomes outer.
At time 12 $B_2$ dissolves.  At this point $u_6$ has experienced 2 dual adjustments as an inner vertex 
(as part of $B_1$) and 2 dual adjustments as an outer vertex, while
$u_7$ has experienced 4 dual adjustments as an inner vertex (as part of $B_1$ and $B_2$).
Thus, the slacks on $(u_6,u_8)$ and $(u_7,u_9)$ are 4 and 6 respectively.

At time 16 $(u_2,u_5)$ and $(u_6,u_8)$ become tight, making 
$B_3$ inner and $u_7$ and $u_9$ outer.
The edge $(u_7,u_9)$ still has slack 6.
At time 19 $(u_7,u_9)$ becomes tight, forming a new blossom $B_4$ based at $u_2$. 
At time 20 the final edge on the augmenting path from $u$ to $v$ becomes tight.

Observe that a vertex can enter into and exit from the search structure an unbounded number
of times.  Merely calculating a vertex's current $y$-value requires that we consider the entire history
of the vertex's involvement in the search structure.  For example, $u_5$ participated as an inner vertex in dual adjustments
during the intervals $[4,6), [10,12),$ and $[16,19)$ and as an outer vertex during $[19,20)$.

\subsection{An Overview of the Data Structures}

In order to implement Edmonds' algorithm efficiently we need to address three data structuring problems:

\begin{enumerate}

\item a {\em union-find} type data structure for maintaining the (growing) outer blossoms.  This data structure is used
to achieve two goals.  First, whenever
an outer-outer edge $e=(u,u')$ is scanned (like $(u_7,u_9)$ in the example), we need to tell whether $u,u'$
are in the same outer blossom, in which case $e$ is ignored, or whether they are in different blossoms, in which 
case we must schedule a blossom formation event after $\slack(e)/2$ further dual adjustments.  Second, 
when forming an outer blossom, we need to traverse its odd cycle in time proportional to its length {\em in the contracted graph}.
E.g., when $(u_7,u_9)$ triggers the formation of $B_4$, we walk up from $u_7$ and $u_9$ to the base $u_2$ enumerating
the vertices/root blossoms encountered.  The walk must ``hop''
from $u_7$ to $u_5$ (representing $B_3$), without spending time proportional to $|B_3|$. 

\item a {\em split-findmin} data structure for maintaining the (dissolving) inner blossoms.  The data structure
must be able to dissolve an inner blossom into the components along its odd cycle. 
It must be able to determine the edge with minimum slack connecting an inner blossom 
to an outer vertex, and to do the same for individual vertices in the blossom.  For example, 
when $(u_2,u_4)$ is scanned we must check whether its slack is better than the 
other edges incident to $u_4$, namely $(u_1,u_4)$.

\item a {\em priority queue} for scheduling three types of events: {\em blossom dissolutions}, {\em blossom formations},
and {\em grow} steps, which add a new (tight) edge and vertex to the search structure.\footnote{Augmenting paths are
discovered in the course of processing a {\em blossom formation} step or {\em grow} step, depending on whether
both ends or just one end of the augmenting path is in $F$.  
In particular, \emph{blossom formation} events record an outer-outer type edge to be processed, whereas a \emph{grow} 
step records an edge to be processed with one endpoint having no inner/outer type.}

\end{enumerate}

Before we get into the implementation details let us first make some remarks on the existing options for (1)--(3).

A standard union-find algorithm will solve (1) in $O(m\alpha(m,n)+n)$ time.  Gabow and Tarjan~\cite{GT85} observed
that a special case of union-find can be solved in $O(m+n)$ time if the data structure gets commitments on future
union operations.  Let $\{\{1\},\{2\},\ldots,\{n\}\}$ be the initial set partition and 
$T = \emptyset$ be an edge set on the vertex set $\{1,\ldots,n\}$.  We must maintain the invariant that $T$ is a single
connected tree at all times.
 The data structure handles intermixed sequences of three operations.
\begin{itemize}
\item[] $\UFaddedge(u,v)$ : $T \leftarrow T\cup\{(u,v)\}$.\\
{\em It is required that $T$ is connected. If $v$ was previously not in $T$, record $u$ as the parent of $v$.}

\item[] $\UFunite(u,v)$ : Replace the sets containing $u$ and $v$ with their union.\\
{\em This is only permitted if $(u,v)\in T$.}

\item[] $\UFfind(u)$ : Find the representative of $u$'s set.
\end{itemize}
It is not too difficult to cast Edmonds' search in this framework.  We explain exactly how in Section~\ref{sect:Edmonds-implementation-details}.

Gabow introduced the split-findmin structure in~\cite{G85} to manage blossom dissolutions in Edmonds' algorithm,
but did not fully specify how it should be applied.  The data structure maintains a set 
$\mathcal{L}$ of lists of elements, each associated with a key.  It supports the following operations.
\begin{itemize}
\item[] $\SFinit(u_1,\ldots,u_n)$ : Set $\mathcal{L} \leftarrow \{(u_1,\ldots,u_n)\}$ and $\key(u_i) \leftarrow \infty$ for all $i$.
\item[] $\SFlist(u)$ : Return a pointer to the list in $\mathcal{L}$ containing $u$.
\item[] $\SFsplit(u)$ : Suppose $\SFlist(u) = (u',\ldots,u,u'',\ldots,u''')$.  Update $\mathcal{L}$ as follows:
				\[
				\mathcal{L} \leftarrow \mathcal{L} \,\backslash\, \{\SFlist(u)\} \cup 
				\{(u',\ldots,u), \; (u'',\ldots,u''')\}.\]
\item[] $\SFdeckey(u,x)$ : Set $\key(u) \leftarrow \min\{\key(u),x\}$.
\item[] $\SFfindmin(L\in\mathcal{L})$ : Return $\min_{u \in L} \key(u)$.
\end{itemize}
The idea is that $\SFinit$ should be called with a permutation of the vertex set such 
that each initial blossom (maximal or not) is contiguous in the list.  Splits are performed
whenever necessary to maintain the invariant that non-outer root blossoms
are identified with lists in $\mathcal{L}$.  The value $\key(u)$ is used to encode the 
minimum slack of any edge $(v,u)$ ($v$ outer) incident to $u$.
We associate other useful information with elements and lists; for example, $u$ stores
a pointer to the edge $(v,u)$ corresponding to $\key(u)$. 

Pettie~\cite{Pet15-sf} improved the running time of Gabow's split-findmin
structure from $O(m\alpha(m,n)+n)$ to $O(m\log\alpha(m,n)+n)$, 
$m$ being the number of $\SFdeckey$ operations.
Thorup~\cite{Tho99} 
showed that with integer keys, split-findmin could be implemented in optimal $O(m+n)$ time using atomic heaps~\cite{FW94}.

For (3) we can use a standard priority queue supporting insert and deletemin.  Note however, that although 
there are ultimately only $O(n)$ events, we may execute $\Theta(n+m)$ priority operations.  The algorithm may schedule
$\Omega(m)$ blossom formation events but, when each is processed, discover that the 
endpoints of the edge in question have already been contracted into the same outer blossom.  
(A decreasekey operation, if it is available, is useful for rescheduling {\em grow} events but cannot directly help with blossom formation
events.) 
Gabow's specialized priority queue~\cite{Gabow16} schedules all blossom formation events in $O(m + n\log n)$ time. 
Unfortunately, the $\Omega(n\log n)$ term in Gabow's data structure cannot be reduced if the edge weights happen
to be small integers.  Let $t_{\max}$ be the maximum number of dual adjustments performed by a search.  
In the $\BucketSearch$ implementation we shall allocate an array of $t_{\max}$ buckets 
to implement the priority queue, bucket $i$ being a linked list of events scheduled for time $i$.
With this implementation all priority queue operations take $O(1)$ time, plus $O(t_{\max})$ for scanning empty buckets.
When $t_{\max}$ is unknown/unbounded we use a general integer priority queue~\cite{Han02,HT02,Thorup07b} 
and call the implementation $\PQSearch$.

In the remainder of this section we explain how to implement Edmonds' search procedure using the data structures 
mentioned above.  This is presumably close to the implementation that Gabow~\cite{G85} had in mind, but it is quite
different from the other $\tilde{O}(m)$ implementations of~\cite{GalilMG86,GGS89,Gabow16}.
Theorem~\ref{thm:Edmonds-search} summarizes the properties of this implementation.

\begin{theorem}\label{thm:Edmonds-search}
The time to perform Edmonds' search procedure 
on an integer-weighted graph, 
using specialized union-find~\cite{GT85},
split-findmin~\cite{Tho99}, and priority queue~\cite{Han02,HT02,Thorup07b} data structures,
is $O(m+t)$ (where $t$ is the number of dual adjustments, using a trivial priority queue)
or $O(m\log\log n)$ (using~\cite{Han02,Thorup07b})
or $O(m\sqrt{\log\log n})$ with high probability (using~\cite{HT02,Thorup07b}).
On real-weighted graphs the time is $O(m+n\log n)$ using~\cite{Gabow16}, 
or $O(m\log n)$ using any $O(\log n)$-time priority queue.
\end{theorem}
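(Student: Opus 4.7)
\medskip\noindent\textbf{Proof proposal for Theorem~\ref{thm:Edmonds-search}.}
The plan is to account separately for (i) the work done by the union--find structure managing outer blossoms, (ii) the work done by the split--findmin structure managing inner (shrunken, non-outer) blossoms, and (iii) the work done by the priority queue scheduling \grow, \blossom{}, and \dissolve{} events; then to observe that the three totals compose linearly in the per-operation costs of the underlying data structures.

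First I would describe the invariants maintained by the data structures. Every non-outer root blossom (including trivial ones) is identified with a contiguous sublist in the split--findmin structure; each element $u$ stores $\key(u)$ encoding the slack of the best outer-to-$u$ edge witnessed so far, offset by the accumulated $y$-adjustments charged to $u$ while outer. Each outer root blossom is a union--find class, and edges in the alternating forest whose concatenation defines the odd cycle of a blossom-to-be are added to the union--find tree $T$ by $\UFaddedge$ at the moment they are scanned, with $\UFunite$ delayed to the \blossom{} event. This matches the Gabow--Tarjan incremental-tree union--find interface~\cite{GT85}, and the key point is that when an outer--outer edge $(u,u')$ is later scanned we test $\UFfind(u) = \UFfind(u')$ in $O(1)$ amortized time to decide whether to enqueue a \blossom{} event. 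Over the entire search, $\UFaddedge/\UFunite/\UFfind$ are each invoked $O(m+n)$ times, so (i) totals $O(m+n)$.

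Next I would bound the split--findmin work. Each edge $(v,u)$ causes at most one $\SFdeckey(u,\cdot)$ each time $v$ becomes outer and $(v,u)$ is scanned; but we only perform this when $v$'s outer arrival makes $(v,u)$'s current slack potentially improve $\key(u)$. A charging argument shows the total number of $\SFdeckey$ calls is $O(m)$: each edge $(v,u)$ is scanned at most a constant number of times because after scanning it is either entered into the alternating forest or recorded at the inner endpoint, and an inner vertex's recorded edge is only re-scanned when its blossom dissolves and it is re-grown, at which time we charge to the dissolution. The number of $\SFsplit$ calls is $O(n)$ because each split corresponds to a blossom-dissolution event, and the total size of lists created is $O(n)$. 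The number of $\SFfindmin$ calls is $O(m+n)$, one per \grow{} event scheduling and per blossom-incident slack query. With Thorup's atomic-heap split--findmin~\cite{Tho99} this totals $O(m+n)$, giving (ii) the bound $O(m+n)$; with the comparison-based structure of~\cite{Pet15-sf} on real weights it becomes $O(m\log n)$ via a coarser bound subsumed by the priority queue.

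The main obstacle, and the most delicate part, is verifying that $\key(u)$ correctly encodes the current slack of the best outer-to-$u$ edge across repeated transitions of $u$ through inner/outer/unattached states, which is exactly the subtlety illustrated by the $u_5,u_6,u_7$ example. I would handle this by maintaining for each $u$ an offset equal to $\now - (\text{time $u$'s latest ancestor became inner})$ plus a persistent offset accumulated during the spells when $u$ was outer; the split--findmin key stored is slack-plus-offset, so $\SFdeckey$ remains monotone and $\SFfindmin$ returns correct slacks up to a global additive correction computable in $O(1)$ from the list's root data. Crucially, when an inner root blossom $B$ dissolves, the split on its underlying list restores each sub-blossom to a state with correct offsets: the accumulated dual adjustment charged to $B$ as an inner blossom is absorbed into each child's offset in $O(1)$ per child (paying only for the children surfaced by this particular dissolution), which amortizes to $O(n)$ overall. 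This is the only place the argument is genuinely non-routine.

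Finally I would combine (i)--(iii). The priority queue receives $O(m+n)$ insertions (one per scheduled \grow{}/\blossom{}/\dissolve{} event) and $O(m+n)$ delete-mins. Using an array of $t_{\max}$ buckets when the total dual-adjustment count is $t$, all priority operations cost $O(1)$ each, with an additive $O(t)$ for scanning, giving $O(m+t)$ and hence $\BucketSearch$. Using the deterministic integer priority queue of Han~\cite{Han02,Thorup07} gives $O(\log\log n)$ per operation and total $O(m\log\log n)$; using Han--Thorup~\cite{HT02,Thorup07} gives $O(\sqrt{\log\log n})$ per operation w.h.p.~and total $O(m\sqrt{\log\log n})$. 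On real-weighted graphs, any comparison-based $O(\log n)$ priority queue plugs in, and we replace Thorup's split--findmin with Pettie's, yielding $O(m\log n)$. This establishes all four clauses of Theorem~\ref{thm:Edmonds-search}.
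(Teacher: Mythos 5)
Your proposal is correct and follows essentially the same route as the paper's Section~\ref{sect:ImplementingEdmonds}: the same three-way decomposition into Gabow--Tarjan union--find for outer blossoms, split--findmin with lazily offset keys for non-outer blossoms (your per-list offsets are exactly the paper's $\Delta(B)$, $t_{\Inner}(B)$, $t_{\Outer}(B)$ bookkeeping), and an $O(m+n)$ bound on priority-queue operations whose per-operation cost $q$ yields each of the four claimed running times.
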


\subsection{Implementation Details}\label{sect:Edmonds-implementation-details}

We explicitly maintain the following quantities, for each $v$ and each blossom $B$.  
A vertex $B$ not in any blossom is considered a root blossom, trivially.
\begin{align*}
t_{\now} 	&= \mbox{The current {\em time}. (The number of dual adjustments performed so far.)}\\
y_0(v) 	&= \mbox{The initial value of $y(v)$.}\\
z_0(B)	&= \mbox{The initial value of $z(B)$ ($B$ not necessarily a root blossom).}\\
t_{\root}(B) &= \mbox{The time $B$ became a root blossom.}\\
t_{\Inner}(B)	&= \mbox{The time $B$ became an inner root blossom.}\\
t_{\Outer}(B)	&= \mbox{The time $B$ became (part of) an outer root blossom.}\\
\Delta(B)	&= \mbox{The number of dual adjustments experienced by vertices in $B$ as inner vertices,}\\
		&\;\;\;\;\;\; \mbox{ in the interval $[0,\max\{t_{\root}(B), t_{\Outer}(B)\}]$.}
\intertext{It is straightforward to keep these values up to date.  To give a sense of what is involved, we illustrate how they change in two cases:
when an inner blossom dissolves and when an outer blossom is formed.
Whenever an inner blossom $B'$ is dissolved we visit each subblossom $B$ on its odd-cycle and 
set }
t_{\root}(B) &\leftarrow t_{\now}\\
\Delta(B) &\leftarrow \Delta(B') + (t_{\now} - t_{\Inner}(B'))
\intertext{and if $B$ is immediately inserted into the search structure as an inner or outer blossom 
we set $t_{\Inner}(B)\leftarrow t_{\now}$ or $t_{\Outer}(B)\leftarrow t_{\now}$ accordingly.  
When an outer blossom $B'$ is created we visit each subblossom $B$ on its odd-cycle.
For each formerly inner $B$, we update its values as follows}
t_{\Outer}(B) &\leftarrow t_{\now}\\
\Delta(B) &\leftarrow \Delta(B) + (t_{\now}-t_{\Inner}(B))
\intertext{From these quantities we can calculate the current $y$- and $z$-values as follows.
Remember that $\SFsplit$s are performed so that $\SFlist(v)=B$ was the last root blossom containing $v$
just before $v$ became outer, or the current root blossom containing $v$ if it is non-outer.}
y(v) &= \zero{y_0(v)}\;\;\;\;\;\;\;\;\;
		+ \left\{\begin{array}{l@{\hcm[.3]}l}
			\zero{\Delta(B) + t_{\now} - t_{\Inner}(B)}			& \mbox{ if $B = \SFlist(v)$ is an inner root blossom}\\
			 \zero{\Delta(B) - (t_{\now} - t_{\Outer}(B))}			& \mbox{ if $B = \SFlist(v)$ is in an outer blossom}\\
			 \zero{\Delta(B)}\hcm[5]							& \mbox{ otherwise}
		\end{array}\right.\\
z(B) &= \zero{z_0(B)}\;\;\;\;\;\;\;\;\; + \left\{\begin{array}{l@{\hcm[.3]}l}
				\zero{0}								& \mbox{ if $t_{\root}(B)$ is undefined}\\
				\zero{- 2\Delta(B) 	- 2(t_{\now} - t_{\Inner}(B))} & \mbox{ if $B$ is an inner root blossom}\\
				\zero{- 2\Delta(B) + 2(t_{\now} - t_{\Outer}(B))}\hcm[5]	& \mbox{ if $B$ is in an outer blossom}
		\end{array}\right.
\intertext{Note that if $B$ was not a weighted blossom at time zero, $z_0(B)=\Delta(B)=0$.
The slack of an edge $(u,v)$ not in any blossom is calculated as $\slack(u,v) = y(u) + y(v) - w(u,v)$.
However, when using Criteria~\ref{crit2} or \ref{crit3} of eligibility we really want to measure the distance
from the edge being {\em eligible}.  Define $\slack^\star(u,v)$ as follows.}
\slack^\star(e) &= \left\{
\begin{array}{ll}
\slack(e)				& \mbox{ Criterion~\ref{crit1}}\\
\slack(e)	 + 2			& \mbox{ Criterion~\ref{crit2} and $e\not\in M$}\\
-\slack(e)				& \mbox{ Criterion~\ref{crit2} and $e\in M$}\\
\slack(e)				& \mbox{ Criterion~\ref{crit3} and $\slack(e) \ge 0$}\\
\slack(e)    + 2			& \mbox{ Criterion~\ref{crit3} and $\slack(e) \in \{-1,-2\}$}
\end{array}
\right.
\end{align*}

Dual adjustments can change the slack of many edges, but we can only afford to update the split-findmin structure
when edges are scanned.  We maintain the invariant that if $u$ is not in an outer blossom, 
$\key(u)$ is equal to $\min_{\operatorname{outer} \; v} \slack^\star(v,u)$, up to some offset that is common to all 
vertices in $\SFlist(u)$.  Consider an edge $(v,u)$ with $v$ outer and $u$ non-outer.   When $u$ is not in the search structure
each dual adjustment reduces the slack on $(v,u)$ whereas when $u$ is inner each dual adjustment has no effect.
We maintain the following invariant for each non-outer element $u$ in the split-findmin structure.

\begin{align*}
\min_{\operatorname{outer} \; v} \slack^\star(v,u) &= 
\left\{
	\begin{array}{l@{\hcm[1]}l}
		\key(u) - (t_{\Inner}(B) - \Delta(B))	& \mbox{ if $B=\SFlist(u)$ is inner}\\
		\key(u) - (t_{\now} - \Delta(B))			& \mbox{ if $B=\SFlist(u)$ is neither inner nor outer}
	\end{array}
\right.
\end{align*}

Let $F$ be the set of free vertices that we are conducting the search from.  In accordance with our earlier
assumptions we assume that $\{y_0(v) \;|\; v\in F\}$ have the same parity and that all edge weights are even.
We will grow a forest $T$ of $|F|$ trees, each rooted at an $F$-vertex, such that the outer blossoms form
connected subtrees of $T$, thereby allowing us to apply the union-find algorithm~\cite{GT85} to each tree.
Let $\rt(u)$ be the free vertex at the root of $u$'s tree.
We initialize the split-findmin structure to reflect the structure of initial blossoms
at time $t_{\now} = 0$ and call $\grow(v,\bottom)$ for each $v\in F$.  In general, we iteratively process
any events scheduled for $t_{\now}$, incrementing $t_{\now}$ when there are no such events.  
Eventually an augmenting path will be discovered (during the course of processing a {\em grow} or {\em blossom formation} event)
or the priority queue becomes empty, in which case we conclude
that there are no augmenting paths from any vertices in $F$.

\paragraph{The $\grow(v,e)$ procedure.} 
The first argument ($v$) is a new vertex to be added to the search structure.  The second argument $e=(u,v)$
is an edge with $\slack^\star(e)=0$ connecting $v$ to an existing $u$ in the search structure, or $\bottom$ if $v$ is free.
If $e\neq \bottom$ we begin by calling $\UFaddedge(e)$.

We first consider the case when $e\in M$ or $e=\bottom$, so $v$ is designated {\em outer}.  If $v$ is not contained
in any blossom we call $\schedule(v)$ to schedule {\em grow} and {\em blossom formation} events for all unmatched 
edges incident to $v$.  If $B=\SFlist(v)$ is a non-trivial (outer) blossom we call $\UFaddedge(v',v)$ and $\UFunite(v',v)$,
for each $v'\in B\backslash\{v\}$, then call $\schedule(v')$ for each $v'\in B$.  (Recall that in order to apply~\cite{GT85},
the members of every outer blossom must form a contiguous subset of $T$.)

Suppose $e=(u,v)\not\in M$ and that $v$ is not contained in any blossom.
If $v$ is free then we have found an augmenting path and are done.
Otherwise we call $\schedule(v)$ to schedule the grow step for $v$'s matched edge.\footnote{Under Criterion~\ref{crit1} this would always happen immediately, but under the other Criteria it could happen after 0, 1, or 2 dual adjustments.}
If $B=\SFlist(v)$ is a non-trivial (inner) blossom, find the base $b$ of $B$ and the even-length path $P$ from $v$ to $b$ in 
$E_B$, in $O(|P|)$ time.\footnote{The data structures involved in generating even-length paths through blossoms and finding the current base are well understood.  See Gabow~\cite{Gabow76}, for example.}
For each edge $e\in P$ call $\UFaddedge(e)$ in order to include $P$ in $T$.\footnote{The idea here is to include the minimal 
portion of $E_B$ necessary to ensure connectivity.  The rest of $B$ cannot be included in $T$ yet because parts of it may break
off when $B$ is dissolved.}  If $b$ is free then we have found an augmenting path; if not
then we call $\schedule(b)$ to schedule $B$'s blossom dissolution event and the grow event
for $b$'s matched edge.

\paragraph{The $\schedule(u)$ procedure.}
The purpose of this procedure is to schedule future events associated with $u$ or edges incident to $u$.  
First consider the case when $u$ is inner.  If $B=\SFlist(u)$ is a non-trivial blossom we schedule a $\dissolve(B)$ event at
time $t_{\now} + z_0(B)/2$.  Let $e=(u,v)\in M$ be the matched edge incident to $u$. 
If $v$ is neither inner nor outer then schedule a $\grow(v,e)$ event at time $t_{\now} + \slack^\star(e)$.
If $v$ is currently inner we cancel the existing event for $\grow(u,e)$ and schedule a 
$\blossom(e)$ event at time 
$t_{\now} + \slack^\star(e)/2$.\footnote{Recall that all $y$-values of $F$-nodes have the same parity,
and that any nodes reachable from an $F$-node by a path of eligible edges also have the same parity.
If $u$ and $v$ have both been reached, then $\slack(u,v)$ and $\slack^\star(u,v)$ are both even, since $y(u) + y(v)$ and $w(u,v)$ are both even.}

When $u$ is outer we perform the following steps for each unmatched edge $e=(u,v)\in E(G)$.
If $\UFfind(u)=\UFfind(v)$ then $(u,v)$ can be discarded.
If $v$ is also outer and $\UFfind(u)\neq\UFfind(v)$ 
then schedule a $\blossom(e)$ event at time $t_{\now} + \slack^\star(e)/2$.
If $v$ is inner let $(u',v)$ be the existing edge with $u'$ outer minimizing $\slack^\star(u',v)$.
If $\slack^\star(e) < \slack^\star(u',v)$ then we perform a $\SFdeckey(v,x)$ operation with the new key $x$
corresponding to $\slack^\star(e)$.  If $B=\SFlist(v)$ is neither inner nor outer and updating $\key(v)$
causes $\SFfindmin(B)$ to change, we cancel the existing grow event associated with $B$
and schedule $\grow(v,e)$ for time $t_{\now} + \slack^\star(e)$.

\paragraph{The $\dissolve(B)$ procedure.}
Let $P = T\cap E_B$ be the even-length alternating path from some $v\in B$ to the  base $b$ of $B$. 
For each subblossom $B'$ on $B$'s odd cycle we call $\SFsplit(u')$ on the last
vertex $u'\in B$, thereby splitting $B$ into its constituents.  The subblossoms $B'$ are of three kinds: they  
either 
(i) intersect $P$ as inner vertices/blossoms, 
(ii) intersect $P$ as outer vertices/blossoms,
or
(iii) do not intersect $P$.
If $B'$ is of type (i) then subsequent dual adjustments will reduce $z(B')$.  We schedule a $\dissolve(B')$
event for time $t_{\now} + z_0(B')/2$.  If $B'$ is type (ii) let $b'$ be its base.  Every vertex $v' \in B'$ is now 
outer.  For each $v'\in B'\backslash\{b'\}$ we call $\UFaddedge(v',b'), \UFunite(v',b')$ and
for each $v'\in B'$ we call $\schedule(v')$ to schedule events for unmatched edges incident to $v'$.
When $B'$ is type (iii) we call $\SFfindmin(B')$ to determine the unmatched edge $e=(u,v)$ ($u$ outer, $v\in B'$)
minimizing $\slack^\star(u,v)$.  We schedule a $\grow(v,e)$ event at time $t_{\now} + \slack^\star(u,v)$.

\paragraph{The $\blossom(u,v)$ procedure.}
When a $\blossom(u,v)$ event occurs either $\slack^\star(u,v)=0$
or $u$ and $v$ have already been contracted into a common outer blossom.
If $\UFfind(u)=\UFfind(v)$ we are done.
If $\rt(u)\neq \rt(v)$ then we have discovered an eligible augmenting path from $\rt(u)$ to $\rt(v)$ via $(u,v)$. 
If $\rt(u)=\rt(v)$ then a new blossom $B$ must be formed.  The base of $b$ will be the least common ancestor
of $u$ and $v$ in $T$.  We walk from $u$ up to $b$ and from $v$ up to $b$, making sure that all members
of $B$ are in the same set defined by $\UFunite$ operations.  Here we must be more specific about which vertex in
a blossom is the ``representative'' returned by $\UFfind(\cdot)$.  The representative of a blossom is its most ancestral node
in $T$. For outer blossoms this is always the base; for inner blossoms this is the vertex $v$ in the call
to $\grow(v,(u,v))$ that caused $v$'s blossom to become inner.   

Let $u'$ be the current vertex under consideration
on the path from $u$ to $b$.   If $u'$ is outer, in a non-trivial blossom, but not the base of the blossom,
set $u' \leftarrow \UFfind(u')$ to be the base of the blossom and continue.  Suppose $u'$ is the base of an outer
blossom and $v'$ is its (inner) parent.  Call $\UFunite(u',v')$; set $u'\leftarrow v'$ and continue.
If $u'$ is inner, not in any blossom, and $v'$ is its parent, call $\UFunite(u',v')$; set $u'\leftarrow v'$ and continue.
Suppose $u'$ is in a non-trivial inner blossom $B'=\SFlist(u')$.
Let $P'=E_{B'} \cap T$ be the (possibly empty) path from $u'$ to the representative $v'$ of $B'$ and let $u''$ be the parent of $v'$.
Call $\UFunite(e)$ for each $e\in P'$ then, for each $v'' \in B' \backslash V(P')$, call $\UFaddedge(v'',v')$ and $\UFunite(v'',v')$. 
Call $\UFunite(v',u'')$; set $u'\leftarrow u''$ and continue.  The same procedure is repeated on the path from $v$ up to $b$.
Note that the time required to construct $B$ is linear in the number of $B$-vertices that make the transition from inner to outer.
Thus, the {\em total} time for forming all outer blossoms is $O(n)$.

For each $v'\in B$ that was not already outer before the formation of $B$, call $\schedule(v')$ to schedule events
for unmatched edges incident to $v'$.

\subsection{Postprocessing}

Once a single augmenting path is found we explicitly record all $y$- and $z$-values, in $O(n)$ time.
At this moment the (relaxed) complementary slackness invariants (Property~\ref{prop:CS} or~\ref{prop:RCS})
are satisfied, except possibly the {\em Active Blossom} invariant.  Any blossoms that
were formed at the same time that the first augmenting path was discovered will have zero $z$-values.
Also, a non-root blossom with zero $z$-value may become a root blossom just as the first augmenting path is found.
Thus, we must dissolve root blossoms with zero $z$-values as long as they exist.

\section{Conclusion}\label{sect:conclusion}

We have presented a new scaling algorithm for \MWPM{} on general graphs that runs in $O(m\sqrt{n}\log(nN))$ time.
This algorithm improves slightly on the running time of the Gabow-Tarjan algorithm~\cite{GT91}.  However, its analysis
is simpler than~\cite{GT91} and is generally more accessible.  
Historically there were two barriers to computing weighted matching in less than $O(m\sqrt{n}\log(nN))$ time.
The first barrier was that the best cardinality matching algorithms took $O(m\sqrt{n})$ time~\cite{Vazirani12,Vazirani14,GT91,Gabow17}, 
and cardinality matching seems easier than a single scale of weighted matching.
The second barrier was that even on {\em bipartite} graphs, where blossoms are not an issue, the best matching algorithms 
took $O(m\sqrt{n}\log(nN))$ time~\cite{GT89,OrlinA92,GoldbergK97,DuanS12}.
Recent work by Cohen, \Madry, Sankowski, and Vladu~\cite{CohenMSV17} has broken the second barrier on sufficiently sparse graphs.
They showed that several problems, including weighted bipartite matching, can be computed in $\tilde{O}(m^{10/7}\log N)$ time.

We highlight several problems left open by this work.
\begin{itemize}
\item The \Liquidationist{} \MWPM{} algorithm is relatively simple and streamlined, 
and among the scaling algorithms for \MWPM{} so-far proposed~\cite{G85,GT91}, the one with the clearest potential for practical impact.
However, on sparse graphs it is theoretically an $O(\sqrt{\log\log n})$ factor slower than the \Hybrid{} algorithm.  Can the efficiency of \Hybrid{} be matched by an algorithm that is as simple as \Liquidationist?

\item There is now some evidence that the maximum weight (not necessarily perfect) matching problem~\cite{DuanS12,Pettie12,HuangK12,KaoLST01} may be slightly easier than \MWPM.  Is it possible to compute a maximum weight matching
of a general graph in $O(m\sqrt{n}\log N)$ time, matching the bound of Duan and Su~\cite{DuanS12} for bipartite graphs?

\item The implementation of Edmonds' algorithm described in Section~\ref{sect:ImplementingEdmonds} uses an (integer)
priority queue supporting insert and delete-min, but does not take advantage of fast decrease-keys.  
Given an integer priority queue supporting $O(1)$ time decrease-key and $O(q)$ time insert and delete-min, is it possible 
to implement Edmonds' search in $O(m + nq)$ time, matching the bound for a Hungarian search~\cite{FT87,Tho03} on a bipartite graph?
\end{itemize}

\paragraph{Acknowledgement.} We would like to thank the two anonymous reviewers, whose careful reading lead to numerous improvements to the presentation.


\end{document}